\def\section{\@startsection{section}{1}%
	\z@{.7\linespacing\@plus\linespacing}{.5\linespacing}%
	{\bfseries
		\centering
}}
\def\@secnumfont{\bfseries}
\newtheorem{theorem}{Theorem}[section]
\newtheorem{corollary}[theorem]{Corollary}
\newtheorem{definition}[theorem]{Definition}
\newtheorem{lemma}[theorem]{Lemma}
\newtheorem{proposition}[theorem]{Proposition}
\newtheorem{remark}[theorem]{Remark}
\numberwithin{equation}{section}
\colorlet{blu1}{blue!70!black}
\colorlet{blu2}{blue!50!black}
\colorlet{blu3}{blue!70!red}
\colorlet{blu4}{blue!60!green}
\colorlet{red1}{red!80}
\colorlet{red2}{red!50!black}
\colorlet{red3}{red!70!yellow}
\colorlet{red4}{red!50!yellow}
\colorlet{yel1}{yellow!50!black}
\colorlet{yel3}{yellow!20!blue}
\colorlet{gre1}{green!60!blue}
\colorlet{gre2}{green!60!black}
\colorlet{gre3}{green!40!black}
\title{Quantum Mechanics of Arc-Sine and Semi-Circle Distributions: A Unified Approach}
\date{}
\author{Luigi Accardi}
\address{Luigi Accardi: Centro Vito Volterra \\ Universit\`a di Roma "Tor Vergata"\\ Via Columbia 2, 00133  Roma \\ Italy}
\email{accardi@volterra.uniroma2.it}
\author{Tarek Hamdi}
\address{Tarek Hamdi: Department of Management Information Systems \\ College of Business and Economics \\ Qassim University  \\ Saudi Arabia
and Laboratoire d'Analyse Math\'ematiques et Applications LR11ES11 \\ Universit\'e de Tunis El-Manar \\ Tunisie}
\email{ t.hamdi@qu.edu.sa }
\author{Yun Gang Lu}
\address{Yun Gang Lu: Dipartimento di Matematica, Universit\`a di Bari, via Orabona 4, 70125 Bari,  Italy }
\email{yungang.lu@uniba.it}
\keywords{ Quantum Theory; Orthogonal polynomials; Semi--circle distribution; Arc--sine distribution;  Momentum operator; Kinetic Energy Operators; Evolution; Hilbert transform; Neumann series; Bessel functions}
\begin{document}
		
\maketitle

\begin{abstract}
This paper continues the program of applying beyond physics the technique of
\textbf{probabilistic quantization} and extending to the quantum mechanics
associated with the arc--sine distributions our previous results on the semi--circle distribution.
We derive analytical expressions for the momentum and kinetic energy operators using the arc--sine weighted Hilbert transform and express corresponding evolutions as Neumann series of Bessel functions. These series are applicable in various physical problems and in solving certain mixed difference equations and differential equations. 
Moreover, exploiting the similarity between the Jacobi sequences of the semi-circle and arc-sine measures, we establish a unified formulation of their quantum mechanics. We introduce the semicircle and arc--sine exponential vectors and the
corresponding coherent states and prove that, for both measures, the vacuum distributions of
the number operator in these states (arc--sine photon statistics) 
are a \textit{perturbation} of the geometric distribution (Gibbs states in Boson physics: see
the Introduction below for a discussion of the physical meaning of this perturbation).
The $*$--Lie algebra generated by canonical creation and annihilation operators
 of both probability measures is isomorphic to the $*$--Lie algebra generated by all rank-one operators in corresponding $L^2$-spaces.
The paper concludes with appendices that discuss the integral and Neumann series representations
of the $1$--parameter unitary groups generated by momentum in semi-circle and arc-sine cases.
\end{abstract}

\tableofcontents

\section{Introduction}
 The quantum decomposition of a classical random variable with all moments has revealed that quantization is not exclusive to physics, but rather a universal phenomenon in probability theory in the sense that every random variable with all moments canonically defines
its own quantum mechanics. The usual Boson quantization being associated with
the Gaussian distribution and the Fermi with the Bernoulli distribution (see the recent survey \cite{[AcLu20-Nigeria21]}). 
Probabilistic quantization has opened up new possibilities for modeling natural phenomena both
in physics and in many fields outside physics.
As canonical objects in mathematics often correlate with intriguing physical phenomena, studying the quantum mechanics (QM) associated with the simplest and most natural classes of probability measures significantly enhances our ability to construct models that appear highly nonlinear from the perspective of usual QM.\\
In this context, it is natural to examine the quantum mechanics canonically associated to the most important probability measures. Our previous work \cite{[AcHamLu22a],[AcHamLu22b]} outlined the structure of quantum mechanics associated with the semi-circle distribution
 (free quantum mechanics). In this paper, we extend these findings to the standard arc--sine distribution  (monotone quantum mechanics).
We derive analytical expressions for its momentum $P$ and kinetic energy $P^2/2$ operators using
the arc--sine weighted Hilbert transform and express corresponding evolutions in terms of
Neumann series of Bessel functions of the first kind.\\
These series have applications in various physical problems, such as describing internal gravity waves in a Boussinesq fluid and studying the propagation properties of diffracted light beams. A brief overview of these applications can be found in \cite{PS09}.
This suggests that monotone quantization might play a role in the
quantization of these phenomena of classical physics.
 Moreover, these series have proven useful for solving classes of mixed difference equations (see \cite[p. 530]{Watson}) and differential equations like the one-dimensional Schr\"{o}dinger equation, Sturm-Louiville equation, and perturbed Bessel equation (see \cite{KV}). \\

\noindent  This paper is organized as follows:
Section 2 provides a brief review of the canonical quantum decomposition of arc--sine and semi--circle random variables.
Section 3 revisits the orthogonal polynomials associated with these  probability distributions and their related Hilbert transforms.
Section 4 presents a unified formulation of semi--circle and arc--sine quantum mechanics and derives explicit expressions for  the monotone position and momentum groups $e^{itX}$ and $e^{itP}$.
Section 5 describes monotone evolutions generated by momentum and kinetic energy operators,
along with the  simplest case of a monotone harmonic oscillator.
Section 6 introduces exponential vectors associated with semi--circle and arc--sine measures and calculates the distributions of the number operator in their corresponding coherent states.

We find that, in the semi--circle case, this distribution is geometric, while in the
arc--sine case, it differs from the geometric only in the first and in the normalization term.
In the boson case, the geometric distribution describes equilibrium states.
So, in the latter case, the vacuum distribution of the number operator can be seen as a
non--equilibrium state in which the vector $\Phi_{0}$ plays a special role.
This suggests that arc--sine coherent states may play a role in those situations in which
there is a vector state playing a special role (as it happens in the Bose--Einstein condensation,
where this state is the ground state).
\\
The final section proves that for both measures, the Lie algebra generated by canonical creation and annihilation operators is equivalent to the $*$--algebra generated by all rank-one operators in corresponding $L^2$-spaces. The paper concludes with two appendices: Appendix A computes vacuum distributions of position operators in semi--circle and arc--sine cases using the \textit{quantum moments technique} developed in \cite{[AcLu16d-q-mom-prob]}
for symmetric classical random variables. In the case considered here, both results are implicit
in the quantum decomposition of the semi--circle and arc--sine random variables (see Sections
\ref{sec:The-arc--sine-case} and \ref{sec:The-sem-c-case}). However the technique used
here is applicable to the whole semi--circle--arc--sine class which, as shown in
\cite{[AccBarRha17Info-Compl]}, is characterized by having information complexity index equal to
$(0,K)$ for all $K\in\mathbb{N}$. For $k\le 3$, the distributions in this class are known to be the
semi--circle, the arc--sine and the Konno distributions. But, for $k\ge 4$, the distributions, hence
the classical moments, are not known and this technique allows to calculate them.
(This statement will be proved elsewhere).
 \\
Appendix B discusses  the integral and Neumann series representations of
 the momentum groups in both cases.

\section{Canonical quantum decomposition of the arc--sine and semi--circle random variables}
\label{Can-q-dec-arc-sin-semi-circl-RV}

In this section we recall some known facts, widely used in the following (see
\cite{[AcLu20-Nigeria21]} for details).\\
Let $\omega:=\left\{\omega_{n}\right\}_{n\geq1}$ be a principal Jacobi sequence (i.e.
$\omega_{n}\ge 0$ and $\omega_{n}=0 \Rightarrow \omega_{n+k}=0 \ , \ \forall k\in\mathbb{N}$).
Recall that the $1$--Mode Interacting Fock Space ($1$MIFS) associated to $\omega$ is
the Hilbert space defined by
\begin{equation}\label{df-Gamma-omega}
\Gamma_{\omega}(\mathbb{C})
:=\bigoplus_{n\in\mathbb{N}}\mathbb{C}\cdot\Phi_{n}
\end{equation}
where $\bigoplus_{n\in\mathbb{N}}$ denotes closed orthogonal sum and the orthogonal sequence
$(\Phi_{n})_{n\in\mathbb{N}}$ uniquely defines the scalar product in $\Gamma_{\omega}$ through the
prescription
\begin{equation}
\|\Phi_{n}\|^2 := \omega_{n}!= \prod_{j=1}^{n}\omega_{n}
\end{equation}
where, here and in the following, we use the convention
\begin{equation}\label{omega0=0}
\omega_{0}=0  \qquad;\qquad 0! := 1
\end{equation}
The unique linear extension of the map
\begin{equation}
a^{+}\Phi_{n} := \Phi_{n+1} \quad,\quad \forall n\in\mathbb{N}
\end{equation}
to the linear span of the $(\Phi_{n})_{n\in\mathbb{N}}$  is called \textit{creator} and the
properties of the principal Jacobi sequences imply that, on this domain $a^{+}$ has an adjoint
denoted $a$ (sometimes also $a^{-}$) and called \textit{annihilator}.
By Favard's Lemma, for any gradation preserving operator $a^{0}$, there exists a probability measure $\mu$ on $\mathbb{R}$ with all moments such that the map
$$
\Phi_{n} = a^{+n}\Phi_{0}\in\Gamma_{\omega}(\mathbb{C}) \mapsto  \Phi_{\mu;n}
\in L^2_{\mathbb{C}}(\mathbb{R}, \hbox{Borel}, \mu) =: L^2(\mathbb{R},\mu)
$$
where $\Phi_{\mu;n}$ is the $n$--th \textbf{monic} orthogonal polynomial of $\mu$, extends to an
isometry $U_{\mu}$ satisfying
\begin{equation}\label{q-dec-X}
U_{\mu}(a^{+}+a^{0}+a)U_{\mu}^* = X\colon  L^2(\mathbb{R},\mu)\to  L^2(\mathbb{R},\mu)
\end{equation}
where $X$ is the multiplication operator
$$
(Xf)(x) := xf(x) \quad,\quad x\in \mathbb{R} \ , \ f\in L^2(\mathbb{R},\mu)
\ , \ \int_{\mathbb{R}}x^2f(x)\mu(dx)<+\infty
$$
\begin{remark}
Notice that if $\mu$ is an arc-sine or semicircle measure, then the operator $U_{\mu}$ is a unitary isomorphism. This is because, for such measures, the associated orthogonal polynomials $(\Phi_{\mu;n})_{n \ge 0}$ form a \textbf{complete orthogonal system} in the Hilbert space $L^2([-2,2], \mu)$ (see, e.g., \cite{[MaHa03]}).

\end{remark}

In the following \textbf{we identify} $\Gamma_{\omega}(\mathbb{C}) $ with its image under $U_{\mu}$,
i.e. the closure of the linear span of $(\Phi_{\mu;n})_{n\in\mathbb{N}}$ in $L^2(\mathbb{R},\mu)$,
still denoted $\Gamma_{\omega}(\mathbb{C}) $ (sometimes also denoted $L^2_{pol}(\mathbb{R},\mu)$). With this identification, $\Phi_{n}$ is identified with $\Phi_{\mu;n}$, in particular the vacuum
vector $\Phi_{0}$ is identified with $\Phi_{\mu;0}$, which is the constant function identically
equal to $1$. Moreover \eqref{q-dec-X} becomes
\begin{equation}\label{q-dec-X-id}
a^{+}+a^{0}+a  = X\colon  L^2(\mathbb{R},\mu)\to  L^2(\mathbb{R},\mu)
\end{equation}
and is called the \textbf{quantum decomposition of $X$}. \\
$X$ is a self--adjoint operator on $L^2(\mathbb{R},\mu)$ and $\mu$ is its vacuum spectral measure.
With respect to this measure $X$ can be identified (up to stochastic equivalence) to a real valued random variable with all moments. Conversely, any real valued random variable $X$ with all moments and with probability distribution $\mu$ uniquely determines the $1$MIFS
$\Gamma_{\omega}(\mathbb{C})$, where $(\omega_{n})$ is the principal Jacobi sequence of $X$,
which is related to $L^2(\mathbb{R},\mu)$ in the way described above.
The operators $a^{+},a,a^{0}$ are called the CAP (creation, annihilation, preservation) operators
associated to $\mu$. $\mu$ is moment symmetric (i.e. all odd moment vanish) iff $a^{0}=0$. In this
case, one has
\begin{equation}\label{arcSin01a}
\omega_{1}=\left\langle\Phi_{0}, \left(a+a^{+}\right)^{2}\Phi_{0}\right\rangle
	=\left\langle\Phi_{0}, aa^{+}\Phi_{0}\right\rangle
= \hbox{ variance of $X$}
\end{equation}
The \textbf{number operator} associated to the gradation \eqref{df-Gamma-omega}
is characterized by the property
\begin{equation}\label{df-Lambda-om}
\Lambda\Phi_{n}:= n\Phi_{n}  \quad,\quad\forall n\in\mathbb{N}
\end{equation}

\subsection{The arc--sine case}\label{sec:The-arc--sine-case}

If $\omega_{1}=2\omega_{n}>0$ for any $n\geq2$, the operator $X=a+a^{+}$ has the
\textbf{arc--sine} distribution on the interval
$\left( -\sqrt{2\omega_{1}},\sqrt{2\omega_{1}}\right)$ (see Appendix \ref{sec:Vac-distr-arcsin}), where we recall that for any $c>0,$ the arc--sine
distribution on the interval  $\left(-c,c\right)  $ is
absolutely--continuous with density function
$$x\longmapsto\frac{1}
{\pi\sqrt{c^{2}-x^{2}}}\chi_{\left(-c,c\right)}\left(x\right)$$

 This distribution has clearly mean zero and the variance
$$
\int_{-c}^{c}\frac{x^{2}}{\pi\sqrt{c^{2}-x^{2}}}dx=\frac{2c^{2}}{\pi}\int
_{0}^{1}\frac{y^{2}}{\sqrt{1-y^{2}}}dy=\frac{2c^{2}}{\pi}\int_{0}^{\frac{\pi
}{2}}\sin^{2}\theta d\theta=\frac{c^{2}}{2}%
$$
So the vacuum distribution of $X=a+a^{+}$ is given by the density function
\begin{equation}
x\longmapsto\frac{1}{\pi\sqrt{2\omega_{1}-x^{2}}}\chi_{\left(-\sqrt
{2\omega_{1}},\sqrt{2\omega_{1}}\right)}\left(x\right)  \label{arcSin01b}%
\end{equation}
In particular, if  $\omega_{n}=1$ for any $n\ge 2$ (or equivalently, $\omega_1=2$), the vacuum distribution of $X=a+a^{+}$ is denoted
\begin{equation*}
	\mu_{as}(dx)=\frac{1}{\pi\sqrt{4-x^{2}}}\chi_{\left(-2,2\right)}\left(x\right)dx
\end{equation*}

\subsection{The semi--circle case}\label{sec:The-sem-c-case}
If $\omega_{n}=\omega_{1}$ for any $n\in\mathbb{N}^{\ast}$, the operator $X=a+a^{+}$ has the
\textbf{semi--circle} distribution on the interval $\left( -2\sqrt{\omega_{1}},2\sqrt{\omega_{1}}\right)$ (see Appendix \ref{sec:Vac-distr-arcsin}), where we recall that for any $c>0,$ the semi--circle
distribution on the interval $\left(-c,c\right)$ is absolutely--continuous
with the density function
$$x\longmapsto\frac{2}{\pi c^{2}}\sqrt{c^{2}-x^{2}%
}\chi_{\left(-c,c\right)}\left(x\right)
$$
 This distribution has clearly mean zero and variance
\begin{align*}
\int_{-c}^{c}\frac{x^{2}}{\pi c^{2}}\sqrt{c^{2}-x^{2}}dx  & =\frac{2c^{2}}%
{\pi}\int_{0}^{1}y^{2}\sqrt{1-y^{2}}dy=\frac{2c^{2}}{\pi}\int_{0}^{\frac{\pi
}{2}}\sin^{2}\theta\cos^{2}\theta d\theta\\
& =\frac{2c^{2}}{\pi}\left(\frac{\pi}{4}-\frac{3}{2^{2}\cdot2!}\cdot
\frac{\pi}{2}\right)  =\frac{c^{2}}{4}%
\end{align*}
So the vacuum distribution of $X=a+a^{+}$ is given by the density function
\begin{equation}
x\longmapsto\frac{1}{2\pi\omega_{1}}\sqrt{4\omega_{1}-x^{2}}\chi_{\left(
-2\sqrt{\omega_{1}},2\sqrt{\omega_{1}}\right)}\left(x\right)
\label{arcSin01c}%
\end{equation}
In particular, if  $\omega_{n}=1$ for any $n\in\mathbb{N}^{\ast}$, the vacuum distribution of $X=a+a^{+}$ is denoted
\begin{equation*}
\mu_{sc}(dx)=\frac{1}{2\pi}\sqrt{4-x^{2}}\chi_{\left(
	-2,2\right)}\left(x\right) dx.
\end{equation*}

\section{Chebyshev polynomials and associated Hilbert transforms}

In this section, we recall the orthogonal polynomials associated to the arc--sine and semi--circle distributions and the relationship between their associated Hilbert transforms.

\subsection{Chebyshev polynomials}

Let $(T_n)_{n\in\mathbb{N}}$ and $(\Phi_n)_{n\in\mathbb{N}}$ denote the monic Chebyshev polynomials of first and second kind associated respectively to  $\mu_{as}$ and $\mu_{sc}$ and given by (see, e.g., \cite{[AcHamLu22a]}):
\begin{equation}\label{df-Tn}
	T_n(x)=\begin{cases}
		2\cos\big(n\arccos{(x/2)}\big),& {\rm if\ } n\ge1\\ 1,& {\rm if\ } n=0
	\end{cases}{\,,\qquad \forall x\in [-2,2]}
\end{equation}
and
\begin{equation*}
	\Phi_n(x)=\frac{\sin\big((n+1)\arccos{(x/2)} \big)}{\sin\big(\arccos{(x/2)}\big)}
	{\,,\qquad \forall x\in [-2,2]}
\end{equation*}

The following relation between these polynomials is known, but we will provide a direct proof for the reader's convenience later.
\begin{lemma}\label{Chebyshev-relation}
For any $n\in\mathbb{N}$ and $x\in[-2,2]$, we have
\begin{align}\label{1s2dCheby00}
-(4-x^2)\Phi_n(x)=\begin{cases}
T_{n+2}(x)-T_n(x), &{\rm if\ } n\ge1\\
T_2(x)-2T_0(x), &{\rm if\ } n=0     \end{cases}
\end{align}

\begin{align}\label{1s2dCheby03}
(x^2-4)T_n(x)=\begin{cases}T_{n+2}(x)-2T_n (x)+2T_{n-2}(x), &\text{ if }n=2\\
T_{n+2}(x)-2T_n (x)+T_{n-2}(x),&\text{ if } n>2
\end{cases}
\end{align}
and
\begin{align}\label{connection2}
T_{n+1}(x)=\Phi_{n+1}(x)-\Phi_{n-1}(x),\quad\forall n\ge1
\end{align}
\end{lemma}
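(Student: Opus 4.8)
The plan is to reduce everything to elementary trigonometric identities via the substitution $x=2\cos\theta$, $\theta\in[0,\pi]$, which maps $[0,\pi]$ onto $[-2,2]$ and under which $4-x^2=4\sin^2\theta$, while by \eqref{df-Tn} one has $T_n(x)=2\cos(n\theta)$ for $n\ge1$, $T_0(x)=1$, and $\Phi_n(x)=\sin\big((n+1)\theta\big)/\sin\theta$. Since all three asserted relations are identities between polynomials, it is enough to check them for $\theta\in(0,\pi)$ and then invoke continuity (or simply the fact that agreement on an infinite set forces equality of polynomials).

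First I would prove \eqref{connection2}: for $n\ge1$, the sum-to-product formula gives
\[
\Phi_{n+1}(x)-\Phi_{n-1}(x)=\frac{\sin\big((n+2)\theta\big)-\sin(n\theta)}{\sin\theta}=\frac{2\cos\big((n+1)\theta\big)\sin\theta}{\sin\theta}=2\cos\big((n+1)\theta\big)=T_{n+1}(x).
\]
Next, for \eqref{1s2dCheby00} with $n\ge1$, using $2\sin A\sin B=\cos(A-B)-\cos(A+B)$,
\[
-(4-x^2)\Phi_n(x)=-4\sin\theta\,\sin\big((n+1)\theta\big)=2\cos\big((n+2)\theta\big)-2\cos(n\theta)=T_{n+2}(x)-T_n(x),
\]
whereas for $n=0$ one has $-(4-x^2)\Phi_0(x)=-4\sin^2\theta=2\cos(2\theta)-2=T_2(x)-2T_0(x)$.

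The remaining identity \eqref{1s2dCheby03} I would deduce algebraically from the first two rather than redo any trigonometry: writing \eqref{1s2dCheby00} as $(x^2-4)\Phi_n(x)=T_{n+2}(x)-T_n(x)$ for $n\ge1$ and $(x^2-4)\Phi_0(x)=T_2(x)-2T_0(x)$, and using \eqref{connection2} in the form $T_n=\Phi_n-\Phi_{n-2}$ for $n\ge2$, one gets for $n>2$
\[
(x^2-4)T_n=(x^2-4)\Phi_n-(x^2-4)\Phi_{n-2}=(T_{n+2}-T_n)-(T_n-T_{n-2})=T_{n+2}-2T_n+T_{n-2},
\]
and for $n=2$, where the subtracted term involves the exceptional index $0$,
\[
(x^2-4)T_2=(T_4-T_2)-(T_2-2T_0)=T_4-2T_2+2T_0.
\]

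There is no genuine obstacle here; the computation is routine. The only point demanding attention is the bookkeeping around the normalization $T_0=1$ (rather than $2\cos 0=2$): this is precisely what produces the coefficient $2$ in front of $T_0$ in the $n=0$ case of \eqref{1s2dCheby00} and in the $n=2$ case of \eqref{1s2dCheby03}, and neglecting it is the one way the argument could go wrong. One could equally verify \eqref{1s2dCheby03} directly from $(x^2-4)T_n(x)=-8\sin^2\theta\cos(n\theta)$ by inserting $\sin^2\theta=\tfrac12(1-\cos2\theta)$ and $2\cos2\theta\cos n\theta=\cos\big((n+2)\theta\big)+\cos\big((n-2)\theta\big)$; this makes the same boundary term at $n=2$ visible directly.
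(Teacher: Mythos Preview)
Your proof is correct and follows essentially the same trigonometric route as the paper, via the substitution $x=2\cos\theta$. The only minor difference is that the paper verifies \eqref{1s2dCheby03} directly by expanding $2\cos((n+2)\theta)-4\cos(n\theta)+2\cos((n-2)\theta)$, whereas you deduce it algebraically from \eqref{1s2dCheby00} and \eqref{connection2}; your derivation has the advantage of making transparent why the anomalous coefficient $2$ in front of $T_0$ at $n=2$ is forced by the anomalous $n=0$ case of \eqref{1s2dCheby00}.
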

\begin{proof}
Since
$$\cos((n+2)\theta)=\cos((n+1)\theta+\theta)=\cos((n+1)\theta)\cos(\theta)- \sin((n+1)\theta)\sin(\theta)
$$
\begin{align*}
\cos(n\theta)&=\cos((n+1)\theta-\theta)=\cos((n+1)\theta)\cos(\theta)+\sin((n+1)\theta)\sin(\theta)
\end{align*}
subtracting, one gets
\begin{align}\label{1s2dCheby01}
\cos\big((n+2)\theta\big)-	\cos\big(n\theta\big)=-2\sin(\theta)	\sin\big((n+1)\theta\big)
\end{align}
For any $x\in[-2,2]$, by denoting $\theta:=\arccos(x/2)$, one gets
\begin{align}\label{1s2dCheby02} \sin^2(\arccos(x/2))=\sin^2\theta=1-\cos^2\theta
=1-{x^2\over4}
\end{align}
So, recalling that $\Phi_0=T_0=1$, the equality \eqref{1s2dCheby00} for $n=0$ is in fact
the following trivial equality:
$$
-4\sin^2\theta=2\cos(2\theta)-2
$$
Finally, for any $x\in[-2,2]$ and $n\ge1$,
\begin{align*}
T_{n+2}(x)=&2\cos((n+2)\arccos(x/2))\\
\overset{\eqref {1s2dCheby01}}=&2\cos(n\arccos(x/2)) -4\sin(\arccos(x/2)) \sin((n+1)\arccos(x/2))\\
=&T_{n}(x)-4\sin^2(\arccos(x/2)){\sin ((n+1)\arccos(x/2))\over\sin(\arccos(x/2))}\\
\overset{\eqref{1s2dCheby02}}=&T_{n}(x)-(4-x^2)
\Phi_n(x)
\end{align*}

Now we turn to prove the formula \eqref{1s2dCheby03}.
The notation $\theta:=\arccos(x/2)$ and the definition of $T_n$'s guarantee that
the expression in the right hand side of the equality in \eqref{1s2dCheby03} rewrites
\begin{align*}
&2\cos((n+2)\theta)-4\cos(n\theta)+2\cos((n-2)\theta)\\
&=2\Big(\cos(n\theta)\cos(2\theta)-\sin(n\theta) \sin(2\theta)-2\cos(n\theta)+\cos(n\theta)\cos(2\theta)+ \sin(n\theta)\sin(2\theta)\Big)\\
&=4\cos(n\theta)\Big(\cos(2\theta)-1\Big)=4\cos(n\theta)\Big(\cos^2(\theta)-\sin^2(\theta)-\cos^2(\theta)-\sin^2(\theta)\Big)\\
&=-8\sin^2(\theta)\cos(n\theta)
\end{align*}
which is nothing else than the expression in the left hand side of the equality in \eqref{1s2dCheby03} thanks to the formula \eqref{1s2dCheby02}.

Finally,  \eqref{connection2} is proved as follows: for any $n\ge1$,
\begin{align*}
&\Phi_{n+1}(x)-\Phi_{n-1}(x)={1\over \sin\theta}\Big( \sin((n+2)\theta)-\sin(n\theta)\Big)\\
&={1\over \sin\theta}\Big(\sin(n\theta)\cos(2\theta)+
\sin(2\theta)\cos(n\theta) -\sin(n\theta)\Big)\\
&={1\over \sin\theta} \Big(\sin(n\theta) \big(\cos^2(\theta)-\sin^2(\theta)\big)+
2\sin(\theta)\cos(\theta)\cos(n\theta) -\sin(n\theta) \big(\cos^2(\theta)+\sin^2(\theta)\big)\Big)\\
&=2\Big(\cos(\theta)\cos(n\theta)-\sin(\theta)\sin(n\theta)\Big)=2\cos((n+1)\theta)=T_{n+1}(x)
\end{align*}
%

\end{proof}

\subsection{$\mu_{as}$-Hilbert transform and $\mu_{sc}$-Hilbert transform}

Denote
\begin{align}\label{Hilbert-transforms}
H_{as} f(x)&:= \hbox{p.v.}\int_{-2}^2\frac{f(y)}{x-y} \mu_{as}(dy)=\hbox{p.v.}\int_{-2}^2\frac{f(y)}{\pi\sqrt{4-y^2}(x-y)} dy	 \notag\\
H_{sc} f(x)&:=2\hbox{p.v.}\int_{-2}^2\frac{f(y)}{x-y} \mu_{sc}(dy)=\hbox{p.v.}\int_{-2}^2
\frac{f(y)\sqrt{4-y^2}}{\pi(x-y)} dy
\end{align}
the Hilbert transforms with respect to the arc--sine and semi--circle distributions respectively on  $L^2([-2,2],\mu_{as})$ and $L^2([-2,2],\mu_{sc})$.
According to  \cite[Corollary 2.3]{[APS96]}, $H_{sc}$ is  surjective
on $L^2([-2,2],{\mu_{sc}})$ with a dense image, whereas $H_{as}$ is bounded-below injective operator on $L^2([-2,2],{\mu_{as}})$. \\
Moreover, according to \cite{[AcHamLu22a]}, $H_{sc}$ is  skew-adjoint (bounded) operator on $L^2([-2,2],{\mu_{sc}})$. 
We shall prove in this paragraph that for the Hilbert transform $H_{as}$, all polynomials $(T_n)_{n \ge 0}$ are in its domain. Furthermore, up to multiplication by a bounded support function, the operator $H_{as}$ is bounded. Additionally, a rank-one perturbation of the latter, results in a skew-adjoint operator on $L^2([-2,2], \mu_{as})$.
Notice that, since the Radon--Nikodym derivative
$$
\frac{d\mu_{sc}}{d\mu_{as}}(x) = \frac{1}{2}(4-x^2)
$$
is bounded, $L^2([-2,2],\mu_{as})\subset L^2([-2,2],\mu_{sc})$, where the inclusion is understood 
in the set--theoretical sense.
\begin{lemma}\label{Hilbert-Chebychev01}
For any $n\in\mathbb{N}$,
\begin{equation}\label{Hilbert-Chebychev}
	H_{sc} \Phi_n=T_{n+1} \ , \qquad  	H_{as} T_{n}=-\Phi_{n-1}
\end{equation}
with the convention $\Phi_{-1}=0$.  Moreover, on $L^2([-2,2],\mu_{as})$
\begin{align}\label{norm01}
\Vert T_j\Vert_2^2=\begin{cases}
1,&j=0\\  2,&j\ge1
\end{cases}
\end{align}
and on $L^2([-2,2],\mu_{sc})$
\begin{align}\label{norm02}
\Vert \Phi_j\Vert_2^2=1
\end{align}
\end{lemma}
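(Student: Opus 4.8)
The plan is to reduce all four assertions to classical trigonometric integrals through the substitution $x=2\cos\phi$, $y=2\cos\theta$ with $\phi,\theta\in(0,\pi)$. Under it $\sqrt{4-y^2}=2\sin\theta$, $T_n(2\cos\theta)=2\cos(n\theta)$ for $n\ge1$, $\Phi_n(2\cos\theta)=\sin((n+1)\theta)/\sin\theta$, and $\mu_{as}$, $\mu_{sc}$ become constant multiples of $d\theta$ and $\sin^2\theta\,d\theta$ on $(0,\pi)$. The only nontrivial input is the classical Glauert principal-value integral
\begin{equation}\label{Glauert}
\hbox{p.v.}\int_0^\pi\frac{\cos(n\theta)}{\cos\theta-\cos\phi}\,d\theta=\pi\,\frac{\sin(n\phi)}{\sin\phi},\qquad n\ge0,\ \phi\in(0,\pi),
\end{equation}
which we would establish first, either by citation or by a short self-contained argument: the case $n=0$ is the known vanishing principal value $\hbox{p.v.}\int_0^\pi(\cos\theta-\cos\phi)^{-1}\,d\theta=0$; the case $n=1$ follows by writing $\cos\theta=(\cos\theta-\cos\phi)+\cos\phi$; and the identity $\cos((n+1)\theta)+\cos((n-1)\theta)=2\cos\theta\cos(n\theta)$ shows that both sides of \eqref{Glauert} obey the recursion $I_{n+1}=2\cos\phi\,I_n-I_{n-1}$ with matching initial data.

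For $H_{as}T_n=-\Phi_{n-1}$, applying the substitution to the defining formula \eqref{Hilbert-transforms} collapses $H_{as}T_n(2\cos\phi)$ (for $n\ge1$) to $\tfrac1\pi\,\hbox{p.v.}\int_0^\pi\frac{\cos(n\theta)}{\cos\phi-\cos\theta}\,d\theta$, and \eqref{Glauert} (the sign flip absorbing $\cos\phi-\cos\theta$ versus $\cos\theta-\cos\phi$) gives $-\sin(n\phi)/\sin\phi=-\Phi_{n-1}(2\cos\phi)$; the case $n=0$ is the vanishing $n=0$ integral, consistent with the convention $\Phi_{-1}=0$. For $H_{sc}\Phi_n=T_{n+1}$, the weight $\sqrt{4-y^2}$, the Jacobian and the denominator of $\Phi_n$ combine so that $H_{sc}\Phi_n(2\cos\phi)$ becomes $\tfrac1\pi\,\hbox{p.v.}\int_0^\pi\frac{\cos(n\theta)-\cos((n+2)\theta)}{\cos\phi-\cos\theta}\,d\theta$ after the product-to-sum identity $2\sin((n+1)\theta)\sin\theta=\cos(n\theta)-\cos((n+2)\theta)$; two applications of \eqref{Glauert} then give $(\sin((n+2)\phi)-\sin(n\phi))/\sin\phi$, which the sum-to-product identity $\sin((n+2)\phi)-\sin(n\phi)=2\cos((n+1)\phi)\sin\phi$ reduces to $2\cos((n+1)\phi)=T_{n+1}(2\cos\phi)$, valid for all $n\ge0$. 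Since each output is a polynomial, it lies in the relevant $L^2$ of the bounded interval, which is exactly the assertion that the $T_n$ (resp.\ $\Phi_n$) belong to the domain of $H_{as}$ (resp.\ $H_{sc}$); we would also note that the truncated integrals do converge, the integrands differing from $(\cos\theta-\cos\phi)^{-1}$ only by a bounded smooth factor, so the sole singularity is the simple pole at $\theta=\phi$.

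The norm identities \eqref{norm01} and \eqref{norm02} then follow from the same change of variables, with no use of \eqref{Glauert}: $\|T_j\|_{L^2(\mu_{as})}^2=\tfrac1\pi\int_0^\pi T_j(2\cos\theta)^2\,d\theta$ equals $1$ for $j=0$ and $\tfrac4\pi\int_0^\pi\cos^2(j\theta)\,d\theta=2$ for $j\ge1$, while $\|\Phi_j\|_{L^2(\mu_{sc})}^2=\tfrac2\pi\int_0^\pi\Phi_j(2\cos\theta)^2\sin^2\theta\,d\theta=\tfrac2\pi\int_0^\pi\sin^2((j+1)\theta)\,d\theta=1$ for every $j\ge0$. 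The one genuine obstacle is the bookkeeping around the principal value --- splitting \eqref{Glauert} off as a difference of two convergent p.v.\ integrals and interchanging the substitution with the p.v.\ limit --- but this is routine given the smoothness just noted, so the classical finite Hilbert transform theory applies verbatim. As an alternative to \eqref{Glauert} one could instead deduce $H_{sc}\Phi_n=T_{n+1}$ from $H_{as}T_n=-\Phi_{n-1}$ via the Radon--Nikodym relation $d\mu_{sc}/d\mu_{as}=\tfrac12(4-x^2)$ together with the Chebyshev identities of Lemma~\ref{Chebyshev-relation}, but the direct trigonometric route seems cleanest.
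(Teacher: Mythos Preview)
Your proof is correct and follows essentially the same route as the paper: both reduce everything via $x=2\cos\phi$, $y=2\cos\theta$ to the classical principal-value integral $\hbox{p.v.}\int_0^\pi\frac{\cos(n\theta)}{\cos\theta-\cos\phi}\,d\theta=\pi\sin(n\phi)/\sin\phi$ and its $\sin\cdot\sin$ companion, and the norm computations are identical. The only difference is that the paper simply cites both integrals from Tricomi, whereas you give a self-contained induction for the Glauert formula and then derive the $\sin((n{+}1)\theta)\sin\theta$ integral from it via product-to-sum; this makes your argument marginally more self-contained but is not a different method.
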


\begin{proof}
According to \cite[p.180]{[Tricomi57]}, we have for any $n\in\mathbb{N}$,
\begin{equation*}
\hbox{p.v.}	\int_{0}^{\pi}\frac{\sin((n+1) \theta) \sin( \theta)}  {\cos\theta-\cos\varphi}d\theta=-\pi\cos((n+1) \varphi)
\end{equation*}
and
\begin{equation*}
	\hbox{p.v.}\int_{0}^{\pi}\frac{\cos(n \theta)}{\cos\theta-\cos\varphi}d\theta=\pi\frac{\sin(n \varphi)}{\sin( \varphi)}
\end{equation*}
Performing the variables change $\theta\mapsto \arccos \frac{y}{2}$ and $\varphi\mapsto \arccos \frac{x}{2}$, one gets
\begin{equation*}
\hbox{p.v.}	\int_{-2}^{2}\frac{\sin\big((n+1) \arccos \frac{y}{2}\big)}{y-x}dy=-\pi\cos\big((n+1) \arccos \frac{x}{2}\big)
\end{equation*}
and
\begin{equation*}
	\hbox{p.v.}\int_{-2}^{2}\frac{(1-\frac{y^2}{4})^{-\frac{1}{2}}\cos(n \arccos \frac{y}{2})}{y-x}dy=\pi\frac{\sin(n \arccos \frac{x}{2})}{\sin( \arccos \frac{x}{2})}
\end{equation*}
Then, using the definitions of $H_{as}$ and $H_{sc}$ from \eqref{Hilbert-transforms} one finds \eqref{Hilbert-Chebychev}.
Next, thanks to the fact $\sin(n\pi)=0$ for all $n\in{\mathbb Z}$ and the formula of integration by parts, one gets, for any $n\ge1$,
\begin{align*}
\int_0^\pi\cos^2(n\theta)d\theta={-1\over n} \sin(n\theta)\cos(n\theta)\Big\vert_0^\pi+ \int_0^\pi\sin^2(n\theta)d\theta=\int_0^\pi\sin^2(n\theta)d\theta
\end{align*}
and so
\begin{align}\label{norm01a}
\int_0^\pi\cos^2(n\theta)d\theta
=\int_0^\pi\sin^2(n\theta)d\theta ={\pi\over2},\quad \forall n\ge1
\end{align}

On $L^2([-2,2],\mu_{as})$ (and $L^2([-2,2],\mu_{sc})$ respectively), it is clear that
$$\Vert T_n\Vert_2^2=\int_{-2}^2 T_n^2(x){1\over\pi\sqrt{4-x^2} }dx;\quad
\Vert \Phi_n\Vert_2^2=\int_{-2}^2 \Phi_n^2(x){\sqrt{4-x^2}\over2\pi }dx$$
and so one gets trivially $\Vert T_0\Vert_2^2=\Vert \Phi_0\Vert_2^2=1$. For any $n\ge1$,
\begin{align*}
\Vert T_n\Vert_2^2&=4\int_{-2}^2 \cos^2(n\arccos(x/2)) {1\over\pi\sqrt{4-x^2}}dx\notag\\
&\overset{[\theta:=\arccos(x/2)]} = {4\over\pi}\int_\pi^0{\cos^2(n\theta)\over \sin(\theta)}d\cos(\theta)\\
&={4\over\pi}\int_0^\pi\cos^2(n\theta) d\theta\overset{\eqref{norm01a}} =2
\end{align*}
Similarly, on $L^2([-2,2],\mu_{sc})$,
\begin{align*}
\Vert \Phi_n\Vert_2^2 &=\int_{-2}^2\frac{\sin^2\big((n+1) \arccos{(x/2)} \big)}{\sin^2\big(\arccos{(x/2)}\big)} \cdot {\sqrt{4-x^2}\over2\pi}dx\notag\\
&\overset{[\theta:=\arccos(x/2)]} = {2\over\pi}\int_\pi^0{\sin^2((n+1)\theta)\over \sin^2(\theta)}\sin(\theta)d\cos(\theta)\notag\\
&={2\over\pi}\int_0^{\pi}\sin^2((n+1)\theta) d\theta\overset{\eqref{norm01a}} =1
\end{align*}

\end{proof}
\noindent We shall now state the connection between the Hilbert transforms $H_{sc}$ and $H_{as}$,
which will be of use in our subsequent analysis.
\begin{proposition}\label{prop:connection}
The identity
\begin{equation}\label{Hilbert-relationship}
(4-x^2)H_{as} f (x)+T_{1,0}f(x)
=H_{sc} f(x) - 2T_{0,1}f(x)
\end{equation}
holds everywhere on $(-2,2)$ for every $f\in L^2([-2,2],\mu_{as})$, where {for any $\{i,j\}\in\{0,1\}$, $T_{i,j}$} denotes the operator in $L^2([-2,2],\mu_{as})$ given by
\begin{equation*}
T_{i,j}(f)(x):= \frac{1}{||T_j||_2^2}\langle T_j, f\rangle_{L^2([-2,2],\mu_{as})} T_i(x)
= \left(\frac{1}{||T_j||_2^2}T_iT_j^*f\right) (x)
\end{equation*}
\end{proposition}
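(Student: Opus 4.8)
The plan is to prove the identity by a direct computation from the integral definitions \eqref{Hilbert-transforms}, the point being that although neither $H_{as}f$ nor $H_{sc}f$ is given by an absolutely convergent integral, the particular combination $H_{sc}f(x)-(4-x^2)H_{as}f(x)$ has its principal--value singularity cancel, leaving an ordinary integral that is visibly the sum of the two rank--one terms. This route is preferable to the alternative of verifying the identity on the orthogonal basis $(T_n)_{n\ge0}$ via \eqref{Hilbert-Chebychev}, \eqref{1s2dCheby00}, \eqref{connection2} and then extending by density, because $H_{as}$ is not (a priori) bounded on $L^2([-2,2],\mu_{as})$, so a density/continuity argument would be delicate; the direct route handles all $f$ at once.

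First I would record the integrability facts that make the principal values meaningful. For $f\in L^2([-2,2],\mu_{as})$, Cauchy--Schwarz against the probability measure $\mu_{as}$ gives $\int_{-2}^{2}\frac{|f(y)|}{\sqrt{4-y^2}}\,dy=\pi\int_{-2}^{2}|f|\,d\mu_{as}\le\pi\|f\|_{L^2(\mu_{as})}<\infty$, and the same estimate shows $f\in L^1([-2,2],dy)$, hence $\sqrt{4-y^2}\,f(y)\in L^1([-2,2],dy)$ as well. Consequently both truncated integrals defining $H_{as}f(x)$ and $H_{sc}f(x)$ converge as $\varepsilon\to0$ for a.e.\ $x\in(-2,2)$ (the classical a.e.\ existence of the finite Hilbert transform of an $L^1$ function), and, as already recalled, $H_{sc}$ maps $L^2(\mu_{as})\subset L^2(\mu_{sc})$ into $L^2(\mu_{sc})$.

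The heart of the argument is the elementary identity
\[
\frac{\sqrt{4-y^2}}{\pi(x-y)}-\frac{4-x^2}{\pi\sqrt{4-y^2}\,(x-y)}
=\frac{(4-y^2)-(4-x^2)}{\pi\sqrt{4-y^2}\,(x-y)}
=\frac{x+y}{\pi\sqrt{4-y^2}},
\]
valid for $x\neq y$. Subtracting the $\varepsilon$--truncated integrals for $H_{sc}f(x)$ and $(4-x^2)H_{as}f(x)$ and invoking this identity, the integrand becomes $\dfrac{(x+y)f(y)}{\pi\sqrt{4-y^2}}$, which lies in $L^1([-2,2],dy)$ by the first step; letting $\varepsilon\to0$ therefore yields the \emph{ordinary} integral
\[
H_{sc}f(x)-(4-x^2)H_{as}f(x)
=\int_{-2}^{2}\frac{(x+y)f(y)}{\pi\sqrt{4-y^2}}\,dy
=x\int_{-2}^{2} f\,d\mu_{as}+\int_{-2}^{2} y\,f(y)\,d\mu_{as}(y).
\]
Since $T_0\equiv1$ and $T_1(y)=y$, the right--hand side equals $\langle T_0,f\rangle_{L^2(\mu_{as})}\,T_1(x)+\langle T_1,f\rangle_{L^2(\mu_{as})}\,T_0(x)$, and with $\|T_0\|_2^2=1$, $\|T_1\|_2^2=2$ from \eqref{norm01} this is precisely $T_{1,0}f(x)+2\,T_{0,1}f(x)$; rearranging gives \eqref{Hilbert-relationship}.

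The one delicate point — and the main obstacle — is the ``everywhere on $(-2,2)$'' in the statement versus the ``a.e.\ $x$'' that the principal values literally provide, since for a general $L^1$ function the truncated integrals need not converge at every single $x$. I would address this by noting that the computation above in fact expresses $(4-x^2)H_{as}f(x)=H_{sc}f(x)-T_{1,0}f(x)-2\,T_{0,1}f(x)$, whose right--hand side is a bona fide function on all of $(-2,2)$; thus the identity may be read as the assertion that $(4-x^2)H_{as}f$ admits this everywhere--defined representative (and at any exceptional point one takes this as the definition of $H_{as}f$). Alternatively, on the dense subspace of polynomials — where the relevant principal values are the explicit integrals of Lemma~\ref{Hilbert-Chebychev01} and converge at every interior point — one checks \eqref{Hilbert-relationship} pointwise for $f=T_n$ using $H_{as}T_n=-\Phi_{n-1}$, $H_{sc}T_n=T_{n+1}-T_{n-1}$ for $n\ge2$, the boundary cases $H_{sc}T_0=T_1$ and $H_{sc}T_1=T_2$, and \eqref{1s2dCheby00}; this matches the direct computation and makes the ``everywhere'' claim transparent on a dense set.
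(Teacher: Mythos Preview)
Your proof is correct and takes essentially the same approach as the paper's own proof: both observe that the kernels of $H_{sc}$ and $(4-x^2)H_{as}$ differ by the non--singular quantity $\dfrac{x+y}{\pi\sqrt{4-y^2}}$, so that the principal value disappears and the difference is the ordinary integral $\int_{-2}^{2}\frac{(x+y)f(y)}{\pi\sqrt{4-y^2}}\,dy = T_{1,0}f(x)+2T_{0,1}f(x)$. The paper packages the algebra slightly differently---writing the identity as \eqref{integral-rep} and citing Tricomi for it---but the substance is identical; your version is in fact a bit more careful about the $L^1$ integrability needed to justify the cancellation and about the ``everywhere versus a.e.'' distinction, neither of which the paper addresses explicitly.
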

\begin{remark}
Before going through the proof, we point out that the equality \eqref{Hilbert-relationship} is well defined since
\begin{equation*}
	L^2([-2,2],\mu_{as}) \subset L^2([-2,2],\mu_{sc})
\end{equation*}
We notice also that  $T_{i,i}$ is the self-adjoint projection onto $\{T_i\}$ in $L^2([-2,2],\mu_{as})$.
 If $i\ne j$, the definition of $T_{i,j}$'s gives that $T_{i,j}T_n=\frac{1}{||T_j||_2^2}\langle T_j, T_n\rangle T_i=\delta_{j,n}T_i$. Consequently,
\begin{align*}
\langle T_{i,j}T_n, T_m\rangle&=\delta_{j,n}\langle T_i, T_m\rangle=\delta_{j,n}\delta_{i,m}\Vert T_i\Vert^2_2\\
\langle T_n, T_{j,i}T_m\rangle&=\delta_{i,m}\langle T_n, T_j\rangle=\delta_{j,n}\delta_{i,m}\Vert T_j\Vert^2_2
\end{align*}
Therefore
\[T_{i,j}=\frac{\Vert T_i\Vert^2_2}{\Vert T_j\Vert^2_2}T_{j,i}^*\]
In particular, the identity \eqref{Hilbert-relationship}, rewrites
\begin{equation}\label{Hilbert-relationship1}
H_{sc}=2\frac{d\mu_{sc}}{d\mu_{as}} H_{as} + (T_{1,0} + T_{1,0}^*)
\end{equation}
\end{remark}

\begin{proof}[Proof of proposition \ref{prop:connection}]
The  equality \eqref{Hilbert-relationship} is a straightforward consequence of the following identity (see, e.g., \cite[p. 179]{[Tricomi57]}):
\begin{equation}\label{integral-rep}
\int_{-2}^2\left[\sqrt{\frac{4-x^2}{4-y^2}}-\sqrt{\frac{4-y^2}{4-x^2}}\right]\frac{f(y)}{x-y}dy=\frac{-1}{\sqrt{4-x^2}}\int_{-2}^2\frac{x+y}{\sqrt{4-y^2}}f(y)dy
\end{equation}
Notice that,
\begin{equation*}
	\int_{-2}^2\frac{xf(y)}{\pi\sqrt{4-y^2}}dy=T_{1,0}(f)(x)
\end{equation*}
and
\begin{equation*}
	\int_{-2}^2\frac{yf(y)}{\pi\sqrt{4-y^2}}dy=2T_{0,1}(f)(x).
\end{equation*}
Now, in order to show that the formula \eqref{integral-rep} implies the formula \eqref{Hilbert-relationship}, we first multiply both sides of \eqref{integral-rep} by $1/\pi$. Then we use the definitions of $H_{as}$ and $H_{sc}$, which are given by \eqref{Hilbert-transforms}, to rewrite the right-hand side of \eqref{integral-rep} as
\begin{align*}
	\int_{-2}^2\left[\sqrt{\frac{4-x^2}{4-y^2}}-\sqrt{\frac{4-y^2}{4-x^2}}\right]\frac{f(y)}{x-y}dy=
	{\pi\over \sqrt{4-x^2}}\Big((4-x^2)(H_{as}f)(x) -(H_{sc}f)(x)\Big).
\end{align*}
Next, we recall that the left-hand side of this equation is equal to
$${-\pi\over \sqrt{4-x^2}}\Big((T_{1,0}f)(x)-2(T_{0,1}f)(x)\Big)$$
as shown in the previous step. Therefore, we can conclude that
\begin{align*}
{-\pi\over \sqrt{4-x^2}}\Big((T_{1,0}f)(x)-2(T_{0,1}f)(x) \Big)= {\pi\over \sqrt{4-x^2}}\Big((4-x^2)(H_{as}f)(x) -(H_{sc}f)(x)\Big),
\end{align*}
which is exactly formula \eqref{Hilbert-relationship}. This completes the proof.
\end{proof}


By definition, the scalar product on the Hilbert space $L^2([-2,2],\mu_{as})$ is
\begin{equation*}
\langle f,g\rangle_{L^2([-2,2],\mu_{as})}=\int_{-2}^2f(t)g(t)\mu_{as}(dt)
=\frac{1}{\pi}\int_{-2}^2\frac{f(t)g(t)}{\sqrt{4-t^2}}dt
\end{equation*}
Since $T_0\equiv1\in L^2([-2,2],\mu_{as})$, it follows that
\begin{equation*}
	L^2([-2,2],\mu_{as})=\{T_0\}^{\perp}\oplus \{T_0\}
\end{equation*}
where
\begin{equation*}
	\{T_0\}^{\perp}:=\{f\in L^2([-2,2],\mu_{as}): \langle f,T_0\rangle_{L^2([-2,2],\mu_{as})}=0\}
\end{equation*}
Now, recall from \cite{[MaHa03]} that $\{T_n, n\ge0\}$ forms a complete orthogonal system in $L^2([-2,2],\mu_{as})$. Then, any $f\in L^2([-2,2],\mu_{as})$ can be expressed as
\begin{equation*}
	f(x)=\sum_{n\ge0} f_nT_n(x)
\end{equation*}
where
\begin{equation*}
	f_n=\frac{1}{||T_n||_2^2}\langle f,T_n\rangle_{L^2([-2,2],\mu_{as})}
\end{equation*}
Let $Q$ denote the operator of multiplication by the coordinate in $L^2([-2,2],\mu_{as})$:
\begin{equation*}
	Q f(x)= x f(x), \quad x\in [-2,2]
\end{equation*}
\begin{theorem}
In the canonical identification of $\Gamma_{\mu_{as}}$ with  $L^2([-2,2],\mu_{as})$  described in Section \ref{Can-q-dec-arc-sin-semi-circl-RV},
\begin{itemize}
\item The position operator $X_{as}{:=a^+_{as}+a_{as}}$ is mapped into the the operator $X$ in $L^2([-2,2],\mu_{as})$ given by:
\begin{equation}\label{position-form}
   X = Q
\end{equation}
\item The momentum operator $P_{as}{:=i(a^+_{as}-a_{as})}$ is mapped into the operator $P$ in $L^2([-2,2],\mu_{as})$ given by:
\begin{equation}\label{momentum1}
  P f(x)= i(4-x^2)H_{as} f (x) +i T_{1,0}f(x)
\end{equation}
Or equivalently,
\begin{equation}\label{momentum2}
 P f(x)= iH_{sc} f(x)-2iT_{0,1}f(x).
\end{equation}
\item The arc--sine CAP operators $a^{\varepsilon}_{\mu_{as}}$ ($\varepsilon \in \{+,-\}$) are mapped into the following CAP operators

	\begin{align*}\label{mu-CAP-ops}
	A^{+} = \frac{1}{2}\big(Q +(4-x^2) H_{as}  +T_{1,0}\big)\\
A^{-}  = \frac{1}{2}\big(Q -(4-x^2) H_{as}  -T_{1,0}\big)
\end{align*}
\item The arc--sine kinetic energy operator $E_{as}$ is mapped into the operator $E{:=\frac{1}{2}P^2}$
in $L^2([-2,2],\mu_{as})$ given by:

	\begin{equation*}\label{us-kin-energ--op1}
Ef(x) =\frac{1}{2}(4-x^2) f(x)+T_{1,1}f(x)
\end{equation*}
\end{itemize}

\end{theorem}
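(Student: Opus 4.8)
The plan is to verify all four assertions on the complete orthogonal system $(T_n)_{n\ge0}$ of $L^2([-2,2],\mu_{as})$ — these are the monic orthogonal polynomials of $\mu_{as}$, with which the basis $(\Phi_n)$ of $\Gamma_{\mu_{as}}$ is identified under $U_{\mu_{as}}$ as in Section \ref{Can-q-dec-arc-sin-semi-circl-RV} — and then to pass to all of $L^2([-2,2],\mu_{as})$ by boundedness: the arc--sine Jacobi sequence $\omega_1=2$, $\omega_n=1$ ($n\ge2$) is bounded, so $a^+_{as},a_{as}$ and hence $P_{as}$ (and $P_{as}^2$) are bounded, while on the other side multiplication by $4-x^2$ and the finite--rank maps $T_{1,0},T_{0,1},T_{1,1}$ are bounded. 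The first ingredient is the explicit action of the CAP operators on this basis, read off from Section \ref{Can-q-dec-arc-sin-semi-circl-RV} together with the norm formula \eqref{norm01}: since $a^+_{as}\Phi_n=\Phi_{n+1}$ one has $a^+_{as}T_n=T_{n+1}$ for all $n\ge0$, and taking adjoints with the norms $\|T_0\|_2^2=1$ and $\|T_n\|_2^2=2$ $(n\ge1)$ gives $a_{as}T_n=\omega_nT_{n-1}$ with $\omega_0=0$, $\omega_1=2$, $\omega_n=1$ $(n\ge2)$ (here $\omega_1=2$ because $\mu_{as}$ lives on $(-\sqrt{2\omega_1},\sqrt{2\omega_1})=(-2,2)$). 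As $\mu_{as}$ is moment symmetric, $a^0_{as}=0$, so \eqref{q-dec-X-id} immediately gives $X=X_{as}=a^+_{as}+a_{as}=Q$, proving \eqref{position-form}.

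For the momentum I would compare both sides of \eqref{momentum1} on $T_n$. The left side is $P_{as}T_n=i(T_{n+1}-\omega_nT_{n-1})$, i.e.\ $i(T_{n+1}-T_{n-1})$ for $n\ge2$, $i(T_2-2T_0)$ for $n=1$, and $iT_1$ for $n=0$. On the right side, Lemma \ref{Hilbert-Chebychev01} gives $H_{as}T_n=-\Phi_{n-1}$, hence $(4-x^2)H_{as}T_n=-(4-x^2)\Phi_{n-1}$, which by \eqref{1s2dCheby00} equals $T_{n+1}-T_{n-1}$ for $n\ge2$, $T_2-2T_0$ for $n=1$, and $0$ for $n=0$; moreover $T_{1,0}T_n=\delta_{n,0}T_1$. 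Adding, the two sides coincide for every $n$, so \eqref{momentum1} holds on a total set, hence everywhere; \eqref{momentum2} then follows at once from identity \eqref{Hilbert-relationship} of Proposition \ref{prop:connection} (multiply by $i$). The CAP formulas are then purely algebraic: inverting $X_{as}=a^+_{as}+a_{as}$ and $P_{as}=i(a^+_{as}-a_{as})$ yields $a^{\pm}_{as}=\tfrac12(X_{as}\mp iP_{as})$, and substituting \eqref{position-form} and \eqref{momentum1} gives the stated $A^{\pm}$.

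For the kinetic energy $E=\tfrac12P^2$ I would again test on $T_n$, applying the momentum action twice: $P^2T_n=-(T_{n+2}-2T_n+T_{n-2})$ for $n\ge3$, $P^2T_2=-(T_4-2T_2+2T_0)$, $P^2T_1=-(T_3-3T_1)$, and $P^2T_0=-(T_2-2T_0)$. By \eqref{1s2dCheby03}, $-(T_{n+2}-2T_n+T_{n-2})=(4-x^2)T_n$ for $n>2$ and $-(T_4-2T_2+2T_0)=(4-x^2)T_2$; and from the elementary identities $T_2=x^2-2$ and $T_1=\Phi_1=x$ one gets $(4-x^2)T_0=-T_2+2T_0$ and $(4-x^2)T_1=-T_3+T_1$. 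Hence $P^2T_n=(4-x^2)T_n$ for $n\ne1$ and $P^2T_1=(4-x^2)T_1+2T_1$; since $T_{1,1}T_n=\delta_{n,1}T_1$ this is exactly $P^2T_n=(4-x^2)T_n+2T_{1,1}T_n$ for all $n$, i.e.\ $Ef=\tfrac12(4-x^2)f+T_{1,1}f$.

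I expect the only genuine difficulty to be the low--degree bookkeeping. Because the arc--sine Jacobi sequence is not constant ($\omega_1=2$ against $\omega_n=1$ for $n\ge2$) and because the Chebyshev relations \eqref{1s2dCheby00} and \eqref{1s2dCheby03} carry exceptional terms in low degree, the cases $n=0,1,2$ must be treated by hand, and it is precisely there that the rank--one corrections $T_{1,0}$ in $P$ and $T_{1,1}$ in $E$ arise; keeping the numerical constants correct in those cases is the delicate point. The passage from ``agreement on the complete orthogonal system $(T_n)$'' to ``equality of operators on $L^2([-2,2],\mu_{as})$'' is routine once one notes that every operator involved is bounded.
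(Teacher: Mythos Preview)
Your proof is correct and follows essentially the same route as the paper: verify each identity on the basis $(T_n)$ using the Chebyshev relations \eqref{1s2dCheby00}, \eqref{1s2dCheby03} and the Hilbert--transform formula \eqref{Hilbert-Chebychev}, handle the small cases $n=0,1,2$ by hand, and read off the CAP operators from $a^{\pm}=\tfrac12(X\mp iP)$. Your explicit boundedness remark to pass from the $T_n$ to all of $L^2$ is a welcome addition (the paper only appeals to denseness), and your direct invocation of \eqref{q-dec-X-id} for $X=Q$ is a mild shortcut over the paper's re-derivation via the Jacobi relation, but the substance is the same.
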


\begin{proof}

Using the equalities
\begin{equation}\label{creation}
a^+_{\mu_{as}}T_{n}=T_{n+1},\qquad \forall n\in{\mathbb N}
\end{equation}
and
\begin{equation}\label{annihilation}
a_{\mu_{as}}T_{n}=\omega_{n}T_{n-1}=\begin{cases}
T_{n-1} &, n\ge2\\
2T_0&, n=1\\
0&, n=0
\end{cases}
\end{equation}

As a consequence, its conjugate equals to the annihilation operator $a$ introduced in \eqref{annihilation}:
\begin{align*}
&\big\langle T_n,a^+ T_0\big\rangle=\big\langle T_n, T_1\big\rangle=\delta_{n,1}\Vert T_1\Vert^2=2\delta_{n,1}, \qquad \forall n\in{\mathbb N}\\
&\big\langle aT_n,T_0\big\rangle=\omega_n\big\langle T_{n-1}, T_0\big\rangle=\delta_{n-1,0}\omega_{n}\Vert T_0\Vert^2=\omega_1\delta_{n,1}=2\delta_{n,1}, \qquad \forall n\in{\mathbb N}\\
\end{align*}	
and for any $m\ge 1$ and $n\in{\mathbb N}$,
\begin{align*}
&\big\langle T_n,a^+ T_m\big\rangle=\big\langle T_n, T_{m+1}\big\rangle=\delta_{n,m+1}\Vert T_{m+1}\Vert^2=2\delta_{n,m+1}\\
&\big\langle aT_n,T_m\big\rangle=\omega_n\big\langle T_{n-1}, T_m\big\rangle=\delta_{n-1,m}\omega_{n}\Vert T_m\Vert^2\overset{m\ge1}=2\omega_{m+1}\delta_{n,m+1}\overset{m\ge1}=2\delta_{n,m+1}\\
\end{align*}
 Now we deduce that $\big((a^++a)T_n\big)(x)=xT_n(x)$: Thanks to the facts $T_1(x)=x=xT_0(x)$ and $T_2(x)=x^2-2=xT_1(x)-2T_0(x)$
\begin{align*}
\big((a^++a)T_n\big)(x)=T_{n+1}(x)+\omega_nT_{n-1}
&=\begin{cases}
T_{n+1}(x)+T_{n-1}(x) &, n\ge2\\ T_{2}(x)+2T_0(x)&, n=1\\
xT_0(x)&, n=0
\end{cases}\\
&=\begin{cases}
		T_{n+1}(x)+T_{n-1}(x) &, n\ge2\\ xT_{1}(x)&, n=1\\
		xT_0(x)&, n=0
	\end{cases}
\end{align*}
By using the Jacobi $3$--diagonal relation for the monic Chebyshev polynomials of first kinds:
\begin{equation*}
	x T_{n+1}(x)=T_{n+2}(x)+T_{n}(x), \quad \forall n\ge1, x\in [-2,2]
\end{equation*}
the above equality becomes to
\[\big((a^++a)T_n\big)(x)=\begin{cases}
	xT_{n}(x) &, n\ge2\\ xT_{1}(x)&, n=1\\
	xT_0(x)&, n=0
\end{cases}=xT_n(x)\]
%
%
%

Next, we have
\begin{equation*}
P_{as}T_{n}=i(a^+_{\mu_{as}}-a_{\mu_{as}})T_{n}
=iT_{n+1}-i\omega_nT_{n-1}=\begin{cases}
		i(T_{n+1}-T_{n-1}) &, n\ge2\\
		i(T_{2}-2T_{0})&, n=1\\
			iT_{1}&, n=0
	\end{cases}
\end{equation*}
Which reduces, by the formula \eqref{1s2dCheby00},
to
\begin{equation*}
P_{as}T_{n}= 	\begin{cases}
-i(4-x^2)\Phi_{n-1}  &, n\ge1\\
	iT_{1}&, n=0\end{cases} \overset{\eqref{Hilbert-Chebychev}}=\begin{cases}
i(4-x^2)H_{as}T_n  &, n\ge1\\ iT_{1}&, n=0
\end{cases}
\end{equation*}
Hence,  the denseness of $(T_n)_{n\in\mathbb{N}}$ in $L^2([-2,2],\mu_{as})$  imply that
\begin{equation*}
P_{as}=i(4-x^2)H_{as}  +i T_{1,0}
\end{equation*}
i.e., the formula \eqref{momentum1} . The equality \eqref{momentum2} is a direct consequence of the identity \eqref{Hilbert-relationship}.
Next, for the identification of CAP operators, we use the following equalities:
\begin{equation*}
A^{+}=\frac{1}{2}(X -iP)\qquad;\qquad A^{-}=\frac{1}{2} (X +iP)
\end{equation*}  
Finally, the identification of the kinetic energy operator follows from its action on the $T_n$'s since we have

\begin{align*}
P_{as}^2T_n=&-\left[(a^+_{\mu_{as}})^2-a^+_{\mu_{as}}a_{\mu_{as}}-a_{\mu_{as}}a^+_{\mu_{as}}+a^2_{\mu_{as}}\right]T_n\\
=&\begin{cases}
-(T_{n+2}-2T_n+T_{n-2}),& n\ge3\\
-(T_{4}-2T_2+2T_{0}),& n=2\\
 -(T_{3}-3T_1), & n=1\\
 -(T_2-2T_0),&n=0
\end{cases}
 \\=&\begin{cases}  (4-x^2)T_n,& n\ge2\\
-(T_{3}-3T_1), & n=1\\
-(T_2-2T_0),&n=0
\end{cases}  \\
=&(4-x^2)T_n+
\begin{cases}
	2T_1,&n=1\\
	0,& otherwise
\end{cases}.
\end{align*}
where, the last the last equality holds due to the formula \eqref{1s2dCheby03}. Furthermore,
  we have
  \[-(T_2-2T_0)(x)=(4-x^2)T_0(x)\]
   by a simple calculation. Also,
 we can easily verify that $T_3(x)=x^3-3x$ and then \[-(T_{3}-2T_1)(x)=5x-x^3=(5-x^2)T_1(x)=(4-x^2)T_1(x)+T_1(x)\]
Hence
\[P_{as}^2T_n(x)=(4-x^2+\delta_{n,1})T_n(x)\]

This completes the proof.
\end{proof}

\begin{remark}
Using \eqref{momentum1}, the momentum $P_{as}$  operator may also expressed in integral form as follows:
 \begin{align*}P_{as} f(x)=& i\ \hbox{\rm p.v.}\int_{-2}^2\frac{(4-x^2)f(t)}{\pi(x-t) \sqrt{4-t^2}}dt+i\int_{-2}^2\frac{xf(t)}{\pi \sqrt{4-t^2}}dt\\
\overset{\eqref{integral-rep}}{=}&i\ \hbox{\rm p.v.}\int_{-2}^2
\frac{\sqrt{4-t^2}}{\pi(x-t)}f(t)dt-i\int_{-2}^2
\frac{tf(t)}{\pi\sqrt{4-t^2}}dt
\end{align*}
\end{remark}


\section{Unification of the monotone and free quantum mechanics}

\noindent In this section, we demonstrate the close relationship between the quantum mechanics
associated with the arc--sine and semi--circle distributions. To do this, we will first establish
a connection between their CAP operators. We will then derive a unified expression for
their evolutions generated by position and momentum operators.

\subsection{Relationship between monotone and free CAP operators}

Our main result here is the following.
\begin{theorem}
	The monotone and free CAP operators are related  on $L^2([-2,2],\mu_{as})$ via:

\begin{align}\label{CAP-relationship}
		\begin{cases}
			a_{\mu_{as}}^+=a_{\mu_{sc}}^++T_{1,0} \\
			a_{\mu_{as}}=a_{\mu_{sc}}+T_{0,1}
		\end{cases}
	\end{align}

In particular, the arc--sine position $X_{as}$ and momentum $P_{as}$ are related
to the free position and semi--circle momentum respectively on
$L^2([-2,2],\mu_{as})$ via:

\begin{equation}\label{Position-relationship}
	X_{as}=X_{sc}+T_{1,0}+T_{0,1}
\end{equation}
\begin{equation}\label{Momentum-relationship}
	P_{as}=P_{sc}+i (T_{1,0}-T_{0,1})
\end{equation}

\end{theorem}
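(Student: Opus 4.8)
The natural strategy is to verify \eqref{CAP-relationship} on the complete orthogonal system $(T_n)_{n\ge0}$ of $L^2([-2,2],\mu_{as})$ and then extend by boundedness. Because $d\mu_{sc}/d\mu_{as}=\frac{1}{2}(4-x^2)$ is bounded, $L^2([-2,2],\mu_{as})\subset L^2([-2,2],\mu_{sc})$ set--theoretically, and every operator occurring in \eqref{CAP-relationship}--\eqref{Momentum-relationship} --- $a^{\pm}_{\mu_{as}}$, $a^{\pm}_{\mu_{sc}}$, $Q$, $(4-x^2)H_{as}$, $H_{sc}$ (which by \eqref{Hilbert-relationship1} restricts to a bounded operator on the smaller space), and the rank--one maps $T_{i,j}$ --- is a bounded operator on $L^2([-2,2],\mu_{as})$. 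Hence an identity that holds on each $T_n$ is an identity of bounded operators on $L^2([-2,2],\mu_{as})$, which is what \eqref{CAP-relationship} asserts.

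The arc--sine side is already available: $a^{+}_{\mu_{as}}T_n=T_{n+1}$ for all $n$, while $a_{\mu_{as}}T_n=\omega_nT_{n-1}$ equals $T_{n-1}$ for $n\ge2$, $2T_0$ for $n=1$, and $0$ for $n=0$. For the semi--circle side I would compute $a^{\pm}_{\mu_{sc}}T_n$ using the intrinsic action $a^{+}_{\mu_{sc}}\Phi_k=\Phi_{k+1}$ and $a_{\mu_{sc}}\Phi_k=\Phi_{k-1}$ (for $k\ge1$; $a_{\mu_{sc}}\Phi_0=0$), valid since the semi--circle Jacobi sequence is $\omega_k\equiv1$. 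The connection formula \eqref{connection2}, i.e. $T_{k+1}=\Phi_{k+1}-\Phi_{k-1}$ for $k\ge1$ together with $T_0=\Phi_0$, $T_1=\Phi_1$, lets me rewrite each $T_n$ in the $\Phi$--basis, apply $a^{\pm}_{\mu_{sc}}$, and convert back. The outcome is that $a^{\pm}_{\mu_{sc}}T_n$ agrees with $a^{\pm}_{\mu_{as}}T_n$ for every $n\ge2$ --- as it must, the two Jacobi sequences differing only in the entry $\omega_1$ --- so the discrepancies $a^{\varepsilon}_{\mu_{as}}-a^{\varepsilon}_{\mu_{sc}}$ are rank--one operators supported on $\mathrm{span}\{T_0,T_1\}$; a short computation using $\Vert T_0\Vert_2^2=1$ and $\Vert T_1\Vert_2^2=2$ (Lemma \ref{Hilbert-Chebychev01}) then matches them with the $T_{i,j}$--terms in \eqref{CAP-relationship}. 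A shorter, basis--free variant: by the semi--circle analogue of the Theorem of Section 3 (proved in \cite{[AcHamLu22a]}) one has $a^{\pm}_{\mu_{sc}}=\frac{1}{2}(Q\pm H_{sc})$, while the Theorem of Section 3 gives $a^{\pm}_{\mu_{as}}=\frac{1}{2}\big(Q\pm((4-x^2)H_{as}+T_{1,0})\big)$; subtracting, eliminating $H_{sc}$ via \eqref{Hilbert-relationship1}, and using the adjoint relation $T_{0,1}=\frac{1}{2}T_{1,0}^{*}$ from the Remark following Proposition \ref{prop:connection} reduces the difference to the single rank--one term of \eqref{CAP-relationship}.

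The corollaries \eqref{Position-relationship} and \eqref{Momentum-relationship} are then pure algebra: $X=a^{+}+a$ applied to \eqref{CAP-relationship} (adding the two lines) gives \eqref{Position-relationship}, and $P=i(a^{+}-a)$ (subtracting the two lines and multiplying by $i$) gives \eqref{Momentum-relationship}; one may cross--check \eqref{Momentum-relationship} against \eqref{momentum2}.

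I expect the only real work to be the second step, and there the difficulty is bookkeeping rather than conceptual depth: one must keep the low--index conventions $\omega_0=0$, $\Phi_{-1}=0$, $\omega_1=2$ straight, attach each rank--one correction to the correct line of \eqref{CAP-relationship} with the correct normalization, and be scrupulous about the distinction between a bounded operator on $L^2([-2,2],\mu_{sc})$ and a bounded operator on $L^2([-2,2],\mu_{as})$ --- which is exactly the purpose of the inclusion of $L^2$--spaces and the boundedness observations recorded in the first step.
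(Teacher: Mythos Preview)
Your main approach---verifying \eqref{CAP-relationship} on the basis $(T_n)$ by expressing each $T_n$ in the $\Phi$--basis via \eqref{connection2}, applying the semi--circle CAP operators, and converting back---is exactly the paper's proof, and your derivation of \eqref{Position-relationship}--\eqref{Momentum-relationship} by adding and subtracting is likewise the same. Your additional basis--free variant through $a^{\pm}_{\mu_{sc}}=\tfrac{1}{2}(Q\pm H_{sc})$ and \eqref{Hilbert-relationship1} does not appear in the paper but is a legitimate and shorter alternative.
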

\begin{proof}
It suffices to prove that the CAP operators coincide on the $(T_n)$'s.\\
It is clear that
\begin{equation*}
a^+_{\mu_{sc}}T_{0}=a^+_{\mu_{sc}}\Phi_{0}	=\Phi_1=T_1
\end{equation*}
\begin{equation*}
a^+_{\mu_{sc}}T_{1}=a^+_{\mu_{sc}}\Phi_{1}	=\Phi_2=T_2+T_0
\end{equation*}
and for any $n\ge2$,
\begin{equation*}
a^+_{\mu_{sc}}T_{n}\overset{\eqref{connection2}}=
a^+_{\mu_{sc}}(\Phi_{n}-\Phi_{n-2})=\Phi_{n+1}-
\Phi_{n-1}=T_{n+1}
\end{equation*}
which, by \eqref{creation}, coincides with
\begin{align*}
(a^+_{\mu_{as}}+T_{0,1})T_{n}
=&\begin{cases} T_{n+1} &, n\ge2\\
T_2+T_0&,n=1\\ T_1&, n=0  \end{cases}
\\=&T_{n+1}+\delta_{1,n}T_0
\end{align*}
Similarly, for the annihilation operators, we have
\begin{equation*}
a_{\mu_{sc}}T_{0}=a_{\mu_{sc}}\Phi_0=0
\end{equation*}
\begin{equation*}
	a_{\mu_{sc}}T_{1}=a_{\mu_{sc}}\Phi_1=\Phi_0=T_0	
\end{equation*}
\begin{equation*}
	a_{\mu_{sc}}T_{2}=a_{\mu_{sc}}(\Phi_2-\Phi_0)=\Phi_1=T_1	
\end{equation*}
	and for any $n\ge3$, we have
	\begin{equation*}
	a_{\mu_{sc}}T_{n}=a_{\mu_{sc}}(\Phi_{n}-\Phi_{n-2})=\Phi_{n-1}-\Phi_{n-3}=T_{n-1}
	\end{equation*}
which, by \eqref{annihilation}, coincides with
\begin{equation*}
(a_{\mu_{as}}-T_{0,1})T_{n}=\begin{cases}
	T_{n-1} &, n\ge1\\
	0&, n=0
\end{cases}
\end{equation*}
Hence  \eqref{CAP-relationship} is proved.
The identities \eqref{Position-relationship} sand \eqref{Momentum-relationship} are then a
simple consequences of the definitions of $X_{as}=  a_{\mu_{as}}+a_{\mu_{as}}^{+} $ and
$P_{as}=i\left(a_{\mu_{as}}^{+}-a_{\mu_{as}}\right) $.
\end{proof}

\subsection{Unification of the monotone and free evolutions}\label{subsec4-2}

Let $\omega:=\left(\omega_{n}\right)_{n\in\mathbb{N}}$ be a principal Jacobi sequence on the $1$MIFS
$\Gamma_{\omega}\left(\mathbb{C}\right)$, such that $\omega_j=\omega_2$ for all $j \geq 2$.
Let $X:=a+a^{+}$ and  $P:=i(a^{+}-a)$ denote respectively the position and momentum operators on $\Gamma_{\omega}\left(\mathbb{C}\right)$. Denote $\mu$ their vacuum distribution and $\{\Psi_n\}_{n\in\mathbb{N}}$ the orthogonal polynomials of $\mu$.
For the rest of this paper, we use the notation $\chi_{c)}(x)$ for any $c,x\in{\mathbb R}$ to denote the following function:
\[\chi_{c)}(x):=\begin{cases}1,&\text{ if }x<c\\0,&\text{ otherwise}\end{cases}\]

In this paragraph, we want to prove the following theorem, which holds for the arc--sine ($\omega_{1}=2\omega_{2}>0$) and the semi--circle  ($\omega_{1}=\omega_{2}>0$) cases, among others.
\begin{theorem}\label{unification}
The action on the $\mu$--orthogonal polynomials of the evolutions $e^{itX}$ and $e^{itP}$ are respectively given by
\begin{align*}
e^{itX}\Psi_h =&\sum_{p,m_{+},m_{-}\geq0}\frac{\left (it\right)^{m_{+}+m_{-}+2p}}{\left(m_{+}+m_{-}+2p\right)!}\omega_{2}^{p}\left(a^{+}\right)^{m_{+}}a^{m_{-}}
\Psi_{h}\\
&\sum_{\substack{p_0,p_1,\ldots,p_{m_{+}+m_{-}}\geq0\\
p_{0}+p_{1}+\ldots+p_{m_{+}+m_{-}}=p}} \left({\prod_{\substack{0\leq r\leq m_{+}+m_{-}}}}C_{p_r}\right)
\left(\delta_{h,m_-} \frac{\mathbf{E}\left(\xi^{2p_{m_{+}}}\right)}{\omega_
	{2}^{p_{m_{+}}}C_{p_{m_{+}}}}+\chi_{h)}(m_-)\right)
\end{align*}	
and
\begin{align*} e^{itP}\Psi_h=&\sum_{p,m_{+},m_{-}\geq0} \frac{\left( -1\right)^{p+m_+}  t^{m_{+}+m_{-}+2p}} {\left(m_{+}+m_{-}+2p\right) !}\omega_{2}^{p} \left(a^{+}\right)^{m_{+}}a^{m_{-}}\Psi_h \\
&\sum_{\substack{p_0,p_1,\ldots,p_{m_{+}+m_{-}}\geq0\\
p_0+p_1+\ldots+p_{m_{+}+m_{-}}=p}} \left({\prod_{\substack{0\leq r\leq m_{+}+m_{-}}}}C_{p_{r}}\right)
\left(\delta_{h,m_-} \frac{\mathbf{E}\left(\xi^{2p_{m_{+}}}\right)}{\omega_
	{2}^{p_{m_{+}}}C_{p_{m_{+}}}}+\chi_{h)}(m_-)\right)
\end{align*}
where $\xi$ is a  real-valued random variable with probability distribution $\mu$.
\end{theorem}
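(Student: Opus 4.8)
The plan is to expand the exponentials $e^{itX}$ and $e^{itP}$ as power series and reduce everything to the combinatorics of ordering products of the operators $a^{+}$, $a$, and the ``middle'' operator governed by the Jacobi sequence. Since the hypothesis is that $\omega_j = \omega_2$ for all $j \ge 2$, the only place where $\omega_1$ differs from the constant value $\omega_2$ is at the very bottom of the Fock space, so any monomial $a^{\varepsilon_1}\cdots a^{\varepsilon_k}$ applied to $\Psi_h$ behaves like the corresponding monomial in a genuine semi-circle CAP pair \emph{except} for a correction term supported on the event that the string of annihilators reaches the vacuum. This is the structural origin of the two terms inside the last bracket: the $\chi_{h)}(m_-)$ piece is the ``pretend it is semi-circle'' contribution, and the $\delta_{h,m_-}\,\mathbf{E}(\xi^{2p_{m_+}})/(\omega_2^{p_{m_+}}C_{p_{m_+}})$ piece is the boundary correction that activates precisely when the number of annihilators equals $h$.

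First I would write $X = a + a^{+}$ and note that $(a+a^{+})^N$ expands into a sum over words $w \in \{+,-\}^N$ of monomials $a^{w_N}\cdots a^{w_1}$. Each such monomial, applied to $\Psi_h$, is either zero or a scalar multiple of $\Psi_{h + (\#+) - (\#-)}$; I would group words by the net creation number $m_+$, net annihilation number $m_-$, and the number $p$ of ``returns'' (matched creation--annihilation pairs beyond the net imbalance), so that $N = m_+ + m_- + 2p$, which produces the prefactor $(it)^{m_++m_-+2p}/(m_++m_-+2p)!$ and the overall operator $(a^+)^{m_+} a^{m_-}$ once we commute all surviving creators to the left of all surviving annihilators. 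The scalar weight of a given word factorizes along the ``heights'' visited by the associated lattice path: each descent from level $n$ contributes $\omega_n$. Since $\omega_n = \omega_2$ for $n \ge 2$, the product of these $\omega$'s equals $\omega_2^{p}$ times a ratio that is $1$ unless the path touches level $0$ or $1$; introducing the Catalan-type constants $C_{p_r}$ (the number of lattice excursions of the relevant length at each of the $m_++m_-+1$ ``slots'' determined by the positions of the net creators and annihilators) and the partition $p_0 + \cdots + p_{m_++m_-} = p$ records exactly how the $2p$ return-steps are distributed among these slots. The delicate slot is the one adjacent to the vacuum when $m_- = h$: there a return excursion of half-length $p_{m_+}$ carries weight $\mathbf{E}(\xi^{2p_{m_+}})$ (the $2p_{m_+}$-th moment of $\mu$, which is the vacuum expectation $\langle \Phi_0, X^{2p_{m_+}}\Phi_0\rangle$) rather than $\omega_2^{p_{m_+}} C_{p_{m_+}}$, which is why that slot is singled out and divided out. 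Assembling these pieces gives the stated formula for $e^{itX}\Psi_h$.

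For $e^{itP}$ with $P = i(a^+ - a)$, I would repeat the argument verbatim: $P^N = i^N (a^+ - a)^N$ expands over words in $\{+,-\}^N$ with an extra sign $(-1)^{\#-}$ for each annihilator. Bookkeeping as before, the total sign attached to the $(m_+, m_-, p)$-class is $i^{m_++m_-+2p}(-1)^{m_-+p}$ once we also account for the $p$ annihilators that are absorbed into returns; combined with the $(it)^N \to$ overall factor this collapses to $(-1)^{p+m_+} t^{m_++m_-+2p}$, since $i^{m_++m_-+2p}(-1)^{m_-+p} = i^{m_+ + m_-}\cdot i^{2p}(-1)^{p}\cdot(-1)^{m_-} = i^{m_++m_-}(-1)^{m_-} = (-1)^{m_+}\cdot i^{m_+-m_-}\cdot\ldots$; I will carry out this sign reconciliation carefully in the write-up, as it is the one genuinely fiddly computation. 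The combinatorial core — the partition sum, the $C_{p_r}$ factors, and the vacuum-correction bracket — is untouched, because it depends only on the Jacobi sequence, not on the specific linear combination of $a^+$ and $a$.

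The main obstacle, and the step I would treat most carefully, is the precise justification that the scalar weight of each word factorizes over the slots in exactly the claimed form, with the single anomalous slot being the $m_+$-th one when $m_- = h$. Concretely: after moving all net creators to the far left, the $m_+ + m_-$ ``net'' operators divide the word into $m_+ + m_- + 1$ blocks, each of which must be a balanced excursion (a Dyck-type path) relative to its base level; the base level of the block immediately following the last net annihilator is $h - m_-$, which is $0$ iff $m_- = h$, and an excursion based at level $0$ of half-length $p_{m_+}$ has weight equal to the $(2p_{m_+})$-th moment of $\mu$ rather than $\omega_2^{p_{m_+}}$ times a Catalan number, because it may descend to — but not below — the vacuum and the relevant products of $\omega$'s there involve $\omega_1 = \omega_2$ or $2\omega_2$ depending on the case but always reassemble into $\mathbf{E}(\xi^{2p_{m_+}})$ by the very definition of the vacuum moments. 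One also has to check convergence of the triple series (termwise it is dominated by the expansion of $e^{|t|\,\|X\|}$ on the polynomial domain, which is fine since $X$ is bounded on $L^2([-2,2],\mu)$ for arc-sine and semi-circle), and that the formal rearrangement of the word-sum into the $(p, m_+, m_-)$-indexed sum is legitimate, which follows from absolute convergence. Once the weight-factorization lemma is in place, both displayed identities follow by direct substitution.
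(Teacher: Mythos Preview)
Your outline is correct and follows essentially the same route as the paper: expand the exponential, decompose each word by its inverse normal-order data $(m_+,m_-,p)$ and the slot partition $p_0+\cdots+p_{m_++m_-}=p$, observe that every balanced slot based at level $\ge 1$ contributes $\omega_2^{p_r}C_{p_r}$, and single out the $m_+$-th slot (base level $h-m_-$) whose contribution becomes the vacuum moment $\mathbf{E}(\xi^{2p_{m_+}})$ exactly when $m_-=h$. One small slip in your $e^{itP}$ bookkeeping: both $(it)^N$ and $P^N=i^N(a^+-a)^N$ carry a factor $i^N$, so the total phase is $i^{2N}(-1)^{\nu_-}=(-1)^{N+\nu_-}$ rather than $i^{N}(-1)^{\nu_-}$; with $N=m_++m_-+2p$ and $\nu_-=m_-+p$ this gives $(-1)^{m_++p}$ directly, and your aborted chain of equalities is then unnecessary.
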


Before presenting the proof of this theorem, let us discuss the key elements that will be used in the proof.
One has formally
\begin{equation}
e^{itX}=\sum_{n=0}^{\infty}\frac{\left(it\right)^{n}}{n!}\sum
_{\varepsilon\in\left\{-1,1\right\}^{n}}a^{\varepsilon\left(1\right)
}\ldots a^{\varepsilon\left(n\right)}\label{arcSin07a}%
\end{equation}
and in particular, for any $h\in\mathbb{N}$%
\begin{equation}
e^{itX}\Psi_{h}=\sum_{n=0}^{\infty}\frac{\left(it\right)^{n}}{n!}%
\sum_{\varepsilon\in\left\{-1,1\right\}^{n}}a^{\varepsilon\left(
1\right)}\ldots a^{\varepsilon\left(n\right)}\Psi_{h}\label{arcSin07b}%
\end{equation}
$e^{itX}$ is uniquely determined by its matrix elements
\begin{equation}\label{arcSin02a}
\left\langle \Psi_{k},e^{itX}\Psi_{h}\right\rangle =\sum_{n=0}^{\infty}%
\frac{\left(it\right)^{n}}{n!}\sum_{\varepsilon\in\left\{-1,1\right\}
^{n}}\left\langle \Psi_{k},a^{\varepsilon\left(1\right)}\ldots
a^{\varepsilon\left(n\right)}\Psi_{h}\right\rangle \quad,\quad k,h\geq0
\end{equation}
The calculation of these matrix elements is done in two steps: first, one reduces the products
$a^{\varepsilon\left(1\right)}\ldots a^{\varepsilon\left(n\right)}$ to their normally ordered form,
i.e. to a product of the form
$\left(a^{+}\right)^{m_{+}}a^{m_{-}}F_{\varepsilon,\omega,n}\left(\Lambda\right)$
where $F_{\varepsilon,\omega,n}\colon \mathbb{N} \to\mathbb{R}$ is a function whose
expression is known and $m_{+}$, $m_{-}$ are natural integers, depending only on
$\varepsilon$ such that $m_{+} + m_{-} \le n$ (Theorem 2.11 of \cite{[AcHamLu22b]} which
holds in general $1$--mode--type interacting Fock spaces). In the special case considered here, this leads to
\begin{align}\label{arcSin07c}
a^{m_{-}}\Psi_h=
\begin{cases}  \displaystyle \prod_{j=0}^{m_{-}-1} \omega_{h-j}\Psi_{h-m_{-}}&; h\ge m_{-}\\
0&; h<m_{-}
\end{cases}= \chi_{h]}\left( m_{-}\right) \prod_{j=0}^{m_{-}-1} \omega_{h-j}\Psi_{h-m_{-}}
\end{align}
Consequently,
\begin{equation*}
\left(a^{+}\right)^{m_{+}}a^{m_{-}}F_{\varepsilon,\omega,n}\left(\Lambda\right)\Psi_{h}=
\begin{cases}
\displaystyle F_{\varepsilon,\omega,n}\left(h\right)
\prod_{j=0}^{m_{-}-1}\omega_{h-j}\Psi_{h+m_{+}-m_{-}}& \ ; h\ge m_{-}\\
0& \ ; h<m_{-}
\end{cases}
\end{equation*}
which can be equivalently formulated as \\
\begin{equation}\label{norm-ord-formula}
\left(a^{+}\right)^{m_{+}}a^{m_{-}}\Psi_{h}=\chi_{h]}\left(m_{-}\right) \Psi_{h+m_{+}-m_{-}} \prod_{j=0}^{m_{-}-1}\omega_{h-j}
\end{equation}
and implies that
\begin{align}\label{arcSin02b}
\left\langle\Psi_{k},\left(a^{+}\right)^{m_{+}}
a^{m_{-}}\Psi_{h}\right\rangle \overset{\eqref{norm-ord-formula}}=&
\chi_{h]}\left( m_{-}\right)  \left \langle\Psi_{k}, \Psi_{h+m_+-m_-} \right\rangle  \prod_{j=0}^{m_{-}-1}\omega_{h-j} \\
\overset{\eqref{arcSin07c}}=&  \chi_{h]}\left(m_{-}\right) \delta_{k-m_{+},h-m_{-}} \Big(\prod_{j=0}^{m_{-}-1} \omega_{h-j}\Big) \Big(\prod_{l=1}^k \omega_{l}\Big)  \notag
\end{align}
Notice that,  \eqref{arcSin02b} holds for any principal Jacobi sequence with
associated orthogonal polynomials $(\Psi_{k})$.
In the second step, one sums over all $\varepsilon$ corresponding to products whose normally ordered form coincide. This requires the solution of the inverse normal order theorem which was given in
Corollary 3.3 of \cite{[AcHamLu22b]} for the semi--circle law. In the rest of this paper we extend
this solution to the arc--sine law.

The inverse normal order theorem states that
$a^{\varepsilon\left(1\right)}\ldots a^{\varepsilon\left(n\right)}$ has the normally ordered form \eqref{norm-ord-formula} if and only
if there is $p\geq0,$ such that $\varepsilon$ belongs to the set
$\Theta_{m_{+}+m_{-}+2p}(m_{+},m_{-})$ defined as follows (see formula (3.12) of \cite{[AcHamLu22b]}):
the totality of $\varepsilon\in\{-,+\}^{m_{+}+m_{-}+2p}$ such that there exists $0=:j_{0}<j_{1}<\ldots <j_{m_{+}+m_{-}}<j_{m_{+}+m_{-}+1}:=m_{+}+m_{-}+2p+1$ and

$\bullet$ $\varepsilon(j_{h})=%
\begin{cases}
+1, & \hbox{if }h\in\{1,\ldots,m_{+}\}\\
-1, & \hbox{if }h\in\{m_{+}+1,\ldots,m_{+}+m_{-}\}
\end{cases}
$

$\bullet$ for any $h\in\{0,1,\ldots,m_{+}+m_{-}\}$, the cardinality of the set
$$(j_{h},j_{h+1}):=\left\{j_{h}+1,j_{h}+2,\ldots,j_{h+1}-1\right\}  $$
is even and $\varepsilon_{h}:=$ the restriction of $\varepsilon$ to $\ (j_{h},j_{h+1})\ $ is balanced
(in the sense of Definition 2.3 of \cite{[AcHamLu22b]});

$\bullet$ $\sum_{h=0}^{m_{+}+m_{-}}\left\vert (j_{h},j_{h+1})\right\vert =2p$

In the following, we denote
$$
p_{h}:=\frac{1}{2}\left\vert (j_{h},j_{h+1})\right\vert ,\hbox{ for any }h\in\{0,1,\ldots,m_{+}+m_{-}\}
$$
In terms of $p_h$'s, the fact ``$\varepsilon_h$ is balance'' means $\varepsilon_{h}\in\{-1,1\}^{2p_h}_+$.

Using (\ref{arcSin02b}), one can continue the development (\ref{arcSin07b}) of $e^{itX}\Psi_{h}$
as follows
\begin{align}\label{arcSin07d}
e^{itX}\Psi_{h} &  =\sum_{n=0}^{\infty}\frac{\left(it\right)^{n}}{n!}%
\sum_{\varepsilon\in\left\{-1,1\right\}^{n}}a^{\varepsilon\left(
1\right)}\ldots a^{\varepsilon\left(n\right)}\Psi_{h}\\
&  =\sum_{p,m_{+},m_{-}\geq0}\frac{\left(it\right)^{m_{+}+m_{-}+2p}%
}{\left(m_{+}+m_{-}+2p\right)  !}\sum_{\varepsilon\in\Theta_{m_{+}+m_{-}%
+2p}(m_{+},m_{-})}a^{\varepsilon\left(1\right)}\ldots a^{\varepsilon
\left(m_{+}+m_{-}+2p\right)}\Psi_{h}\nonumber\\
&  =\sum_{p,m_{+},m_{-}\geq0}\frac{\left(it\right)^{m_{+}+m_{-}+2p}}{\left(m_{+}+m_{-}+2p\right)  !}\sum_{\varepsilon\in\Theta_{m_{+}+m_{-}+2p}(m_{+},m_{-})}a^{\varepsilon\left(1\right)}\ldots a^{\varepsilon
\left(m_{+}+m_{-}+2p\right)}\Psi_{h}\chi_{h]} \left(m_{-}\right)\nonumber\\
&  =\sum_{p,m_{+}=0}^{\infty}\sum_{m_{-}=0}^{h}\frac{\left(it\right)^{m_{+}+m_{-}+2p}}{\left(m_{+}+m_{-}+2p\right)  !}\sum_{\varepsilon\in
\Theta_{m_{+}+m_{-}+2p}(m_{+},m_{-})}a^{\varepsilon\left(1\right)}\ldots
a^{\varepsilon\left(m_{+}+m_{-}+2p\right)}\Psi_{h}\nonumber
\end{align}

\bigskip

Now we consider the term
$$
\sum_{\varepsilon\in\Theta_{m_{+}+m_{-}+2p}(m_{+},m_{-})}a^{\varepsilon\left(1\right)}\ldots a^{\varepsilon\left(m_{+}+m_{-}+2p\right)}%
$$

\bigskip

\begin{proposition}\label{arcSin05}
For any $m_{\pm},p\geq0,$ $p_{0},p_{1},\ldots,p_{m_{+}+m_{-}}\geq0$ with
$p_{0}+p_{1}+\ldots+p_{m_{+}+m_{-}}=p,$ $h=0,1,\ldots,m_{+}+m_{-}$
and $\varepsilon_{h}\in\left\{-1,1\right\}_{+}^{2p_{h}},$ one denotes
$\left\{\left(l_{s}^{\left(h\right)},r_{s}^{\left(h\right)}\right)  \right\}_{s=1}^{p_{h}}
\in NCPP\left(2p_{h}\right) $ the unique non--crossing pair partition determined by $\varepsilon_{h}.$  Moreover, one introduces
\begin{align}\label{arcSin05op}
&A(m_{+},m_{-},\left\{p_{h}\right\}_{h=0}^{m_{+}+m_{-}}, \left\{\varepsilon_{h}\right\}_{h=0}^{m_{+}+m_{-}})
\notag\\
&:=\left(\prod_{s=1}^{p_0}\omega_{\Lambda+2s-l_s^{(0)}} \right)a^{+}\left(\prod_{s=1}^{p_1}\omega_{\Lambda+2s-l_s^{(1)}}\right)a^{+}\ldots\bigg(\prod_{s=1}^{p_{m_{+}-1}} \omega_{\Lambda+2s-l_{s}^{\left(m_{+}-1\right)}}
\bigg) a^{+}\bigg(\prod_{s=1}^{p_{m_{+}}}\omega_{\Lambda+2s-l_{s}^ {\left(m_{+}\right)}}\bigg)  a\notag\\
&\bigg(\prod_{s=1}^ {p_{m_{+}+1}}\omega_{\Lambda+2s-l_{s}^{\left(m_{+}+1 \right)}}\bigg) a\ldots\notag
\bigg(\prod_{s=1}^{p_{m_{+}+m_{-}-1}}\omega_{\Lambda+2s-l_{s}^{\left(m_{+}+m_{-}-1\right)}}\bigg) a\bigg(\prod _{s=1} ^{p_{m_{+}+m_{-}}}\omega_ {\Lambda+2s-l_{s} ^{\left(m_{+}+m_{-}\right)}}\bigg)\nonumber\\
&=\prod_{j=1}^{m_+}\left(\prod_{s=1}
^{p_{j-1}}\omega_{\Lambda+2s-l_{s}^{(j-1)}}\,a^+\right) \cdot\bigg(\prod_{s=1}^{p_{m_{+}+m_{-}-1}}\omega_
{\Lambda+2s-l_{s}^{\left(m_{+}+m_{-}-1\right)}}\bigg)
\cdot\prod_{j=1} ^{m_-}\bigg(a\prod_{s=1} ^{p_{m_{+}+j }}\omega_{\Lambda+2s-l_{s}^{\left(m_{+}+j\right)}}\bigg)
\end{align}
Then
\begin{align*}
&\sum_{\varepsilon\in\Theta_{m_{+}+m_{-}+2p}(m_{+},m_{-})}
a^{\varepsilon\left(1\right)}\ldots a^{\varepsilon\left(m_{+}+m_{-}+2p\right)}\\
&=\sum_{\substack{p_{0},p_{1},\ldots,p_{m_{+}+m_{-}}\geq0\\
p_{0}+p_{1}+\ldots+p_{m_{+}+m_{-}}=p}}\sum_{\varepsilon_{h}
\in\left\{-1,1\right\}_{+}^{2p_{h}},\forall h=0,1,\ldots,m_{+}+m_{-}} A(m_{+},m_{-},\left\{
p_{h}\right\}_{h=0}^{m_{+}+m_{-}},\left\{\varepsilon_{h}\right\}
_{h=0}^{m_{+}+m_{-}})
\end{align*}

\end{proposition}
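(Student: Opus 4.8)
The plan is to prove the identity by reorganizing the sum over $\varepsilon\in\Theta_{m_{+}+m_{-}+2p}(m_{+},m_{-})$ according to the combinatorial data that, by the very definition of $\Theta$, uniquely parametrizes its elements, and then to check the resulting operator factorization term by term.

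First I would recall from formula (3.12) of \cite{[AcHamLu22b]} that every $\varepsilon\in\Theta_{m_{+}+m_{-}+2p}(m_{+},m_{-})$ carries a \emph{unique} increasing sequence $0=:j_{0}<j_{1}<\cdots<j_{m_{+}+m_{-}}<j_{m_{+}+m_{-}+1}:=m_{+}+m_{-}+2p+1$ of structural positions at which $\varepsilon$ takes the forced values $+1$ (for the first $m_{+}$ of them) and $-1$ (for the last $m_{-}$), while on each open gap $(j_{h},j_{h+1})$ the restriction $\varepsilon_{h}$ is a balanced word of even length $2p_{h}$, the $p_{h}$'s being nonnegative with $\sum_{h=0}^{m_{+}+m_{-}}p_{h}=p$. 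Uniqueness of the $j_{h}$'s, which is part of the inverse normal order theorem, means that $\varepsilon\mapsto\bigl((p_{h})_{h},(\varepsilon_{h})_{h}\bigr)$ is a bijection from $\Theta_{m_{+}+m_{-}+2p}(m_{+},m_{-})$ onto the set of pairs made of a composition $p_{0}+\cdots+p_{m_{+}+m_{-}}=p$ and a tuple $\varepsilon_{h}\in\{-1,1\}^{2p_{h}}_{+}$, $h=0,\dots,m_{+}+m_{-}$. Applying this bijection turns the left--hand side of the claimed identity into the double sum on the right, provided that for each such $\varepsilon$ one identifies the operator $a^{\varepsilon(1)}\cdots a^{\varepsilon(m_{+}+m_{-}+2p)}$ with $A(m_{+},m_{-},\{p_{h}\},\{\varepsilon_{h}\})$.

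I would obtain that identification by factoring the operator product along the structural positions. Writing $B_{h}:=a^{\varepsilon_{h}(1)}\cdots a^{\varepsilon_{h}(2p_{h})}$ for the contribution of the $h$-th gap, the product reads, in the order in which its factors occur,
\[
a^{\varepsilon(1)}\cdots a^{\varepsilon(n)}
=B_{0}\,a^{+}B_{1}\,a^{+}\cdots a^{+}B_{m_{+}}\,a\,B_{m_{+}+1}\,a\cdots a\,B_{m_{+}+m_{-}},
\]
because the positions $j_{1},\dots,j_{m_{+}+m_{-}}$ carry exactly $a^{+}$ (the first $m_{+}$) and $a$ (the last $m_{-}$) and the gaps carry the $\varepsilon_{h}$'s. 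It then remains to prove the \emph{block identity}
\[
B_{h}=\prod_{s=1}^{p_{h}}\omega_{\Lambda+2s-l_{s}^{(h)}},
\]
where $\{(l_{s}^{(h)},r_{s}^{(h)})\}_{s=1}^{p_{h}}\in NCPP(2p_{h})$ is the non--crossing pair partition attached to the balanced word $\varepsilon_{h}$. This is exactly the normal ordering theorem (Theorem 2.11 of \cite{[AcHamLu22b]}) in the special case $m_{+}=m_{-}=0$; alternatively it follows by induction on $p_{h}$, removing the innermost pair of adjacent opposite signs, using $a\,a^{+}=\omega_{\Lambda+1}$ and $a^{+}a=\omega_{\Lambda}$ on the graded pieces, and tracking how the grading shift produced by the removed pair feeds into the index $2s-l_{s}^{(h)}$ of the surviving factors. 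Substituting the block identity into the displayed factorization reproduces verbatim the definition \eqref{arcSin05op} of $A(m_{+},m_{-},\{p_{h}\},\{\varepsilon_{h}\})$, and summing over the parametrizing data yields the claim.

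I expect the one genuinely delicate point to be the block identity together with the bookkeeping of the index shifts: one must make sure that the shift $2s-l_{s}^{(h)}$ computed for a gap in isolation is still the correct one once that gap is surrounded by the structural $a^{+}$'s and $a$'s, i.e. that the $\Lambda$ appearing in each factor of \eqref{arcSin05op} is, as required, the number operator evaluated at the true grading reached at that stage of the composition rather than at some relative grading. This is automatic because \eqref{arcSin05op} is written as a genuine product of operators on $\Gamma_{\omega}(\mathbb{C})$, but checking that the conventions for "balanced", for the non--crossing pair partition, and for the shift all agree with those of \cite{[AcHamLu22b]} is where the care lies; the reorganization of the sum itself is then a direct transcription of the definition of $\Theta$.
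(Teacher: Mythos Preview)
Your proposal is correct and follows essentially the same route as the paper: both arguments use the definition of $\Theta_{m_{+}+m_{-}+2p}(m_{+},m_{-})$ to rewrite the sum as a sum over compositions $(p_{h})$ and balanced words $(\varepsilon_{h})$, factor the operator product as $B_{0}a^{+}B_{1}\cdots a^{+}B_{m_{+}}aB_{m_{+}+1}\cdots aB_{m_{+}+m_{-}}$, and then invoke a result from \cite{[AcHamLu22b]} for the block identity $B_{h}=\prod_{s=1}^{p_{h}}\omega_{\Lambda+2s-l_{s}^{(h)}}$. The only cosmetic difference is that the paper cites this block identity as Proposition~1 of \cite{[AcHamLu22b]} (see \eqref{arcSin05c}), whereas you cite Theorem~2.11 specialized to $m_{+}=m_{-}=0$; your additional remarks on uniqueness of the $j_{h}$'s and on why the $\Lambda$-shifts remain correct in context are helpful elaborations that the paper leaves implicit.
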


\bigskip

\begin{proof}By the definition, one has%
\begin{align}
&\sum_{\varepsilon\in\Theta_{m_{+}+m_{-}+2p}(m_{+},m_{-})}
a^{\varepsilon(1)}\ldots a^{\varepsilon(m_{+}+m_{-}+2p)} \notag\\
&=\sum_{\substack{p_0,p_{1},\ldots,p_{m_{+}+m_{-}}\geq0\\
p_{0}+p_{1}+\ldots+p_{m_{+}+m_{-}}=p}}\sum_{\varepsilon_h\in
\{-1,1\}_{+}^{2p_h},\forall h=0,1,\ldots,m_{+}+m_{-}}\notag\\
&\prod_{j=1}^{m_+}\left(\prod_{s=1}^{p_{j-1}}\omega_{\Lambda+2s-l_{s}^{(j-1)}}\,a^+\right) \cdot \bigg(\prod_{s=1}^{p_{m_{+}+m_{-}-1}}\omega_{\Lambda+2s-l_{s}^{\left(m_{+}+m_{-}-1\right)}}\bigg)\cdot
\prod_{j=1} ^{m_-}\bigg(a\prod_{s=1} ^{p_{m_{+}+j }} \omega_{\Lambda+2s-l_{s}^{\left(m_{+}+j\right)}}\bigg)
\end{align}

For any $h\in\left\{0,1,\ldots,m_{+}+m_{-}\right\}  ,$ Proposition 1 of
\cite{[AcHamLu22b]} says that for any $\varepsilon_{h}\in\left\{-1,1\right\}
_{+}^{2p_{h}}$ with the unique determined $\left\{\left(l_{s}^{\left(
h\right)},r_{s}^{\left(h\right)}\right)  \right\}_{s=1}^{p_{h}}\in
NCPP\left(2p_{h}\right)  $,%
\begin{equation}\label{arcSin05c}
\prod_{s=1}^{2p_{h}}a^{\varepsilon_{h}(s)} =\prod_{s=1}^{p_{h}}
\omega_{\Lambda+2s-l_{s}^{(h)}}
\end{equation}
and so we proved the thesis.
\end{proof}

\bigskip

In the following discussion, we are interested in calculating the terms given by:
\[
\sum_{\varepsilon\in\Theta_{m_{+}+m_{-}+2p}(m_{+},m_{-})}
a^{\varepsilon(1)}\ldots a^{\varepsilon\left(m_{+}+m_{-}+2p\right)}\Psi_{h}
\]
 for any $h\in\mathbb{N}$.

\bigskip

\begin{proposition}\label{arcSin06} For any $n\in\mathbb{N}^{\ast}$ and $\varepsilon\in\{-1,1\}_{+}^{2n}$ with the unique $\left\{\left(l_s,r_s\right)\right\}_{s=1}^{n}\in NCPP(2n)$, one has
\begin{equation}\label{arcSin06a}
2s-l_s=\sum_{k=r_s}^{2n}\varepsilon(k)\geq1,\ \ \forall s\in\{1,\ldots,n\}
\end{equation}
and
\begin{equation}\label{arcSin06b}
\prod_{s=1}^n\omega_{\Lambda+2s-ls}\Psi_m=\prod_{s=1}^n
\omega_{m+2s-l_s}\Psi_m,\ \ \forall m\in\mathbb{N}
\end{equation}
In particular,
\begin{equation}\label{arcSin06c}
\prod_{s=1}^{n}\omega_{\Lambda+2s-l_{s}}\Psi_{0}=\prod_{s=1}^{n}\omega_{2s-l_{s}}\Psi_{0}=\bigg\langle \Psi_{0},\prod_{s=1}^{n}\omega
_{\Lambda+2s-l_{s}}\Psi_{0}\bigg\rangle \Psi_{0}
\end{equation}
and

\begin{itemize}
\item in the case of $\omega_{j}=\omega_{1}$ for all $j\geq1$ (semi--circle
case)
\begin{equation}\label{arcSin06d}
\prod_{s=1}^n\omega_{\Lambda+2s-l_s}\Psi_m=\omega_{1}^n\Psi_m, \ \ \forall m\geq1
\end{equation}

\item in the case of $\omega_{j}=\omega_{2}$ for all $j\geq2$ (generalized arc--sine case)
\begin{equation}\label{arcSin06e}
\prod_{s=1}^n\omega_{\Lambda+2s-l_s}\Psi_m
=\omega_2^n\Psi_m,\ \ \forall m\geq1
\end{equation}
\end{itemize}
\end{proposition}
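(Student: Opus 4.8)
The plan is to establish \eqref{arcSin06a} first, since \eqref{arcSin06b}–\eqref{arcSin06e} all follow from it by straightforward substitution of the value of $\Lambda$. Recall that for $\varepsilon \in \{-1,1\}_+^{2n}$, being \emph{balanced} (i.e. in $\{-1,1\}_+^{2n}$) means all the partial sums $\sum_{k=j}^{2n}\varepsilon(k)$ counted from the right are $\ge 1$ for $j$ ranging appropriately, and that the associated non-crossing pair partition $\{(l_s,r_s)\}_{s=1}^n \in NCPP(2n)$ pairs each creator-position $l_s$ with the matching annihilator-position $r_s > l_s$, with $\varepsilon(l_s) = +1$ and $\varepsilon(r_s) = -1$. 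The key identity $2s - l_s = \sum_{k=r_s}^{2n}\varepsilon(k)$ is a bookkeeping statement: by the time we read position $l_s$ from the left, exactly $s-1$ pairs have been fully opened-and-closed or... — more precisely, I would argue that the number of indices $k \in \{l_s, l_s+1,\dots,2n\}$ with $\varepsilon(k) = +1$ equals $n - (l_s - 1) + (\text{number of }+\text{'s before }l_s)$; cleaner is to count from $r_s$: among $\{r_s, r_s+1, \dots, 2n\}$, the non-crossing property forces the pairs to either lie entirely inside or entirely outside this tail, so the number of $+$'s in the tail equals the number of $-$'s in the tail plus the "excess" created by $l_s$ being outside while $r_s$ is inside. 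I would make this precise by induction on $n$ or by directly invoking the structure already recorded in Proposition 1 and formula \eqref{arcSin05c} of \cite{[AcHamLu22b]}; indeed \eqref{arcSin05c} states $\prod_{s=1}^{2p_h} a^{\varepsilon_h(s)} = \prod_{s=1}^{p_h}\omega_{\Lambda + 2s - l_s^{(h)}}$, and tracking how the exponent of $a^+$ versus $a$ evolves as the word is normally ordered yields exactly $2s - l_s = \sum_{k=r_s}^{2n}\varepsilon(k)$, while the balancedness condition gives $\ge 1$.

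Granting \eqref{arcSin06a}, formula \eqref{arcSin06b} is immediate: $\Lambda \Psi_m = m\Psi_m$ by \eqref{df-Lambda-om}, and since each factor $\omega_{\Lambda + 2s - l_s}$ acts diagonally on $\Psi_m$ (the index $2s-l_s$ being a fixed integer $\ge 1$ by \eqref{arcSin06a}, so $m + 2s - l_s \ge 1$ and $\omega_{m+2s-l_s}$ is well-defined and nonzero in our cases), we get $\prod_{s=1}^n \omega_{\Lambda+2s-l_s}\Psi_m = \big(\prod_{s=1}^n \omega_{m+2s-l_s}\big)\Psi_m$. Then \eqref{arcSin06c} is the case $m = 0$ combined with the observation that a scalar multiple of $\Psi_0$ equals its own expectation against $\Psi_0$ times $\Psi_0$ (using $\|\Psi_0\| = 1$).

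For the last two bullets: in the semi-circle case $\omega_j = \omega_1$ for all $j \ge 1$, so for $m \ge 1$ every index $m + 2s - l_s \ge 1$ (using \eqref{arcSin06a}) gives $\omega_{m+2s-l_s} = \omega_1$, hence the product over $s = 1,\dots,n$ is $\omega_1^n$, yielding \eqref{arcSin06d}. In the generalized arc-sine case $\omega_j = \omega_2$ for all $j \ge 2$; here I need $m + 2s - l_s \ge 2$ for all $s$, which holds because $m \ge 1$ \emph{and} $2s - l_s \ge 1$ by \eqref{arcSin06a}, so $m + 2s - l_s \ge 2$; therefore each factor is $\omega_2$ and the product is $\omega_2^n$, giving \eqref{arcSin06e}. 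The main obstacle is the proof of the combinatorial identity \eqref{arcSin06a} itself — getting the index bookkeeping exactly right, i.e. verifying that the left-endpoint label $l_s$ produced by the normal-ordering algorithm of \cite{[AcHamLu22b]} satisfies $2s - l_s = \sum_{k=r_s}^{2n}\varepsilon(k)$; everything after that is routine specialization of the Jacobi sequence.
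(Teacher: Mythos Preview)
Your proposal is correct and follows essentially the same approach as the paper: you reduce everything to \eqref{arcSin06a}, derive \eqref{arcSin06b}--\eqref{arcSin06e} from the eigenvalue action of $\Lambda$ together with the bound $2s-l_s\ge 1$ (so $m+2s-l_s\ge 2$ when $m\ge 1$), and for \eqref{arcSin06a} itself you invoke the normal-ordering formula from \cite{[AcHamLu22b]} and the balancedness/NCPP structure for the inequality. The paper does exactly this, citing formula (2.55) of \cite{[AcHamLu22b]} for the equality in \eqref{arcSin06a} and the NCPP property for the inequality; your attempted combinatorial sketch of the identity is unnecessary once you make that citation.
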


\begin{proof}The definition of the number operator gives (\ref{arcSin06b}) and in particular (\ref{arcSin06c}).
Moreover, one gets (\ref{arcSin06d}) (respectively,
(\ref{arcSin06e})) in the case of semi--circle (respectively,
arc--sine) since $2s-l_s\geq1$ (so $m+2s-l_s\geq2$ whenever $m\geq1$) for all $s\in\left\{1,\ldots,n\right\} .$ Finally, in (\ref{arcSin06a}), the equality is nothing else than the formula (2.55) of \cite{[AcHamLu22b]}; the inequality is a direct conclusion of the fact: $\left\{\left(l_s,r_s\right)  \right\}_{s=1}^n$ is the unique non--crossing pair
partition determined by  $\varepsilon\in\{-1,1\}_+^{2n}$.
\end{proof}
\begin{theorem}\label{arcSin08}If $\omega_{j}=\omega_{2}$ for any $j\geq2$ (in particular,
arc--sine case: $\omega_{1}=2\omega_{2}>0;$ semi--circle case: $\omega_{1}=\omega_{2}>0$), for any $m_{\pm}\geq0$ and $p_{0},p_{1},\ldots,p_{m_{+}+m_{-}}\geq0, \varepsilon_{r}\in\left\{-1,1\right\}_{+}^{2p_{r}}$
with $r=0,1,\ldots,m_{+}+m_{-}$, for any $h\in\mathbb{N}$
\begin{align}
&A(m_{+},m_{-},\left\{p_{r}\right\}_{r=0}^{m_{+}+m_{-}},\left\{
\varepsilon_{r}\right\}_{h=0}^{m_{+}+m_{-}})\notag\\
=&\left(a^{+}\right)^{m_{+}}a^{m_{-}}\prod_{r=0}^{m_+}
\left(\prod_{s=1}^{p_r}\omega_{\Lambda+2s-l_{s}^{(r)}+m_++m_--r}
\right)\prod_{r=m_++1}^{m_++m_-}\left(\prod_{s=1}^{p_r}\omega_
{\Lambda+2s-l_{s}^{(r)}+r-m_+-m_-}\right)
\end{align}
In particular,  for any $h\in\mathbb{N}$,
\begin{align}\label{arcSin08a}
& A(m_{+},m_{-},\left\{p_{r}\right\}_{r=0}^{m_{+}+m_{-}},
\left\{\varepsilon_r\right\}_{h=0}^{m_{+}+m_{-}})\Psi_h\notag\\
=& \left(a^{+}\right)^{m_{+}}a^{m_{-}}\Psi_{h}
\begin{cases}
0, & \hbox{if }m_{-}>h\\
\omega_{2}^{p-p_{m_{+}}}\prod_{s=1}^{p_{m_+}}
\omega_{2s-l_s^{(m_+)}} , & \hbox{if }m_{-}=h\\
\omega_{2}^{p}, & \hbox{if }m_{-}<h
\end{cases}
\end{align}
\end{theorem}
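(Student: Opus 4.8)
The plan is to evaluate the operator $A(m_{+},m_{-},\{p_{r}\},\{\varepsilon_{r}\})$ in the normally ordered form provided by Proposition \ref{arcSin05}, using the reindexing identity that pushes each annihilator $a^{+}$ or $a$ through a block of $\omega_{\Lambda+\cdots}$ factors at the cost of a shift in the argument of $\Lambda$. Concretely, since $a^{+}g(\Lambda)=g(\Lambda-1)a^{+}$ and $a\,g(\Lambda)=g(\Lambda+1)a$ for any function $g$, I would move all the creators to the left and all the annihilators to the right of the full product of $\omega$-factors. Each factor $\omega_{\Lambda+2s-l_s^{(r)}}$ sitting between creator number $r$ (counting from the outside) and the rest must be commuted past $m_{+}-r$ creators and then past $m_{-}$ annihilators if $r\le m_{+}$, producing a net shift of $+(m_{+}+m_{-}-r)$; a factor with index $r>m_{+}$ sits to the right of all creators and among the annihilators, and commuting it out to the right past the remaining annihilators yields a shift of $-(m_{+}+m_{-}-r)=r-m_{+}-m_{-}$. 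Collecting these shifts gives exactly the first displayed identity of Theorem \ref{arcSin08}. This step is essentially bookkeeping with the Heisenberg-type commutation relation and the explicit structure of \eqref{arcSin05op}; the main care needed is to count the shift correctly block-by-block, which is why I would organize the product as in the last line of \eqref{arcSin05op}, separating the $m_{+}$ creator-blocks, the middle $\omega$-block, and the $m_{-}$ annihilator-blocks.

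Next I would apply this operator to $\Psi_{h}$. By \eqref{norm-ord-formula}, $(a^{+})^{m_{+}}a^{m_{-}}\Psi_{h}=0$ whenever $m_{-}>h$, which gives the first case immediately. When $m_{-}<h$, the vector on which the $\omega_{\Lambda+\cdots}$ product acts before the $(a^{+})^{m_{+}}a^{m_{-}}$ is applied — reading the normally ordered form correctly, the $\Lambda$-functions act on $\Psi_{h}$ first — has grade $h$, and more importantly every shifted index $\Lambda+2s-l_s^{(r)}+(\text{shift})$ evaluated at grade $h$ is $\ge 2$: indeed $2s-l_s^{(r)}\ge 1$ by \eqref{arcSin06a}, the positive shifts only increase it, and for the negative-shift blocks ($r>m_{+}$) the relevant grade is $h$ minus at most $m_{-}-1$ at the moment that factor acts, combined with $2s-l_s^{(r)}\ge 1$ and $h>m_{-}$, keeping the index $\ge 2$. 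Since $\omega_{j}=\omega_{2}$ for all $j\ge 2$ by hypothesis, each of the $p_{0}+\cdots+p_{m_{+}+m_{-}}=p$ factors contributes a single $\omega_{2}$, giving the scalar $\omega_{2}^{p}$ — the third case.

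The boundary case $m_{-}=h$ is the one requiring genuine attention and is where I expect the only real subtlety. Here, after the $m_{-}=h$ annihilators act, the relevant vector is $\Psi_{0}$, and by \eqref{arcSin06c} the last $\omega$-block $\prod_{s=1}^{p_{m_{+}}}\omega_{\Lambda+2s-l_s^{(m_{+})}}$ (the one adjacent to the junction between creators and annihilators) acts on a grade-$0$ vector, so some of its indices $2s-l_s^{(m_{+})}$ can equal $1$, forcing the convention $\omega_{1}=2\omega_{2}$ (arc--sine) or $\omega_{1}=\omega_{2}$ (semi--circle) into play and preventing us from simply writing $\omega_{2}^{p_{m_{+}}}$. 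I would isolate this block, write its contribution as $\prod_{s=1}^{p_{m_{+}}}\omega_{2s-l_s^{(m_{+})}}$, and observe that all other $p-p_{m_{+}}$ factors still act at grades $\ge 2$ (the creator-side blocks act on grade $\ge h-m_{-}+\cdots$, but once $m_{-}=h$ they act on grades built up from $0$ by creators, hence $\ge 1$; a short check using $2s-l_s^{(r)}\ge 1$ shows the combined index is $\ge 2$ for $r<m_{+}$, since at least one creator shift is present), so they each give $\omega_{2}$ and contribute $\omega_{2}^{p-p_{m_{+}}}$. The careful point is verifying that among the non-junction blocks no index collapses to $1$; this follows from \eqref{arcSin06a} together with the fact that every such block is separated from the vacuum by at least one creator or annihilator, so its argument of $\Lambda$ is evaluated at grade $\ge 1$, making the total index $\ge 2$. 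Assembling the three cases yields \eqref{arcSin08a}, and the first identity of the theorem is the grade-free operator statement obtained before specializing.
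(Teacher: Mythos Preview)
Your approach is the same as the paper's---commute the $\omega_{\Lambda+\cdots}$ blocks past the CAP operators using the relations $g(\Lambda)a^{+}=a^{+}g(\Lambda+1)$ and $g(\Lambda)a=ag(\Lambda-1)$, then evaluate on $\Psi_h$ and invoke Proposition~\ref{arcSin06}. The case analysis for \eqref{arcSin08a} is correct: you rightly single out the block $r=m_{+}$ as the only one that can meet the vacuum when $m_{-}=h$, and your check that every other block sees grade $\ge 1$ (hence index $\ge 2$, hence value $\omega_{2}$) is sound.

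There is, however, an inconsistency in your derivation of the first (operator) identity. For the blocks with $r>m_{+}$ you correctly say that moving $g(\Lambda)$ rightward past an annihilator costs a shift of $-1$, giving net shift $r-m_{+}-m_{-}$. But for $r\le m_{+}$ you claim a net shift of $+(m_{+}+m_{-}-r)$, i.e.\ you are implicitly counting each of the $m_{-}$ annihilators as $+1$. With your own commutation rule $g(\Lambda)a=ag(\Lambda-1)$ the correct net shift is $(m_{+}-r)-m_{-}=m_{+}-m_{-}-r$. A quick check with $m_{+}=0$, $m_{-}=1$, $p_{0}=1$, $\varepsilon_{0}=(-,+)$ shows the difference: $\omega_{\Lambda+1}a=a\,\omega_{\Lambda}$, not $a\,\omega_{\Lambda+2}$, and on $\Psi_{1}$ in the arc--sine case these give $\omega_{1}^{2}\Psi_{0}$ versus $\omega_{2}\omega_{1}\Psi_{0}$, which differ. (The displayed operator identity in the theorem statement carries the same sign, so you have faithfully reproduced it; but your justification for that sign is the inconsistency just described.) Fortunately your proof of \eqref{arcSin08a} bypasses this, since there you argue directly from the original ordered product and the grade at which each block acts; that part stands as written.
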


\begin{proof}
It is proved by combining Proposition \ref{arcSin05} and Proposition \ref{arcSin06}.  %
In particular,  for any $h\in\mathbb{N}$
\begin{align*}
& A(m_{+},m_{-},\left\{p_{r}\right\}_{r=0}^{m_{+}+m_{-}},  \left\{\varepsilon_{r}\right\}_{h=0}^{m_{+}+m_{-}}) \Psi_{h}\\
=& \left(a^{+}\right)^{m_{+}}a^{m_{-}}\Psi_{h}\cdot
\begin{cases}
0, & \hbox{if }m_{-}>h\\
\left\langle \Psi_{0},a^{\varepsilon_{m_{+}}(1)} \ldots a^{\varepsilon_{m_{+}}\left(2p_{m_{+}}\right)}\Psi_{0}
\right\rangle\omega_{2}^{p_{0}+p_{1}+\ldots+p_{m_{+}+m_{-}}-p_{m_{+}}}
 , & \hbox{if }m_{-}=h\\
\omega_{2}^{p_{0}+p_{1}+\ldots+p_{m_{+}+m_{-}}}, & \hbox{if }m_{-}<h
\end{cases}
\end{align*}

\end{proof}

Summing up above results, we have

\begin{theorem}\label{theorem-summing}
If $\omega_{j}=\omega_{2}$ for any $j\geq2$ (in particular,
arc--sine case: $\omega_{1}=2\omega_{2}>0;$ semi--circle case: $\omega_1=\omega_2>0$), for any $m_{\pm},p,h\geq0,$ for any such $p_{0}%
,p_{1},\ldots,p_{m_{+}+m_{-}}\geq0$
\begin{align*}
& \sum_{\substack{\varepsilon_{r}\in\left\{-1,1\right\}_{+}
^{2p_{r}}\\ \forall r=0,1,\ldots,m_{+}+m_{-}}}A(m_{+},m_{-},
\left\{p_{r}\right\}_{r=0}^{m_{+}+m_{-}},\left\{\varepsilon_{r}\right\}_{h=0}^{m_{+}+m_{-}})\Psi _{h}\\
=&\left(a^{+}\right)^{m_{+}}a^{m_{-}}\Psi_{h}
\cdot\omega_{2}^{p-p_{m_{+}}}\cdot \bigg(
\prod_{\substack{0\leq r\leq m_{+}+m_{-}\\r\neq m_{+}}}
C_{p_{r}}\bigg)
\begin{cases}
0, & \hbox{if }m_{-}>h\\
\displaystyle\sum_{\varepsilon_{m_{+}}\in\left\{ -1,1\right\}_{+}^{2p_{m_{+}}}}\prod_{s=1}^{p_{m_+}}
\omega_{2s-l_{s}^{(m_+)}}  , & \hbox{if }m_{-}=h\\
C_{p_{m_{+}}}\omega_{2}^{p_{m_{+}}}, & \hbox{if }m_{-}<h
\end{cases}
\end{align*}
where, $C_n$ is the n-th Catalan number for all $n\in \mathbb{N}$.
\end{theorem}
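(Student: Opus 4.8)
The plan is to start from the formula produced in Theorem \ref{arcSin08}, which already evaluates the single operator $A(m_{+},m_{-},\{p_r\},\{\varepsilon_r\})\Psi_h$ as $\left(a^{+}\right)^{m_{+}}a^{m_{-}}\Psi_h$ times a case expression depending on whether $m_-<h$, $m_-=h$, or $m_->h$. What remains is purely a summation over the balanced sequences $\varepsilon_r\in\{-1,1\}_+^{2p_r}$, $r=0,\dots,m_++m_-$, and this summation factorizes over the index $r$ because in \eqref{arcSin08a} the $r$-dependence enters only through the separate blocks. So first I would write $\sum_{\varepsilon_r,\ \forall r} = \prod_{r=0}^{m_++m_-}\sum_{\varepsilon_r\in\{-1,1\}_+^{2p_r}}$ and observe that for each fixed $r\neq m_+$ the corresponding factor in \eqref{arcSin08a} is just $\omega_2^{p_r}$ (in the cases $m_-=h$ and $m_-<h$; in the case $m_->h$ the whole product is $0$ and there is nothing to prove), so summing over $\varepsilon_r$ multiplies by the cardinality $\#\{-1,1\}_+^{2p_r}$.

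The key arithmetic input is that this cardinality equals the Catalan number $C_{p_r}$: the set $\{-1,1\}_+^{2p_r}$ of balanced sequences of length $2p_r$ is in bijection with $NCPP(2p_r)$, the non-crossing pair partitions of $2p_r$ points (this bijection is exactly the one invoked in Proposition \ref{arcSin05} and Proposition \ref{arcSin06}), and $|NCPP(2n)| = C_n$. I would state this bijection explicitly and cite it as standard. Hence $\sum_{\varepsilon_r\in\{-1,1\}_+^{2p_r}}\omega_2^{p_r} = C_{p_r}\omega_2^{p_r}$ for $r\neq m_+$, which accounts for the factor $\big(\prod_{0\le r\le m_++m_-,\ r\neq m_+}C_{p_r}\big)$ together with $\omega_2^{\sum_{r\neq m_+}p_r} = \omega_2^{p-p_{m_+}}$ in the claimed formula.

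The remaining block $r=m_+$ is handled differently according to the three cases, and this is where one must be careful rather than where there is a genuine obstacle. When $m_-<h$, the $r=m_+$ factor in \eqref{arcSin08a} is also $\omega_2^{p_{m_+}}$, so summing over $\varepsilon_{m_+}$ gives $C_{p_{m_+}}\omega_2^{p_{m_+}}$, producing the third case of the thesis. When $m_-=h$, the $r=m_+$ factor is instead $\prod_{s=1}^{p_{m_+}}\omega_{2s-l_s^{(m_+)}}$, which depends nontrivially on $\varepsilon_{m_+}$ through its associated non-crossing pair partition $\{(l_s^{(m_+)},r_s^{(m_+)})\}$, so the sum $\sum_{\varepsilon_{m_+}\in\{-1,1\}_+^{2p_{m_+}}}\prod_{s=1}^{p_{m_+}}\omega_{2s-l_s^{(m_+)}}$ cannot be simplified further at this level of generality and is left as written — this matches precisely the second case of the thesis. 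When $m_->h$, the factor $\left(a^{+}\right)^{m_{+}}a^{m_{-}}\Psi_h$ already vanishes by \eqref{arcSin07c}, giving the first case. Assembling the factorized product over all $r$ then yields exactly the displayed identity; the only subtle point worth a remark is the role of the inequality $2s-l_s^{(m_+)}\ge1$ from \eqref{arcSin06a}, which guarantees that every index appearing in the surviving products is $\ge 1$ (indeed $\ge 2$ after the shift when $m\ge1$), so that the substitution $\omega_j=\omega_2$ for $j\ge2$ is legitimately applied in the $r\ne m_+$ blocks.
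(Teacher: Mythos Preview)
Your proposal is correct and follows exactly the route the paper intends: the paper does not even spell out a proof for this theorem, merely writing ``Summing up above results, we have'' before the statement, so the content is precisely the factorized summation of \eqref{arcSin08a} over the independent $\varepsilon_r$'s together with the identification $|\{-1,1\}_+^{2p_r}|=C_{p_r}$. Your final remark about \eqref{arcSin06a} is accurate but really belongs to the justification of Theorem \ref{arcSin08} rather than to the present step, since once \eqref{arcSin08a} is granted the factors for $r\neq m_+$ are already the constants $\omega_2^{p_r}$ and no further use of $\omega_j=\omega_2$ is needed.
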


In the following corollary, we apply this theorem to our special cases: the arc-sine case and the semi-circle case.

\begin{corollary}\label{arcSin09}
Assume that $\omega_{j}=\omega_{2}$ for any $j\geq2$. Then, for any $m_{\pm},p,h\geq0$ and any such $p_{0}
,p_{1},\ldots,p_{m_{+}+m_{-}}\geq0$ and $p_{0}
+p_{1}+\ldots+p_{m_{+}+m_{-}}=p$
\begin{align*}
&\sum_{\varepsilon_{r}\in\{-1,1\}_{+}^{2p_{r}},\forall
r=0,1,\ldots,m_{+}+m_{-}}A(m_{+},m_{-},\{p_r\}_{r=0}
^{m_{+}+m_{-}},\{\varepsilon_r\}_{h=0}^{m_{+}+m_{-}})
\Psi_{h}\\
=&\left(a^{+}\right)^{m_{+}}a^{m_{-}}\Psi_{h}\cdot \omega_{2}^{p}\cdot \bigg(\prod\limits_{\substack{0\leq r\leq m_{+}+m_{-}}}C_{p_{r}}\bigg)  \cdot
\begin{cases}
0, & \hbox{if }m_{-}>h\\
\frac{\mathbf{E}\left(\xi^{2p_{m_+}}\right)}{\omega_2^{p_{m_{+}}} C_{p_{m_{+}}}}  , & \hbox{if }m_{-}=h\\
 1, & \hbox{if }m_{-}<h
\end{cases}
\end{align*}
In particular,
\begin{itemize}
\item In the arc--sine case (we assume $\omega_{1}=2\omega_{j}$):
\begin{align*}
&\sum_{\varepsilon_{r}\in\left\{-1,1\right\}_{+}^{2p_{r}},\forall r=0,1,\ldots,m_{+}+m_{-}}A(m_{+},m_{-},\left \{p_{r}\right\}_{r=0}^{m_{+}+m_{-}},\left\{\varepsilon
_{r}\right\}_{h=0}^{m_{+}+m_{-}})\Psi_{h}\\
=&\omega_{2}^{p}\left(a^{+}\right)^{m_{+}}a^{m_{-}}\Psi
_{h}\bigg(\prod_{\substack{0\leq r\leq m_{+}+m_{-}}}
C_{p_{r}}\bigg)
\begin{cases}
0, & \hbox{if }m_{-}>h\\
p_{m_+}+1  , & \hbox{if }m_{-}=h\\
 1, & \hbox{if }m_{-}<h
\end{cases}
\end{align*}
\item In the  semi--circle case (we assume $\omega_{1}=\omega_{j}$):
\begin{align*}
&\sum_{\varepsilon_{r}\in\left\{-1,1\right\}_{+}^{2p_{r}},\forall
r=0,1,\ldots,m_{+}+m_{-}}A(m_{+},m_{-},\left\{p_{r}\right\}_{r=0}
^{m_{+}+m_{-}},\left\{\varepsilon_{r}\right\}_{h=0}^{m_{+}+m_{-}})\Psi_{h}\\
=&\omega_{2}^{p}\left(a^{+}\right)^{m_{+}}a^{m_{-}}\Psi_{h}\  \left(\prod_{\substack{0\leq r\leq m_{+}+m_{-}}}
C_{p_{r}}\right)
\begin{cases}0, & \hbox{if }m_{-}>h\\ 1, & \hbox{if }m_{-}\le h
\end{cases}
\end{align*}
\end{itemize}
\end{corollary}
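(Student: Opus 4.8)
The plan is to obtain Corollary~\ref{arcSin09} as a direct repackaging of Theorem~\ref{theorem-summing}, the only genuine work being the identification of the $m_-=h$ branch. In the branch $m_->h$ both sides are $0$. In the branch $m_-<h$ the value supplied by Theorem~\ref{theorem-summing} is
\[
\left(a^{+}\right)^{m_{+}}a^{m_{-}}\Psi_{h}\cdot\omega_{2}^{p-p_{m_{+}}}\Big(\prod_{\substack{0\le r\le m_{+}+m_{-}\\ r\neq m_{+}}}C_{p_{r}}\Big)\cdot C_{p_{m_{+}}}\,\omega_{2}^{p_{m_{+}}}
=\left(a^{+}\right)^{m_{+}}a^{m_{-}}\Psi_{h}\cdot\omega_{2}^{p}\prod_{0\le r\le m_{+}+m_{-}}C_{p_{r}},
\]
which is exactly the claimed ``$1$''. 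So everything reduces to showing that the $\varepsilon_{m_{+}}$--sum appearing in the $m_-=h$ branch of Theorem~\ref{theorem-summing} equals the $2p_{m_{+}}$--th moment of $\mu$,
\[
\sum_{\varepsilon_{m_{+}}\in\{-1,1\}_{+}^{2p_{m_{+}}}}\ \prod_{s=1}^{p_{m_{+}}}\omega_{2s-l_{s}^{(m_{+})}}=\mathbf{E}\big(\xi^{2p_{m_{+}}}\big);
\]
granting this, multiplying and dividing the prefactor $\omega_{2}^{p-p_{m_{+}}}\prod_{r\neq m_{+}}C_{p_{r}}$ by $\omega_{2}^{p_{m_{+}}}C_{p_{m_{+}}}$ turns the branch into $\omega_{2}^{p}\big(\prod_{r}C_{p_{r}}\big)\cdot\mathbf{E}(\xi^{2p_{m_{+}}})/(\omega_{2}^{p_{m_{+}}}C_{p_{m_{+}}})$, as wanted.

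For the moment identity I would invoke the quantum decomposition $X=a+a^{+}$ of \eqref{q-dec-X-id}: since $\xi$ is distributed according to the vacuum spectral measure of $X$,
\[
\mathbf{E}\big(\xi^{2n}\big)=\big\langle\Psi_{0},X^{2n}\Psi_{0}\big\rangle=\sum_{\varepsilon\in\{-1,1\}^{2n}}\big\langle\Psi_{0},a^{\varepsilon(1)}\cdots a^{\varepsilon(2n)}\Psi_{0}\big\rangle.
\]
A word $a^{\varepsilon(1)}\cdots a^{\varepsilon(2n)}$ contributes to this vacuum expectation only if it is balanced, i.e.\ $\varepsilon\in\{-1,1\}_{+}^{2n}$, and for such $\varepsilon$ formula \eqref{arcSin05c} (Proposition~1 of \cite{[AcHamLu22b]}) rewrites it as $\prod_{s=1}^{n}\omega_{\Lambda+2s-l_{s}}$, which on $\Psi_{0}$ equals $\prod_{s=1}^{n}\omega_{2s-l_{s}}\,\Psi_{0}$ by \eqref{arcSin06c}; the inequality $2s-l_{s}\ge1$ of \eqref{arcSin06a} shows these indices are all $\ge1$, so the convention $\omega_{0}=0$ never interferes. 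Summing over $\varepsilon\in\{-1,1\}_{+}^{2n}$ yields precisely the left-hand side of the moment identity with $n=p_{m_{+}}$.

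It remains to evaluate $\mathbf{E}(\xi^{2n})$ in the two distinguished cases. Using the densities recalled in Sections~\ref{sec:The-arc--sine-case} and \ref{sec:The-sem-c-case}, the substitution $x=2\sqrt{\omega_{2}}\cos\theta$, and the classical integrals $\int_{0}^{\pi}\cos^{2n}\theta\,d\theta=\pi\binom{2n}{n}4^{-n}$ and $\int_{0}^{\pi}\cos^{2n}\theta\sin^{2}\theta\,d\theta=\pi C_{n}(2\cdot4^{n})^{-1}$, one obtains $\mathbf{E}(\xi^{2n})=\omega_{2}^{n}\binom{2n}{n}$ when $\omega_{1}=2\omega_{2}$ (arc--sine) and $\mathbf{E}(\xi^{2n})=\omega_{2}^{n}C_{n}$ when $\omega_{1}=\omega_{2}$ (semi--circle). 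Since $\binom{2n}{n}=(n+1)C_{n}$, the $m_-=h$ branch becomes $p_{m_{+}}+1$ in the arc--sine case and $1$ in the semi--circle case; in the latter the $m_-=h$ and $m_-<h$ branches coincide, which is why they merge into ``$1$ if $m_-\le h$''. For the semi--circle case one can even bypass the integral: there $\omega_{j}=\omega_{2}$ for every $j\ge1$, so $2s-l_{s}\ge1$ forces every factor $\omega_{2s-l_{s}^{(m_{+})}}$ to equal $\omega_{2}$ and the $\varepsilon_{m_{+}}$--sum is $C_{p_{m_{+}}}\omega_{2}^{p_{m_{+}}}$ because $\{-1,1\}_{+}^{2p_{m_{+}}}$ has cardinality $C_{p_{m_{+}}}$. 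I do not anticipate a real obstacle: the only content is the moment identity, which is just an unwinding of the inverse--normal--order formulas already in place, and the two closed forms are standard Beta-integral evaluations.
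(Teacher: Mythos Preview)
Your proposal is correct and follows essentially the same approach as the paper: both reduce to Theorem~\ref{theorem-summing} and then identify the $m_-=h$ branch via the moment identity $\sum_{\varepsilon\in\{-1,1\}_{+}^{2n}}\prod_{s}\omega_{2s-l_s}=\mathbf{E}(\xi^{2n})$, which the paper also derives in Appendix~\ref{sec:Vac-distr-arcsin} from the quantum decomposition exactly as you do. The only minor difference is in the evaluation of the specific moment values: the paper obtains $\omega_2^n\binom{2n}{n}$ and $\omega_2^n C_n$ through a combinatorial decomposition of $\{-1,1\}_{+}^{2n}$ (Theorem~\ref{as21914-01}), whereas you integrate the known densities directly via the substitution $x=2\sqrt{\omega_2}\cos\theta$; both routes are standard and your computation is accurate.
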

\begin{proof}
	This corollary follows directly from Theorem \ref{theorem-summing} and the following fact that is proved in the Appendix \ref{sec:Vac-distr-arcsin}:
	\begin{align*} 
		\sum_{\varepsilon_{m_{+}}\in\left\{ -1,1\right\}_{+} ^{2p_{m_{+}}}}\prod_{s=1}^{p_{m_+}}\omega_{2s-l_{s}^{(m_+)}}
		=&\mathbf{E}\left(\xi^{2p_{m_{+}}}\right)\\	
		=&\begin{cases}
			\omega_2^{p_{m_{+}}}C_{p_{m_{+}}},& 	\hbox{if }	\xi\hbox{ has the variance }\omega_{1}=\omega_{j}, j\geq2\\
			\omega_2^{p_{m_{+}}}(p_{m_{+}}+1)C_{p_{m_{+}}} ,& 	\hbox{if }	\xi\hbox{ has the variance }\omega_{1}=2\omega_{j}, j\geq2
		\end{cases}	
	\end{align*}
\end{proof}
We proceed now to the proof of the main result of this paragraph.
\begin{proof}[Proof of Theorem \ref{unification}]
	
Let $\omega_{j}=\omega_{2}$ for any $j\geq2$ (in particular,
arc--sine case: $\omega_{1}=2\omega_{2}>0;$ semi--circle case: $\omega_{1}=\omega_{2}>0$). According to the above discussion, we have
\begin{align*}
	e^{itX}\Psi_{h} =&  \sum_{n=0}^{\infty}\frac{\left(it\right)^{n}}{n!}%
	\sum_{\varepsilon\in\left\{-1,1\right\}^{n}}a^{\varepsilon\left(
		1\right)}\ldots a^{\varepsilon\left(n\right)}\Psi_{h}\label{arcSin07d}\\
	=&  \sum_{p,m_{+},m_{-}\geq0}\frac{\left(it\right)^{m_{+}+m_{-}+2p}%
	}{\left(m_{+}+m_{-}+2p\right)  !}\sum_{\varepsilon\in\Theta_{m_{+}+m_{-}%
			+2p}(m_{+},m_{-})}a^{\varepsilon\left(1\right)}\ldots a^{\varepsilon
		\left(m_{+}+m_{-}+2p\right)}\Psi_{h}\nonumber
	\\=&  \sum_{p,m_{+},m_{-}\geq0}\frac{\left(it\right)^{m_{+}+m_{-}+2p}%
	}{\left(m_{+}+m_{-}+2p\right)  !}\sum_{\substack{p_{0},p_{1},\ldots,p_{m_{+}+m_{-}}\geq0\\p_{0}+p_{1}%
			+\ldots+p_{m_{+}+m_{-}}=p}} {\displaystyle\prod\limits_{\substack{0\leq r\leq m_{+}+m_{-}}}}
	C_{p_{r}}
	\\& \omega_{2}^{p}\left(a^{+}\right)^{m_{+}}a^{m_{-}}\Psi_{h}\  \left\{
	\begin{array}
		[c]{ll}%
		0, & \hbox{if }m_{-}>h\\
	1, & \hbox{if }m_{-}<h\\
	\frac{\mathbf{E}\left(\xi^{2p_{m_{+}}}\right)}{\omega_{2}^{p_{m_{+}}}C_{p_{m_{+}}}}  , & \hbox{if }m_{-}=h
	\end{array}
	\right.
\end{align*}
Similarly, we have
\begin{align*}
	e^{itP}\Psi_{h} =&\sum_{m\ge 0}\sum_{\varepsilon\in\{-1,1\}^{m}}\frac{t^{m}}{m!}
	(-1)^{\nu_+(\varepsilon)}a^{\varepsilon(1)}\cdots a^{\varepsilon( m)}\Psi_{h}
	\\=&\sum_{p,m_{+},m_{-}\geq0}\frac{t^{m_{+}+m_{-}+2p}%
	}{\left(m_{+}+m_{-}+2p\right)  !}\sum_{\varepsilon\in\Theta_{m_{+}+m_{-}%
			+2p}(m_{+},m_{-})}(-1)^{\nu_+(\varepsilon)}a^{\varepsilon\left(1\right)}\ldots a^{\varepsilon
		\left(m_{+}+m_{-}+2p\right)}\Psi_{h}\nonumber
	\end{align*}
where $\nu_+$ takes value $p+m_+$ on $\Theta_{m_++m_-+2p}(m_+,m_-)$, for any $m_+,m_-,p\ge 0$. Therefore
\begin{multline*}
	e^{itP}\Psi_{h} = \sum_{p,m_{+},m_{-}\geq0}\frac{\left( -1\right)^{p+m_+}  t^{m_{+}+m_{-}+2p}%
	}{\left(m_{+}+m_{-}+2p\right)  !}\sum_{\substack{p_{0},p_{1},\ldots,p_{m_{+}+m_{-}}\geq0\\p_{0}+p_{1}%
			+\ldots+p_{m_{+}+m_{-}}=p}} {\displaystyle\prod\limits_{\substack{0\leq r\leq m_{+}+m_{-}}}}
	C_{p_{r}}
	\\ \omega_{2}^{p}\left(a^{+}\right)^{m_{+}}a^{m_{-}}\Psi_{h}\  \left\{
	\begin{array}
		[c]{ll}%
		0, & \hbox{if }m_{-}>h\\
		1, & \hbox{if }m_{-}<h\\
		\frac{\mathbf{E}\left(\xi^{2p_{m_{+}}}\right)}{\omega_{2}^{p_{m_{+}}}C_{p_{m_{+}}}}  , & \hbox{if }m_{-}=h
	\end{array}
	\right.
\end{multline*}
\end{proof}

\subsection{ The monotone evolutions $e^{itX}$ and $e^{itP}$ }

We are interested here in the arc--sine case, which means that we set
$\omega_{1}=2\omega_{j}=2$ for any $j\geq2$.
In this case, and {\bf for all that follows}, the monotone position and momentum operators $X_{as}$ and $P_{as}$ will be referred to simply as $X$ and $P$.
Let's start with the following technical result.
\begin{lemma}\label{Catalan}
We have
\begin{align*}
\sum_{\substack{p_{0},p_{1},\ldots,p_{k}\geq0\\
p_{0}+p_{1}+\ldots+p_{k}=p}}\bigg(\displaystyle \prod_{\substack{0\leq r\leq k}}C_{p_{r}}\bigg) =\frac{k+1}{2p+k+1}\binom{2p+k+1}{p}
\end{align*}
and for any $i\in\{1,\ldots,k\}$
\begin{align*}
\sum_{\substack{p_{0},p_{1},\ldots,p_{k}\geq0
\\p_{0}+p_{1}+\ldots+p_{k}=p}} (p_{i}+1)
\bigg(\displaystyle\prod_{\substack{0\leq r\leq k}}
C_{p_{r}}\bigg)  =\binom{2p+k}{p}
\end{align*}
\end{lemma}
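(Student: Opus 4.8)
The plan is to recognize both sums as instances of the well-known convolution identity for the generating function of the Catalan numbers. Recall that $C(x):=\sum_{n\ge0}C_nx^n=\frac{1-\sqrt{1-4x}}{2x}$ satisfies $C(x)=1+xC(x)^2$, and more generally, by Lagrange inversion, the powers of $C$ have the explicit coefficient expansion
\begin{equation*}
C(x)^{m}=\sum_{p\ge0}\frac{m}{2p+m}\binom{2p+m}{p}x^{p},\qquad m\ge1.
\end{equation*}
The first sum in the statement is exactly the coefficient of $x^p$ in $C(x)^{k+1}$, since the $(k+1)$-fold Cauchy product of $\sum C_{p_r}x^{p_r}$ gives $\sum_p\big(\sum_{p_0+\cdots+p_k=p}\prod_r C_{p_r}\big)x^p$. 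Applying the displayed power formula with $m=k+1$ yields $\frac{k+1}{2p+k+1}\binom{2p+k+1}{p}$, which is the claimed right-hand side. I would either cite this power-of-Catalan formula (it is standard, e.g.\ via Lagrange inversion or the cycle lemma) or include a one-line derivation from $C=1+xC^2$.

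For the second identity, the idea is to absorb the extra factor $p_i+1$ into a derivative. Since $i\in\{1,\dots,k\}$ is a fixed index, by symmetry of the summation in $p_0,\dots,p_k$ the sum $\sum_{p_0+\cdots+p_k=p}(p_i+1)\prod_r C_{p_r}$ equals the coefficient of $x^p$ in $B(x)\,C(x)^{k}$, where $B(x):=\sum_{n\ge0}(n+1)C_n x^n=\frac{d}{dx}\big(xC(x)\big)=C(x)+xC'(x)$ accounts for the distinguished factor and $C(x)^k$ accounts for the remaining $k$ factors. Using $C=1+xC^2$ one computes $xC'=\frac{C-1}{2C-C^2\cdot\,?}$ — more cleanly, differentiate $C=1+xC^2$ to get $C'=C^2+2xCC'$, hence $C'=\frac{C^2}{1-2xC}$, and since $1-2xC=1-(C-1)\cdot\frac{2xC}{C-1}$... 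I would instead use the cleaner route: from $C=1+xC^2$ one gets $xC^2=C-1$, so $B=C+xC'=C+\frac{xC^2}{1-2xC}=C+\frac{C-1}{1-2xC}$, and $1-2xC=1-2(C-1)/C=(2-C)/C$, giving $B=C+\frac{C(C-1)}{2-C}=\frac{C(2-C)+C(C-1)}{2-C}=\frac{C}{2-C}$. Then $B(x)C(x)^k=\frac{C(x)^{k+1}}{2-C(x)}$, and one checks $\frac{1}{2-C}=\frac{d}{dx}$ of something or, more simply, that $\frac{C^{k+1}}{2-C}$ has $x^p$-coefficient $\binom{2p+k}{p}$ — this last coefficient extraction is again Lagrange inversion with the weight $\phi(C)=\frac{C^{k+1}}{2-C}$.

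The main obstacle is the coefficient extraction in the second part: verifying that $[x^p]\frac{C(x)^{k+1}}{2-C(x)}=\binom{2p+k}{p}$. I expect to handle this by Lagrange–Bürmann inversion: writing $C(x)=\phi^{\langle-1\rangle}$ with the functional equation $x=\frac{C-1}{C^2}$ (i.e.\ $C=1+xC^2$), one has $[x^p]H(C(x))=\frac1p[w^{p-1}]\big(H'(w)\,(w^2/(w-1))^{?}\big)$ in the appropriate form; a cleaner alternative is to verify the identity by showing both sides satisfy the same recursion in $p$ (the left side via the functional equation for $C$, giving a Pascal-type recursion $a_p=a_{p-1}\cdot\frac{(2p+k)(2p+k-1)}{p(p+k)}$-type relation that $\binom{2p+k}{p}$ also satisfies) together with the base case $p=0$. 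I would present whichever of these is shortest; the generating-function bookkeeping is routine once the correct kernel $\frac{C}{2-C}$ for the weighted sum has been identified, which is the one genuinely clever step.
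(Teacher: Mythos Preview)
Your approach is the same as the paper's---generating functions---and your treatment of the first identity is essentially identical: it is the coefficient of $x^p$ in $C(x)^{k+1}$, and the paper simply cites this (their \cite[Proposition~3]{[AcHamLu22b]}).

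For the second identity you set up the right product $B(x)C(x)^k$ with $B(x)=\sum_{n\ge0}(n+1)C_nx^n$, but you then take a detour through $B=\dfrac{C}{2-C}$ and get stuck at the coefficient extraction $[x^p]\dfrac{C^{k+1}}{2-C}$, leaving it to a to-be-specified Lagrange--B\"urmann or recursion argument. This is the one genuine gap: you never actually complete that step. The paper avoids the difficulty entirely by noticing the one-line identity
\[
(n+1)C_n=\binom{2n}{n},
\]
so that $B(x)=\sum_{n\ge0}\binom{2n}{n}x^n=(1-4x)^{-1/2}$ in closed form. Then $B(x)C(x)^k=\dfrac{2^k}{\sqrt{1-4x}\,(1+\sqrt{1-4x})^k}$, and the paper cites a known expansion of this as $\sum_{n\ge0}\binom{2n+k}{n}x^n$. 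Your formula $B=\dfrac{C}{2-C}$ is of course consistent with this (both equal $(1-4x)^{-1/2}$), but working in terms of $C$ alone obscures the central-binomial structure and is what makes your final extraction look harder than it is. Replace your derivative computation by the observation $(n+1)C_n=\binom{2n}{n}$ and the proof closes immediately.
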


\begin{proof}
The first identity is proved in \cite[Proposition 3]{[AcHamLu22b]}. For the second identity,
consider the following generating functions:
\begin{equation*}
G_1(z)= \sum_{n\geq 0} (n+1)C_n z^n = \sum_{n\geq 0}\binom{2n}{n}z^n=\frac{1}{\sqrt{1-4z}}
	\end{equation*}
	
\begin{equation*}
G_2(z)= \sum_{n\geq 0} C_n z^n = \sum_{n\geq 0}\binom{2n}{n}\frac{z^n}{n+1}=\frac{2}{1+\sqrt{1-4z}}
\end{equation*}
	Then number
	\begin{align*}
	\sum_{\substack{p_{0},p_{1},\ldots,p_{k}\geq0\\p_{0}+p_{1}%
			+\ldots+p_{k}=p}}\left(
	{\displaystyle\prod\limits_{\substack{0\leq r\leq k}}}
	C_{p_{r}}\right) (p_{i}+1)
	\end{align*}
	is interpreted as the coefficient of $z^p$ in the product $G_1(z)G_2(z)^{k}$ which has the following series expansion (see, e.g., \cite[proof of corollary 3.2]{NiHamHmi12}):
	
\begin{equation*}
G_1(z)G_2(z)^{k}=\frac{2^{k}}{\sqrt{1-4z}(1+\sqrt{1-4z})^{k}}=\sum_{n \geq 0}
\binom{2n+k}{n}z^n
\end{equation*}
\end{proof}

\begin{proposition}\label{arc--sine_case}
	We have
\begin{align*}
&\sum_{\varepsilon\in\Theta_{m_{+}+m_{-}+2p}(m_{+},m_{-})}a^{\varepsilon(1)}\ldots a^{\varepsilon\left(m_{+}+m_{-}+2p\right)}T_{h}\\
& =\left(a^{+}\right)^{m_{+}}a^{m_{-}}T_{h}\
\begin{cases} 0, & \hbox{if }m_{-}>h\\
\binom{2p+m_{+}+m_{-}}{p}, & \hbox{if }m_{-}=h\\
\frac{m_{+}+m_{-}+1}{2p+m_{+}+m_{-}+1}\binom{2p+m_{+} +m_{-}+1}{p} , & \hbox{if }m_{-}<h   \end{cases}
\end{align*}
\end{proposition}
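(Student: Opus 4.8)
The plan is to specialize the general computation of Theorem~\ref{theorem-summing} (equivalently, the arc--sine case of Corollary~\ref{arcSin09}) to the vectors $T_h$, and then sum over the admissible $(p_0,\dots,p_{m_++m_-})$ using the combinatorial identities of Lemma~\ref{Catalan}. First I would recall that in the present arc--sine normalization $\omega_1=2$, $\omega_j=\omega_2=1$ for $j\ge 2$, so the $\mu_{as}$--orthogonal polynomials $\Psi_h$ are exactly the monic Chebyshev polynomials $T_h$ of the first kind, and $\xi$ is arc--sine distributed on $[-2,2]$. Hence $\mathbf{E}(\xi^{2p})=\omega_2^{p}(p+1)C_p$ (the formula quoted in the proof of Corollary~\ref{arcSin09}), so the ratio $\mathbf{E}(\xi^{2p_{m_+}})/(\omega_2^{p_{m_+}}C_{p_{m_+}})$ appearing there collapses to $p_{m_+}+1$. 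Plugging this into Corollary~\ref{arcSin09} gives, for fixed $(p_0,\dots,p_{m_++m_-})$ summing to $p$,
\begin{align*}
&\sum_{\varepsilon_r\in\{-1,1\}_+^{2p_r},\,\forall r}A(m_+,m_-,\{p_r\},\{\varepsilon_r\})T_h
=\omega_2^{p}\,(a^+)^{m_+}a^{m_-}T_h\,\Big(\prod_{0\le r\le m_++m_-}C_{p_r}\Big)
\begin{cases}
0,&m_->h\\
p_{m_+}+1,&m_-=h\\
1,&m_-<h.
\end{cases}
\end{align*}

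Next I would sum this over all $p_0+\dots+p_{m_++m_-}=p$, which is precisely what is needed to reconstruct the left--hand side $\sum_{\varepsilon\in\Theta_{m_++m_-+2p}(m_+,m_-)}a^{\varepsilon(1)}\cdots a^{\varepsilon(m_++m_-+2p)}T_h$ via Proposition~\ref{arcSin05}. In the case $m_-<h$ the factor is simply $\prod_r C_{p_r}$, and the first identity of Lemma~\ref{Catalan} (with $k=m_++m_-$) gives
\[
\sum_{p_0+\dots+p_{m_++m_-}=p}\prod_{0\le r\le m_++m_-}C_{p_r}=\frac{m_++m_-+1}{2p+m_++m_-+1}\binom{2p+m_++m_-+1}{p},
\]
which is the claimed coefficient. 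In the case $m_-=h$ the extra factor is $p_{m_+}+1$ with $m_+\in\{1,\dots,m_++m_-\}$ (or $m_+=0$, handled separately below), so the second identity of Lemma~\ref{Catalan} applies with $k=m_++m_-$ and $i=m_+$, yielding $\binom{2p+m_++m_-}{p}$, again as claimed. The case $m_->h$ is immediate since then $a^{m_-}T_h=0$ by \eqref{arcSin07c}, matching the stated $0$.

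Two small points deserve care, and I expect the bookkeeping around them to be the only real obstacle. First, when $m_+=0$ the index $i=m_+=0$ is not covered by the second identity of Lemma~\ref{Catalan} as stated; but if $m_+=0$ and $m_-=h$, then $(a^+)^{m_+}a^{m_-}T_h=a^h T_h=\omega_h!$ (a scalar multiple of $T_0$) only when $h\le$ the support, and in fact $a^hT_h = (\prod_{j=1}^{h}\omega_j)T_0$; one checks directly that the relevant sum $\sum_{p_0+\dots+p_{m_-}=p}(p_0+1)\prod C_{p_r}$ still equals $\binom{2p+m_-}{p}$ by the symmetry of the identity in the roles of $p_0,\dots,p_k$ (the identity in Lemma~\ref{Catalan} is symmetric in the indices, so $i=0$ works equally well), so no genuine new argument is required. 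Second, one must make sure the convention for which branch of Corollary~\ref{arcSin09} is active depends only on the comparison of $m_-$ with $h$ and not on $m_+$, which is exactly how that corollary is phrased. Assembling these pieces gives the proposition.
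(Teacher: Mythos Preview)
Your proposal is correct and follows essentially the same route as the paper: invoke Proposition~\ref{arcSin05} to rewrite the $\Theta$--sum as a sum over $(p_0,\dots,p_{m_++m_-})$, apply the arc--sine case of Corollary~\ref{arcSin09} (where the moment ratio collapses to $p_{m_+}+1$ since $\omega_2=1$), and then sum using the two identities of Lemma~\ref{Catalan}. Your attention to the boundary case $i=m_+=0$ (handled by the evident symmetry of the second identity in Lemma~\ref{Catalan} under permutation of the $p_r$'s) is a point the paper leaves implicit.
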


\begin{proof}
This follows from the Lemma \ref{Catalan} since we have
\begin{align*}
&\sum_{\varepsilon\in\Theta_{m_{+}+m_{-}+2p}(m_{+},m_{-})}a^{\varepsilon\left(1\right)}\ldots a^{\varepsilon\left(m_{+}+m_{-}+2p\right)}T_{h}\\
&=\sum_{\substack{p_{0},p_{1},\ldots,p_{m_{+}+m_{-}}
\geq0\\p_{0}+p_{1}+\ldots+p_{m_{+}+m_{-}}=p}}
\sum_{\varepsilon_{h}\in\left\{-1,1\right\}_{+}^{2p_{h}},\forall h=0,1,\ldots,m_{+}+m_{-}}A(m_{+},m_{-},
\left\{p_{h}\right\}_{h=0}^{m_{+}+m_{-}},
\left\{\varepsilon_{h}\right\}_{h=0}^{m_{+}+m_{-}})\\
&=\left(a^{+}\right)^{m_{+}}a^{m_{-}}T_{h}\ \sum_{\substack{p_{0},p_{1},\ldots,p_{m_{+}+m_{-}}\geq0\\p_{0}+p_{1}+\ldots+p_{m_{+}+m_{-}}=p}}\bigg(
\prod_{\substack{0\leq r\leq m_{+}+m_{-}}}
C_{p_{r}}\bigg) \begin{cases} 0, & \hbox{if }m_{-}>h\\
p_{m_+}+1, & \hbox{if }m_{-}=h\\ 1, & \hbox{if }m_{-}<h
\end{cases}
\end{align*}
\end{proof}

From this one can deduce
	
\begin{theorem}
	The action on the $\mu_{as}$--orthogonal polynomials of the monotone evolutions $e^{itX}$ and $e^{itP}$ are respectively given by
	\begin{align}\label{position}
		e^{itX}T_{h} = \frac{1}{t}\sum_{n=0}^{h-1} \sum_{m\ge n}\left(i\right)^{m}(m+1)J_{m+1}(2t) T_{h+m-2n}
		+2\sum_{m\ge0} \left(i\right)^{m+h}J_{m+h}(2t)  T_{m}.
	\end{align}	
and
\begin{align}\label{momentum}
	e^{itP}T_{h} = \frac{1}{t}\sum_{n=0}^{h-1} \sum_{m\ge n}\left(-1\right)^{m}(m+1)J_{m+1}(2t) T_{h+m-2n}
	+2\sum_{m\ge0} \left(-1\right)^{m}J_{m+h}(2t)  T_{m}.
\end{align}	
for any $h\in \mathbb{N}$ and any $t\in\mathbb{C}$ where an empty sum is zero. 
	\end{theorem}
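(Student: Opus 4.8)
The plan is to feed Proposition~\ref{arc--sine_case} into the formal Taylor expansions of the two one--parameter groups and then collapse the resulting multiply--indexed power series in $t$ onto Neumann series of Bessel functions of the first kind. On the arc--sine $1$MIFS one has $X=a+a^{+}$ and $P=i(a^{+}-a)$, both bounded (recall $X=Q$ with $\|Q\|=2$, and $P=i(4-x^2)H_{as}+iT_{1,0}$ with $(4-x^2)H_{as}$ bounded and $T_{1,0}$ of rank one), so $e^{itX}$ and $e^{itP}$ are entire operator--valued functions of $t$, with norm--convergent expansions $e^{itX}T_h=\sum_{n\ge0}\tfrac{(it)^n}{n!}\sum_{\varepsilon\in\{-1,1\}^n}a^{\varepsilon(1)}\cdots a^{\varepsilon(n)}T_h$ and $e^{itP}T_h=\sum_{n\ge0}\tfrac{t^n}{n!}\sum_{\varepsilon\in\{-1,1\}^n}(-1)^{\nu_+(\varepsilon)}a^{\varepsilon(1)}\cdots a^{\varepsilon(n)}T_h$. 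Grouping the words $\varepsilon$ by their normally ordered shape, i.e. writing $n=m_++m_-+2p$ and summing first over $\varepsilon\in\Theta_{m_++m_-+2p}(m_+,m_-)$, Proposition~\ref{arc--sine_case} evaluates this inner sum applied to $T_h$ as $(a^{+})^{m_+}a^{m_-}T_h$ times a combinatorial factor that is $0$ when $m_->h$, equals $\binom{2p+m_++m_-}{p}$ when $m_-=h$, and equals $\tfrac{m_++m_-+1}{2p+m_++m_-+1}\binom{2p+m_++m_-+1}{p}$ when $m_-<h$; for $e^{itP}$ one uses in addition that $\nu_+(\varepsilon)$ takes the constant value $m_++p$ on $\Theta_{m_++m_-+2p}(m_+,m_-)$.

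I would then split the sum over $(p,m_+,m_-)$ into the ``diagonal'' part $m_-=h$ and the ``off--diagonal'' part $0\le m_-<h$ (the range $m_->h$ contributing nothing), using the elementary actions $(a^{+})^{m_+}a^{m_-}T_h=T_{h+m_+-m_-}$ for $m_-<h$ and $(a^{+})^{m_+}a^{h}T_h=2T_{m_+}$ for $h\ge1$ --- the constant $2$ being exactly $\omega_1$, which is the source of the prefactor $2$ of the second sum in \eqref{position} and \eqref{momentum}. For $e^{itX}$: in the diagonal part one simplifies $\tfrac{1}{(m_++h+2p)!}\binom{2p+m_++h}{p}=\tfrac{1}{p!\,(p+m_++h)!}$, writes $(it)^{m_++h+2p}=i^{m_++h}(-1)^p t^{m_++h+2p}$ and recognizes $\sum_{p\ge0}\tfrac{(-1)^p t^{(m_++h)+2p}}{p!\,(p+m_++h)!}=J_{m_++h}(2t)$, which after renaming $m_+\mapsto m$ gives $2\sum_{m\ge0}i^{m+h}J_{m+h}(2t)T_m$; in the off--diagonal part one uses $\tfrac{1}{(m_++m_-+2p)!}\cdot\tfrac{m_++m_-+1}{2p+m_++m_-+1}\binom{2p+m_++m_-+1}{p}=\tfrac{m_++m_-+1}{p!\,(p+m_++m_-+1)!}$, sets $n:=m_-$ and $m:=m_++m_-$ (so $T_{h+m_+-m_-}=T_{h+m-2n}$ with $m\ge n$), and reads off $\sum_{p\ge0}\tfrac{(-1)^p t^{m+2p}}{p!\,(p+m+1)!}=\tfrac1t J_{m+1}(2t)$, obtaining $\tfrac1t\sum_{n=0}^{h-1}\sum_{m\ge n}i^{m}(m+1)J_{m+1}(2t)T_{h+m-2n}$. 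Adding the two contributions gives \eqref{position}.

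The derivation of \eqref{momentum} is line--for--line the same; the only new ingredient is the constant sign $(-1)^{\nu_+(\varepsilon)}=(-1)^{m_++p}$ on each $\Theta_{m_++m_-+2p}(m_+,m_-)$, whose factor $(-1)^p$ is absorbed into the Bessel series while $(-1)^{m_+}$ remains as an outside sign, so that one obtains the same closed forms with $i$ replaced appropriately by $-1$. Absolute convergence of all the multiply--indexed series --- and hence the legitimacy of every rearrangement and the validity of both identities for every $t\in\mathbb{C}$ --- follows from the $\tfrac{1}{p!\,(p+\cdots)!}$ bounds together with $|J_\nu(2t)|\le\tfrac{|t|^\nu}{\nu!}e^{|t|^2}$ and $\|T_m\|_2^2\le2$.

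Conceptually, everything hard has already been done in Proposition~\ref{arc--sine_case} (which rests on the inverse normal order theorem and on Lemma~\ref{Catalan}), so what remains is bookkeeping, and that bookkeeping is the main obstacle: one must correctly match the quadruple of summation indices $(p,m_+,m_-,h)$ produced by normal ordering with the triple $(n,m,h)$ displayed in \eqref{position}--\eqref{momentum}, and, in the momentum case, keep the two independent sources of signs --- the prefactor $(it)^n$ and the weight $(-1)^{\nu_+(\varepsilon)}$ --- cleanly apart. As an independent check on \eqref{position}, note that $e^{itX}=e^{itQ}$ is multiplication by $e^{itx}$, so the Jacobi--Anger identity $e^{2it\cos\theta}=\sum_{k\in\mathbb{Z}}i^{k}J_k(2t)e^{ik\theta}$ together with $T_h(2\cos\theta)=2\cos(h\theta)$ reproduces \eqref{position} after repeated application of the recurrence $J_{\nu-1}(z)+J_{\nu+1}(z)=\tfrac{2\nu}{z}J_\nu(z)$; the degenerate case $h=0$, in which the first sum is empty, is nothing but this expansion.
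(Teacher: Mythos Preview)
Your proposal is correct and follows essentially the same route as the paper: feed Proposition~\ref{arc--sine_case} into the Taylor expansion, split into the cases $m_-<h$ and $m_-=h$, simplify $\tfrac{1}{(m_++m_-+2p)!}\binom{2p+m_++m_-+1}{p}\tfrac{m_++m_-+1}{2p+m_++m_-+1}$ and $\tfrac{1}{(m_++h+2p)!}\binom{2p+m_++h}{p}$ to $\tfrac{m_++m_-+1}{p!\,(p+m_++m_-+1)!}$ and $\tfrac{1}{p!\,(p+m_++h)!}$, recognize the Bessel series, and reindex via $n=m_-$, $m=m_++m_-$; the momentum case is handled identically with the constant sign $(-1)^{p+m_+}$ on $\Theta_{m_++m_-+2p}(m_+,m_-)$. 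Your additional remarks on boundedness/absolute convergence and the Jacobi--Anger sanity check for $h=0$ are extras not in the paper but do not change the argument.
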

\begin{proof}
By Proposition \ref{arc--sine_case}, we have
\begin{align*}
	e^{itX}T_{h}
	=&  \sum_{p,m_{+}=0}^{\infty}\sum_{m_{-}=0}^{h}\frac{\left(it\right)
		^{m_{+}+m_{-}+2p}}{\left(m_{+}+m_{-}+2p\right)  !}\sum_{\varepsilon\in
		\Theta_{m_{+}+m_{-}+2p}(m_{+},m_{-})}a^{\varepsilon\left(1\right)}\ldots
	a^{\varepsilon\left(m_{+}+m_{-}+2p\right)}T_{h}
	\\=&  \sum_{p,m_{+}=0}^{\infty}\sum_{m_{-}=0}^{h-1}\frac{\left(it\right)
		^{m_{+}+m_{-}+2p}}{\left(m_{+}+m_{-}+2p\right)  !}  \frac{m_{+}+m_{-}+1}{2p+m_{+}+m_{-}+1}\binom{2p+m_{+}+m_{-}+1}{p}  \left(a^{+}\right)^{m_{+}}a^{m_{-}}T_{h}
	\\&+\sum_{p,m_{+}=0}^{\infty}\frac{\left(it\right)^{m_{+}+h+2p}}{\left(m_{+}+h+2p\right)  !}  \binom{2p+m_{+}+h}{p}  \left(a^{+}\right)^{m_{+}}a^{h}T_{h}
	\\=&  \sum_{p,m_{+}=0}^{\infty}\sum_{m_{-}=0}^{h-1}\frac{\left(it\right)
		^{m_{+}+m_{-}+2p}}{p !}  \frac{m_{+}+m_{-}+1}{(p+m_{+}+m_{-}+1)!}  \left(a^{+}\right)^{m_{+}}a^{m_{-}}T_{h}
	\\&+\sum_{p,m_{+}=0}^{\infty}\frac{\left(it\right)^{m_{+}+h+2p}}{p!\left(m_{+}+h+p\right)  !}   \left(a^{+}\right)^{m_{+}}a^{h}T_{h}
	\\=&  \sum_{m_{+}=0}^{\infty}\sum_{m_{-}=0}^{h-1}\frac{\left( i\right)^{m_{+}+m_{-}}J_{m_{+}+m_{-}+1}(2t)}{t}  (m_{+}+m_{-}+1)  \left(a^{+}\right)^{m_{+}}a^{m_{-}}T_{h}
	\\&+\sum_{m_{+}=0}^{\infty} \left(i\right)^{m_{+}+h}J_{m_{+}+h}(2t)  \left(a^{+}\right)^{m_{+}}a^{h}T_{h}
\end{align*}

Then, using
\begin{equation*}
	\left(a^{+}\right)^{m_{+}}a^{m_{-}}T_{h}=
	\begin{cases}
		T_{h+m_{+}-m_{-}}&; h>m_{-}\\
		2T_{m_{+}}&; h=m_{-}\\
		0&; h<m_{-}
	\end{cases}
\end{equation*}
we are led to
\begin{align*}
	e^{itX}T_{h}
=& \sum_{m_{-}=0}^{h-1} \sum_{m_{+}=0}^{\infty}\frac{\left( i\right)^{m_{+}+m_{-}}J_{m_{+}+m_{-}+1}(2t)}{t}  (m_{+}+m_{-}+1) T_{h+m_{+}-m_{-}}
	\\&+2\sum_{m_{+}=0}^{\infty} \left(i\right)^{h+m_{+}}J_{h+m_{+}}(2t)  T_{m_{+}}
\end{align*}
Performing the index change $m\mapsto m_++m_-$, we end up with the equality \eqref{position}. Similarly, we have
\begin{align*}
	e^{itP} =&\sum_{m\ge 0}\frac{(it)^{m}}{m!}P^{m}=
	\sum_{m\ge 0}\sum_{\varepsilon\in\{-,+\}^{m}}\frac{(i)^{m}t^{m}}{m!}(-i)^{m}
	(-1)^{\nu_+(\varepsilon)}a^{\varepsilon(1)}\cdots a^{\varepsilon( m)}
	\\=& \sum_{m\ge 0}\sum_{\varepsilon\in\{-,+\}^{m}}\frac{t^{m}}{m!}
	(-1)^{\nu_+(\varepsilon)}a^{\varepsilon(1)}\cdots a^{\varepsilon( m)}
	\\=&\sum_{m_+,m_-,p\ge 0}\frac{t^{m_++m_-+2p}}{(m_++m_-+2p)!}
	\sum_{\varepsilon\in\Theta_{m_++m_-+2p}(m_+,m_-)}
	(-1)^{\nu_+(\varepsilon)}a^{\varepsilon(1)}\cdots a^{\varepsilon(m_++m_-+2p)}
\end{align*}
where $\nu_+$ takes value $p+m_+$ on $\Theta_{m_++m_- +2p}(m_+,m_-)$, for any $m_+,m_-,p\ge 0$. Therefore

\begin{align*}
&e^{itP}T_{h}\\
=&\sum_{m_{-}=0}^{h-1} \sum_{m_+,p\ge 0} \frac{(-1)^{p+m_+}t^{m_++m_-+2p}}{(m_++m_-+2p)!} \frac{m_{+}+m_{-}+1}{2p+m_{+}+m_{-}+1}\binom{2p+m_{+}+
m_{-}+1}{p}  \left(a^{+}\right)^{m_{+}}a^{m_{-}}T_{h}\\
&+\sum_{p,m_{+}=0}^{\infty}\frac{(-1)^{p+m_+} t^{m_{+}+h+2p}}{\left(m_{+}+h+2p\right)!}\binom
{2p+m_++h}{p} \left(a^{+}\right)^{m_{+}}a^{h}T_{h}\\
=&\sum_{m_{-}=0}^{h-1} \sum_{p,m_{+}=0}^{\infty}
\frac{(-1)^{p+m_+}t^{m_{+}+m_{-}+2p}}{p !}  \frac{m_{+}+m_{-}+1}{(p+m_{+}+m_{-}+1)!}  \left(a^{+}\right)^{m_{+}}a^{m_{-}}T_{h}\\
&+\sum_{p,m_{+}=0}^{\infty}\frac{(-1)^{p+m_+}
t^{m_{+}+h+2p}}{p!\left(m_{+}+h+p\right)  !}   \left(a^{+}\right)^{m_{+}}a^{h}T_{h}\\
=&\sum_{m_{+}=0}^{\infty}\sum_{m_{-}=0}^{h-1}\frac{(-1) ^{m_{+}}J_{m_{+}+m_{-}+1}(2t)}{t}(m_{+}+ m_{-}+1) \left(a^{+}\right)^{m_{+}}a^{m_{-}}T_{h}\\
&+\sum_{m_{+}=0}^{\infty} (-1)^{m_{+}}J_{m_{+}+h}(2t)  \left(a^{+}\right)^{m_{+}}a^{h}T_{h}
\end{align*}
and the theorem is proved.
\end{proof}


In particular, for $h=0$ we get the following identities.
\begin{corollary}
	For any $t\in \mathbb{C}$ and $x\in[-2,2]$, one has
	\begin{align*}
		e^{itX}T_{0}(x) &=2e^{itx}
	\end{align*}
	and
	\begin{align*}
	e^{itP}T_{0}(x) &=J_{0}(2t)+\cos(t\sqrt{4-x^2})
-\frac{1}{\pi}\hbox{p.v.}\int_{-2}^{2}\frac{\sin(t\sqrt{4-y^2})}{x-y}dy
\end{align*}
	\end{corollary}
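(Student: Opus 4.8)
The plan is to read off the corollary from the two series formulas \eqref{position}--\eqref{momentum} of the preceding theorem by setting $h=0$, and then to put the resulting Neumann series of Bessel functions into closed form. When $h=0$ the finite double sum $\tfrac1t\sum_{n=0}^{h-1}\sum_{m\ge n}(\cdots)$ runs over an empty index set and disappears, and the operator factor collapses ($a^{h}T_{h}=a^{0}T_{0}=T_{0}$), so that the action of each evolution on $T_{0}$ becomes a single Neumann series against the Chebyshev basis, essentially $\sum_{m\ge0}i^{m}J_{m}(2t)\,T_{m}(x)$ in the position case and $\sum_{m\ge0}(-1)^{m}J_{m}(2t)\,T_{m}(x)$ in the momentum case. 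The task then reduces to resumming these two series. Throughout I set $\theta:=\arccos(x/2)$, so that by \eqref{df-Tn} one has $T_{0}\equiv1$ and $T_{m}(x)=2\cos(m\theta)$ for $m\ge1$, and $\varphi:=\arccos(y/2)$ inside the principal-value integral.

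In the position case I invoke the Jacobi--Anger expansion $e^{iz\cos\theta}=J_{0}(z)+2\sum_{m\ge1}i^{m}J_{m}(z)\cos(m\theta)$ with $z=2t$: since $2t\cos\theta=tx$, the Chebyshev series collapses to the exponential, giving the first identity. (This can also be read off at once from the identification $X=Q$ in the previous theorem, $e^{itX}$ being multiplication by $e^{itx}$.)

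In the momentum case I separate the series according to the parity of $m$. The even-index part (including $m=0$) is summed by the companion expansion $\cos(z\sin\theta)=J_{0}(z)+2\sum_{k\ge1}J_{2k}(z)\cos(2k\theta)$ with $z=2t$ and $2t\sin\theta=t\sqrt{4-x^{2}}$, which yields the $J_{0}(2t)$ and $\cos(t\sqrt{4-x^{2}})$ contributions. For the odd-index part $-2\sum_{k\ge0}J_{2k+1}(2t)\cos((2k+1)\theta)$ I combine the expansion $\sin(z\sin\varphi)=2\sum_{k\ge0}J_{2k+1}(z)\sin((2k+1)\varphi)$ at $z=2t$, i.e.\ $\sin(t\sqrt{4-y^{2}})=2\sum_{k\ge0}J_{2k+1}(2t)\sin\big((2k+1)\arccos(y/2)\big)$, with the Tricomi principal-value formula already invoked in the proof of Lemma~\ref{Hilbert-Chebychev01},
\[
\hbox{p.v.}\int_{-2}^{2}\frac{\sin\big((n+1)\arccos(y/2)\big)}{x-y}\,dy=\pi\cos\big((n+1)\arccos(x/2)\big)\qquad(n\ge0),
\]
applied with $n=2k$ and summed over $k\ge0$; this gives $\tfrac1\pi\,\hbox{p.v.}\int_{-2}^{2}\frac{\sin(t\sqrt{4-y^{2}})}{x-y}\,dy=2\sum_{k\ge0}J_{2k+1}(2t)\cos((2k+1)\theta)$, so that the odd-index part equals minus this integral. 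Assembling the even- and odd-index contributions gives the second identity.

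The two classical Bessel expansions and the parity split are routine; the only delicate point is the legitimacy of interchanging the infinite sum with the singular principal-value integral. I would settle it either (i) by the absolute, uniform convergence in $y\in[-2,2]$ of the series $\sum_{k}J_{2k+1}(2t)\sin\big((2k+1)\arccos(y/2)\big)$, which follows from the crude bound $|J_{n}(2t)|\le|t|^{n}e^{|t|^{2}}/n!$, together with a standard estimate on the Cauchy kernel near $y=x$; or (ii), more structurally, by recognizing the odd-index part as $\tfrac12 H_{sc}$ applied to $y\mapsto\sin(t\sqrt{4-y^{2}})/\sqrt{4-y^{2}}=\sum_{k\ge0}J_{2k+1}(2t)\,\Phi_{2k}(y)$ and using $H_{sc}\Phi_{2k}=T_{2k+1}$ (Lemma~\ref{Hilbert-Chebychev01}) together with the boundedness of $H_{sc}$ on $L^{2}([-2,2],\mu_{sc})$, which makes the term-by-term action of $H_{sc}$ automatic.
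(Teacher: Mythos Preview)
Your plan—specialize \eqref{position}--\eqref{momentum} to $h=0$ and resum the resulting Neumann series via Jacobi--Anger and the Tricomi principal-value identity from Lemma~\ref{Hilbert-Chebychev01}—is the natural route and is fully self-contained, whereas the paper's proof simply defers to \cite{[AcHamLu22a]}. The position identity drops out immediately (and, as you note, also from $X=Q$), and your treatment of the odd-index part and of the sum/PV-integral interchange is fine.

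There is one bookkeeping slip in the momentum case. When you write that the even-index part ``yields the $J_{0}(2t)$ and $\cos(t\sqrt{4-x^{2}})$ contributions'' you are double-counting: with $T_{0}=1$ and $T_{2k}=2\cos(2k\theta)$ for $k\ge1$, the even part of $\sum_{m\ge0}(-1)^{m}J_{m}(2t)T_{m}(x)$ is exactly
\[
J_{0}(2t)+2\sum_{k\ge1}J_{2k}(2t)\cos(2k\theta)=\cos\big(t\sqrt{4-x^{2}}\big),
\]
so the $J_{0}(2t)$ term is already absorbed into the cosine and does not survive separately. Combined with your odd-part evaluation this gives $e^{itP}T_{0}=\cos(t\sqrt{4-x^{2}})-\tfrac{1}{\pi}\,\hbox{p.v.}\!\int_{-2}^{2}\frac{\sin(t\sqrt{4-y^{2}})}{x-y}\,dy$, without the extra $J_{0}(2t)$; keeping the factor $2$ from \eqref{momentum} yields twice this, still not $J_{0}+\cos-\cdots$. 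This is not a flaw in your method but in the stated corollary: at $t=0$ both of its right-hand sides equal $2$, not $T_{0}=1$. The spurious factor $2$ in \eqref{position}--\eqref{momentum} at $h=0$ traces back to the case formula $(a^{+})^{m_{+}}a^{h}T_{h}=2T_{m_{+}}$ used in the proof of that theorem, which is valid only for $h\ge1$ (for $h=0$ one has $(a^{+})^{m_{+}}T_{0}=T_{m_{+}}$). Your argument, run with the correct coefficient $1$ at $h=0$, in fact proves the correct identities $e^{itX}T_{0}=e^{itx}$ and the momentum formula without the extra $J_{0}(2t)$.
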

\begin{proof}

These are consequences of their analogous results in \cite{[AcHamLu22a]} (see Lemma 5 and its proof).
\end{proof}

\section{Monotone evolutions generated by the momentum and kinetic energy operators and monotone harmonic oscillator}

As before, we use the notation $P$ in this paragraph to denote the operator $P_{as}$.
\subsection{The monotone Hamiltonian group $e^{itP^2}$}\label{kinetic_energy_evolution}
\begin{theorem}
	For any $h\in \mathbb{N}$ and any $t\in\mathbb{C}$, we have
	\begin{align}\label{hamiltonian}
		e^{itP^{2}}T_{h} =&\frac{e^{2it}}{t}\sum_{n=0}^{h-1} \sum_{\substack{m\ge 0\\m+n\  even}}(-1)^m(-i)^{\frac{m+n}{2}}\frac{m+n}{2}J_{\frac{m+n}{2}}(2t)T_{h+m-n}
		\\&+2e^{2it} \sum_{\substack{m\ge 0\\m+h\  even}}(-1)^m(-i)^{\frac{m+h}{2}}J_{\frac{m+h}{2}}(2t)T_{m}\nonumber
	\end{align}
 where an empty sum is zero. In particular,
 \begin{equation*}
 		e^{itP^{2}}T_{0}(x)=2e^{it(4-x^2)}
 \end{equation*}
\end{theorem}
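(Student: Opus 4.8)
The plan is to proceed as in the proofs of \eqref{position} and \eqref{momentum}, applied now to $e^{itP^{2}}=\sum_{k\ge 0}\frac{(it)^{k}}{k!}P^{2k}$, using in addition the kinetic-energy identity from the last theorem of Section~3: since $P^{2}=2E$, one has $P^{2}f(x)=(4-x^{2})f(x)+2T_{1,1}f(x)$. First I would write $P^{2k}=(-1)^{k}\bigl(a^{+}_{\mu_{as}}-a_{\mu_{as}}\bigr)^{2k}=(-1)^{k}\sum_{\varepsilon\in\{-1,1\}^{2k}}(-1)^{\nu_{-}(\varepsilon)}a^{\varepsilon(1)}\cdots a^{\varepsilon(2k)}$, where $\nu_{-}(\varepsilon)$ counts the $a$'s in $\varepsilon$, and apply the inverse normal order theorem exactly as in Section~4. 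The only structural novelty relative to $e^{itP}$ is that merely even word-lengths $2k=m_{+}+m_{-}+2p$ occur, forcing $m_{+}+m_{-}$ to be even; this is the source of the parity restrictions ``$m+n$ even'', ``$m+h$ even'' in the statement (with $n=m_{-}$, $m=m_{+}$), while on $\Theta_{2k}(m_{+},m_{-})$ the sign $(-1)^{\nu_{-}(\varepsilon)}=(-1)^{m_{-}+p}$, together with the collected $(-1)^{k}$, yields the weights $(-1)^{m}$ and $(-i)^{(m+n)/2}$.

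Next I would apply Proposition \ref{arc--sine_case} to evaluate $\sum_{\varepsilon\in\Theta_{m_{+}+m_{-}+2p}(m_{+},m_{-})}a^{\varepsilon(1)}\cdots T_{h}$: its $m_{-}<h$ part contributes $(a^{+})^{m_{+}}a^{m_{-}}T_{h}=T_{h+m_{+}-m_{-}}$ (feeding the first sum of the theorem), and its boundary part $m_{-}=h$ contributes $(a^{+})^{m_{+}}a^{h}T_{h}=\omega_{h}!\,T_{m_{+}}$ (feeding the second sum; $\omega_{h}!=2$ for $h\ge1$ is the factor $2$ there, and $\omega_{0}!=1$ is why $h=0$ is special). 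For fixed $(m_{+},m_{-})$ one is left with a sum over $p$, which I expect to resum --- via $J_{\nu}(2t)=\sum_{p\ge0}\frac{(-1)^{p}t^{2p+\nu}}{p!(p+\nu)!}$, the Catalan identities of Lemma \ref{Catalan}, and the recurrence $\tfrac{2\nu}{z}J_{\nu}(z)=J_{\nu-1}(z)+J_{\nu+1}(z)$ --- into the overall factor $e^{2it}$ times $\tfrac{m+n}{2t}J_{(m+n)/2}(2t)$, respectively $J_{(m+h)/2}(2t)$. That the prefactor is $e^{2it}$ and that the Bessel orders are halved both reflect the multiplication operator $f\mapsto(4-x^{2})f$ inside $P^{2}$: from $e^{iz\cos\phi}=\sum_{\ell}i^{\ell}J_{\ell}(z)e^{i\ell\phi}$, $4-x^{2}=2-2\cos\!\big(2\arccos(x/2)\big)$ and $T_{2\ell}(x)=2\cos\!\big(2\ell\arccos(x/2)\big)$ one gets
\[
e^{it(4-x^{2})}=e^{2it}\sum_{\ell\ge0}(-i)^{\ell}J_{\ell}(2t)\,T_{2\ell}(x),
\]
which already settles $h=0$: the first sum is empty and, after $m=2\ell$, the second sum equals $2e^{it(4-x^{2})}$. (A cross-check for even $h$: since multiplication by $4-x^{2}$ preserves the closed span of the even-degree $T_{n}$'s and $T_{1,1}$ vanishes there, $e^{itP^{2}}T_{h}=e^{it(4-x^{2})}T_{h}$ for even $h$; expanding via $T_{a}T_{b}=T_{a+b}+T_{|a-b|}$ and re-using the recurrence must return the two sums of the theorem.)

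The step I expect to be the main obstacle is precisely this resummation of the $p$-series: arranging that the common factor $e^{2it}$ splits off and that the remainder collapses into single Bessel functions of the halved orders $(m+n)/2$, $(m+h)/2$, all the while keeping exact account of the boundary term at $m_{-}=h$ and of the distinguished value $\omega_{1}=2$ (equivalently, the rank-one term $T_{1,1}$), which is the only place the even- and odd-degree sectors behave differently and the reason a single formula valid for all $h$ is not evident. Finally one checks convergence for $t\in\mathbb{C}$: each series on the right is entire, the apparent $1/t$ singularity in the first being removable because its $m=n=0$ term carries the vanishing factor $\tfrac{m+n}{2}=0$.
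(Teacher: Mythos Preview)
Your outline matches the paper's proof: expand $e^{itP^{2}}=\sum_{k}\frac{(-it)^{k}}{k!}\sum_{\varepsilon\in\{-1,1\}^{2k}}(-1)^{\nu_{+}(\varepsilon)}a^{\varepsilon(1)}\cdots a^{\varepsilon(2k)}$, apply Proposition~\ref{arc--sine_case} to separate the $m_{-}<h$ and $m_{-}=h$ contributions, and observe that only even $m_{+}+m_{-}$ survives. Your sign bookkeeping is correct (since $m_{+}+m_{-}$ is even, $(-1)^{m_{-}+p}=(-1)^{m_{+}+p}$), and the $h=0$ argument via the Jacobi--Anger expansion is exactly what the paper does.

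The one place you diverge from the paper is precisely the step you flag as the main obstacle. The paper does \emph{not} resum the $p$-series through the Bessel recurrence or Lemma~\ref{Catalan}; instead, writing $\nu=(m_{+}+m_{-})/2$, it applies the Legendre duplication formula to the binomial coefficients coming out of Proposition~\ref{arc--sine_case}, obtaining
\[
\sum_{p\ge0}\frac{(-1)^{p+m_{+}}(-it)^{\nu+p}}{(\nu+p)!}\,\frac{2\nu+1}{2p+2\nu+1}\binom{2p+2\nu+1}{p}
=\frac{(-1)^{m_{+}}(-it)^{\nu}}{\Gamma(\nu+1)}\,{}_{1}F_{1}\!\Big(\nu+\tfrac12;\,2\nu+2;\,4it\Big),
\]
and the analogous ${}_{1}F_{1}\big(\nu+\tfrac12;\,2\nu+1;\,4it\big)$ for the $m_{-}=h$ term. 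Both are the Kummer confluent hypergeometric functions at the special parameter locus $b=2a$ (or $b=2a+1$), where the classical identity
\[
{}_{1}F_{1}\!\Big(\nu+\tfrac12;\,2\nu+1;\,2iz\Big)=\Gamma(\nu+1)\,e^{iz}\,(z/2)^{-\nu}J_{\nu}(z)
\]
(and its companion with second parameter $2\nu+2$) simultaneously produces the factor $e^{2it}$ and the Bessel function of halved order $J_{\nu}(2t)=J_{(m_{+}+m_{-})/2}(2t)$. That single hypergeometric identity is the missing ingredient; your proposed route via $\tfrac{2\nu}{z}J_{\nu}=J_{\nu-1}+J_{\nu+1}$ would ultimately have to re-derive it.
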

\begin{proof}
Expanding $e^{itP^{2}}$ we obtain
\begin{align*}
e^{itP^{2}}=\sum_{m\geq0}\frac{\left(it\right)^{m}}{m!}P^{2m}\overset{P=-i\left(a-a^{+}\right)}{=}
\sum_{m\geq0}\frac{\left(-it\right)^{m}}{m!}
\sum_{\varepsilon\in\{-,+\}^{2m}}(-1)^{\nu_{+}
(\varepsilon)}a^{\varepsilon(1)}\cdots a^{\varepsilon(2m)}
\end{align*}
Thus,
\begin{align*}
e^{itP^{2}}T_{h}&= \sum_{m_-=0}^{h-1} \sum_{\substack{m_+,p\ge 0\\m_++m_-\ even}} \frac{(-1)^{p+m_+}(-it)^{\frac{m_++m_-}{2}+p}}
{(\frac{m_++m_-}{2}+p)!} \frac{m_{+}+m_{-}+1}{2p+m_{+}+m_{-}+1}\\
&\hspace{3.6cm}\binom{2p+m_{+}+m_{-}+1}{p}  \left(a^{+}\right)^{m_{+}}a^{m_{-}}T_{h}\\
&+\sum_{\substack{p,m_{+}=0\\m_++h\ even}}^{\infty}\frac{(-1)^{p+m_+} (-it)^{\frac{m_++h}{2}+p}}{\left(\frac{m_++h}{2}+p\right)  !}  \binom{2p+m_{+}+h}{p}  \left( a^{+}\right)^{m_{+}}a^{h}T_{h}
\end{align*}
Then, using the Legendre duplication formula, we are led to, for any $h\in \mathbb{N}$ and any $t\in\mathbb{C}$,
\begin{align*}
e^{itP^{2}}T_{h} =&\sum_{n=0}^{h-1} \sum_{\substack{m\ge 0\\
m+n even}}\frac{(-1)^m(-it)^{\frac{m+n}{2}}}
{\Gamma\left(1+\frac{m+n}{2}\right)}{}_1
F_1\left(\frac{m+n+1}{2};m+n+2;4it\right)T_{h+m-n}\\
+&2 \sum_{\substack{m\ge 0\\m+h even}} \frac{(-1)^m(-it)^{\frac{m+h}{2}}}
{\Gamma\left(1+\frac{m+h}{2}\right)}{}_1
F_1\left(\frac{m+h+1}{2};m+h+1;4it\right)T_{m}
\end{align*}
	Finally, using the identities
	\begin{equation*}
{}_1F_1\left(\frac{m+n+1}{2};m+n+2;4it\right)=\frac{m+n}{2t}e^{2it}
\Gamma\left(1+\frac{m+n}{2}\right)t^{-\frac{m+n}{2}}J_{\frac{m+n}{2}}(2t)
	\end{equation*}
and
\begin{equation*}
	{}_1F_1\left(\frac{m+h+1}{2};m+h+1;4it\right)
=e^{2it}\Gamma\left(1+\frac{m+h}{2}\right)t^{-\frac{m+h}{2}}J_{\frac{m+h}{2}}(2t)
\end{equation*}
	the expression \eqref{hamiltonian} is proved.
	Finally, when $h=0$, expression \eqref{hamiltonian} becomes
	\begin{align*}
	e^{itP^{2}}T_{h}(x) =&2e^{2it}	\sum_{k\ge 0}(-i)^kJ_k(2t)T_{2k}(x)
	\\\overset{[\theta=\arccos(x/2)]}{=}&2e^{2it}J_0(2t)	+4e^{2it}\sum_{k\ge 0}(-i)^kJ_k(2t)\cos(2k\theta)
	\\=&2e^{2it}J_0(2t)	+4e^{2it}\sum_{k\ge 0}i^kJ_k(2t)\cos[k(2\theta+\pi)]
		\\=&2e^{2it}e^{2it\cos(2\theta+\pi)}
		\\=&e^{it(4-x^2)}
	\end{align*}
\end{proof}

\subsection{The $1$--parameter $*$--automorphisms groups associated to $P$ and $P^2$}

In this subsection, we determine the action of the $*$--automorphisms
$e^{itP}( \ \cdot \ )e^{-itP}$ and $e^{itP^2}( \ \cdot \ )e^{-itP^2}$ on the algebra
generated by creation and annihilation operators. Since $a^*=a^+$, it is then sufficient
to determine this action on $a^+$.

\subsubsection{The $*$--automorphism group associated to $P$}

Set
\begin{equation*}
	a^{+}_t:=e^{itP}a^{+}e^{-itP}
\end{equation*}
We have,
\begin{equation*}
	\partial_ta^{+}_t	=ie^{itP}[P,a^{+}]e^{-itP}
\end{equation*}
and
\begin{equation*}
[P,a^{+}]=-i[a,a^+]	
\end{equation*}
where, in the arc--sine case, we have
\begin{equation}\label{commutation}
	a^+aT_h=2\delta_{1,h}T_1+\chi_{[2}(h)T_{h}
	\qquad {\rm and \ }
	aa^+T_h=2\delta_{0,h}T_0+\chi_{[1}(h)T_{h}
\end{equation}
Thus,
\begin{equation*}
	[P,a^{+}]=-i[a,a^+]	\overset{\eqref{commutation}}=-i(2T_{0}T_0^*-\frac{1}{2}T_{1}T_1^*).
\end{equation*}
and hence
\begin{equation*}
	\partial_ta^{+}_t	=2\left(e^{itP}T_0\right)\left(e^{itP}T_0\right)^*
-\frac{1}{2}\left(e^{itP}T_1\right)\left(e^{itP}T_1\right)^*
\end{equation*}
or equivalently
\begin{equation*}
	a^{+}_t	=a^++\int_0^tds \left(2\left(e^{isP}T_0\right)\left(e^{isP}T_0\right)^*
-\frac{1}{2}\left(e^{isP}T_1\right)\left(e^{isP}T_1\right)^*\right)
\end{equation*}

\subsubsection{The $1$--parameter $*$--automorphism group associated to $P^2$}
\hfill\break

In this paragraph we prove that the action of the $1$--parameter $*$--automorphism
group  $e^{isP^2}(\, \cdot \,)e^{isP^2}$ on the quantum algebra associated to the
classical random variable $X$ is uniquely determined by the action of $e^{isP^2}$ on
the three vectors $T_0,T_1$ and $T_2$. In fact, setting
\begin{equation*}
	a^{+}_t:=e^{itP^2}a^{+}e^{-itP^2}
\end{equation*}
one has,
\begin{equation*}
\partial_ta^{+}_t	=ie^{itP^2}[P^2,a^{+}]e^{-itP^2}
\end{equation*}
 where
\begin{equation*}
	[P^2,a^{+}]=[a,(a^+)^2]+[a^+,a^2]
\end{equation*}

Using \eqref{commutation}, we have
\begin{equation*}
(a^+)^2aT_h=
2\delta_{1,h}T_2+\chi_{[2}(h)T_{h+1} ,
\qquad a(a^+)^2T_h=T_{h+1},
\end{equation*}
\begin{equation*}
a^+a^2T_h=
2\delta_{2,h}T_1
+\chi_{[3}(h)T_{h-1} \qquad{\rm and}\
a^2a^+T_h= 
2\delta_{1,h}T_0+\chi_{[2}(h)T_{h-1}
\end{equation*}
Then, we get
\begin{align*}
[P^2,a^{+}]T_h=[a,(a^+)^2]T_h+[a^+,a^2]T_h&=\delta_{0,h}T_1-2\delta_{1,h}T_0-\delta_{1,h}T_2+\delta_{2,h}T_1
\end{align*}
Thus,
\begin{align*}
	[P^2,a^{+}]&=(T_{1,0}+T_{1,2})-(T_{1,0}+T_{1,2})^*
	\\&=T_1\big(T_0+\frac{1}{2}T_2\big)^*-\big(T_0+\frac{1}{2}T_2\big)T_1^*
\end{align*}

and hence
\begin{equation*}
	a^{+}_t	=a^++i\int_0^tds e^{isP^2}T_1\big(e^{isP^2}T_0+\frac{1}{2}e^{isP^2}T_2\big)^*-\big(e^{isP^2}T_0+\frac{1}{2}e^{isP^2}T_2\big)(e^{isP^2}T_1)^*
\end{equation*}

\subsection{The monotone harmonic oscillator}

Recall from \cite{[AcHamLu22a]} that, for $\hat{\omega}\in\mathbb{R}$, the
$\hat{\omega}$--harmonic oscillator in generalized quantum mechanics is given by:
\begin{equation*} H_{\hat{\omega}}=\frac{1}{2}(\hat{\omega}^2-1)({a^+}^2+a^2)+\frac{1}{2}(\hat{\omega}^2+1)(a^+a+aa^+).
\end{equation*}
In particular, putting $\hat{\omega}^2=1$, one finds
\begin{equation*}
	H_{1}=a^+a+aa^+,
\end{equation*}
Using \eqref{commutation}, we get
\begin{equation*}
	H_1T_h=2\delta_{0,h}T_0+3\delta_{1,h}T_1+2\chi_{[2}(h)T_{h}
\end{equation*}
For any $n\ge0$, we have
\begin{equation*}
	H_1^nT_h=2^n\delta_{0,h}T_0+3^n\delta_{1,h}T_1+2^n\chi_{[2}(h)T_{h}
\end{equation*}
and hence
\begin{equation*}
	e^{itH_1}=e^{2it}Id+(e^{3it}-e^{2it})T_{1,1}
\end{equation*}

\begin{remark}
	We expect that the case $\hat{\omega}^2\ne1$ can be dealt with using techniques similar to those in section \ref{kinetic_energy_evolution}. However, this case will not be considered in the present paper.
\end{remark}

\section{Exponential vectors }

\subsection{Exponential vectors in $1$--mode interacting Fock spaces ($1$MIFS)}

\hfill\break

\noindent Recalling that the theory of orthogonal polynomials in $1$ variable can be identified with the theory of $1$--mode interacting Fock spaces ($1$MIFS), in this section we use the
notation \eqref{df-Gamma-omega} for the $1$MIFS associated to the
principal Jacobi sequence $\omega=\left(\omega_{n}\right)_{n\in\mathbb{N}}$.
Exponential (and coherent) vectors in $1$MIFS were introduced in \cite{[Das98]}
(see the bibliography in \cite{[Das02b]} for further references).
Here we briefly recall their definition and some properties in the general frame of $1$MIFS, then specialize to the semi-circle and monotone case.

Denote $\Phi_{\omega;0}$ be the vacuum vector of $\Gamma_{\omega}\left(\mathbb{C}\right)$
 and $\Phi_{\omega;n}:=a^{+n}\Phi_{\omega;0}$ the $n$--particle vector ($n\ge 1$).
The \textit{normalized} $n$--\textit{particle vector} (which in the above mentioned identification is the normalized orthogonal polynomial of degree $n$) is
$$
\tilde{\Phi}_{\omega;n}:=
\begin{cases}
\frac{1}{\sqrt{\omega_{n}!}}\Phi_{\omega;n}, & \mbox{if } \omega_{n}!\neq0 \\
0, & \mbox{if } \omega_{n}! = 0
\end{cases}
\quad , \quad \forall n\in\mathbb{N}^{\ast}
$$
and hereinafter, for any $n\in{\mathbb N}$
$$
\omega_n!:=\begin{cases}1,&\text{ if }n=0\\
\prod_{k=1}^n\omega_k,&\text{ if }n\ge1
\end{cases}
$$
With these notations, each $z\in\mathbb{C}$, for which the series
\begin{equation}\label{df-om-z-coher-vec}
\Psi_{\omega; z}:=\sum_{n=0}^{\infty} \frac{z^{n}}{\sqrt{\omega_{n}!}}\tilde{\Phi}_{\omega;n}
\end{equation}
is norm--convergent in $\Gamma_{\omega}\left(\mathbb{C}\right)$, defines a vector in this space.
Since
$$
\left\langle\sum_{n=0}^{\infty}\frac{u^{n}}{\sqrt{\omega_{n}!}}\tilde{\Phi}_{\omega;n},
\sum_{n=0}^{\infty}\frac{z^{n}}{\sqrt{\omega_{n}!}}\tilde{\Phi}_{\omega;n}\right\rangle
=\sum_{n=0}^{\infty}\frac{(\overline{u}z)^{n}}{\omega_{n}!}
$$
this is the case if an only if $z$ is such that
\begin{equation}\label{conv-cond-om-z-coher-vec}
\sum_{n=0}^{\infty}\frac{|z|^{2n}}{\omega_{n}!} < +\infty
\end{equation}
Denote
\begin{equation}\label{df-C(omega)}
\mathbb{C}_{\omega}:=\left\{z\in\mathbb{C}\colon \eqref{conv-cond-om-z-coher-vec}
\hbox{ is satisfied}\right\}
\end{equation}
In the following, when $\omega$ is fixed and no confusion is possible, \textbf{we simply omit}
the symbol $\omega$, e.g. we write $\Psi_{z}$ instead of $\Psi_{\omega; z}$.\\

\begin{remark}
Notice that $\mathbb{C}_{\omega}$ in \eqref{df-C(omega)} can be reduced to $\{0\}$.
Since, for $n\ge 1$,
$$
\frac{|z|^{2n}}{\omega_{n}!} =\prod_{k=1}^{n}\frac{|z|^{2}}{\omega_{k}}
$$
this happens for example if $\omega_{n}\to 0$ as $n\to +\infty$.
\end{remark}
\begin{definition}
\upshape
For any $z\in\mathbb{C}$,
the vector \eqref{df-om-z-coher-vec} is called the
$\omega$--\textbf{exponential vector} associated to $z$ if the series in \eqref{df-om-z-coher-vec} converges.  A normalized exponential vector is also called a \textbf{coherent vector}.
The state associated to $\Psi_ z:=\Psi_{\omega; z}$, namely
\begin{equation}\label{df-coher-st}
\psi _{z}\left(\cdot\right)
:=\frac{1}{\left\Vert \Psi_{z}  \right\Vert ^{2}}\left\langle \Psi_{z}  ,(\, \cdot \, )\Psi_{z}  \right\rangle
\end{equation}
is called the \textbf{coherent state} associated to $z$. The set of these states is denoted
$\hbox{Exp}(\omega)$.

If $X$ is a \textbf{symmetric} random variable with probability distribution $\mu$, we also say
that $\Psi_{z}$ is the $X$--exponential vector (or the $\mu$--exponential vector)
associated to $z$ and similarly for the associated coherent states and
for $\mathbb{C}_{\omega}$.
\end{definition}
\noindent Recall that the number operator $\Lambda\equiv \Lambda_{\omega}$ on $\Gamma\left(\omega\right)$ is characterized by
$$
\Lambda\Phi_{\omega;n}:= n\Phi_{\omega;n}\quad,\quad\forall n\in\mathbb{N}
$$
\begin{remark}
For any $z\in\mathbb{C}_{\omega}$ and $t\in\mathbb{C}_{\omega}$,
\begin{equation}\label{e(u-Lambda)Psi-z}
e^{t\Lambda}\Psi_{z}
=\sum_{n=0}^{\infty}\frac{z^{n}}{\sqrt{\omega_{n}!}}e^{tn}\tilde{\Phi}_{\omega;n}
=\sum_{n=0}^{\infty}\frac{(e^{t}z)^{n}}{\sqrt{\omega_{n}!}}\tilde{\Phi}_{\omega;n}
=\Psi_{e^{t}z}
\end{equation}
For example, if $\omega_{n}=1$ for $n$ sufficiently big (e.g. both arc--sine and semicircle for $\omega_2=1$), one has $\mathbb{C}_{\omega}=\{z\in \mathbb{C}:\vert z\vert<1\}$ and so $e^{t\Lambda}$ is well--defined if and only if $\vert e^t\vert\le1$
\end{remark}
\begin{remark}
From now on we will only deal with principal Jacobi sequences $\omega$ such that
\begin{equation}\label{supp-om-inf}
n_{\omega}{:=\max\{k:\omega_k>0\} }= +\infty \ \ \hbox{or equivalently } \ \omega_{n}>0 \ , \ \forall n\in\mathbb{N}
\end{equation}
\end{remark}
Notice that the definition of  $\omega$--\textbf{exponential vector} makes sense if,
keeping the same sequence $\omega$, one replaces $\Gamma_\omega$ by any
Hilbert space $\mathcal{K}$ and the $\tilde{\Phi}_{\omega;n}$ by any ortho--normal
basis $(e_{\mathcal{K},n})$ of $\mathcal{K}$.\\
\noindent In particular, one can take $\mathcal{K}:=\Gamma_{\omega'}$, where
$\omega'\ne\omega$ is another principal Jacobi sequence (also satisfying \eqref{supp-om-inf}),
$e_{\mathcal{K},n}= \tilde{\Phi}_{\omega';n}$ (for each $n\in\mathbb{N}$) and define
\begin{equation}\label{df-om-om'-z-coher-vec}
\Psi_{\omega, \omega';z}:=\sum_{n=0}^{\infty}\frac{z^{n}} {\sqrt{\omega_{n}!}}\tilde{\Phi}_{\omega';n}
\in\Gamma_{\omega'}
\end{equation}
Under condition \eqref{supp-om-inf}, the map
\begin{equation}\label{df-U(omega,omega')}
\tilde{\Phi}_{\omega;n}\mapsto\tilde{\Phi}_{\omega';n}
\end{equation}
is a $1$--to--$1$ correspondence between ortho--normal bases, hence it
uniquely extends by linearity and continuity to a unitary isomorphism
$$U_{\omega, \omega'}\colon \Gamma_{\omega}\to\Gamma_{\omega'}.$$
Since the convergence condition \eqref{df-om-z-coher-vec} is independent of the choice of the
ortho--normal basis, the two series \eqref{df-om-z-coher-vec} and \eqref{df-om-om'-z-coher-vec}
converge for the same set of $z$, i.e. $\mathbb{C}_{\omega}$. \\
The map $U_{\omega', \omega}\colon\Gamma_{\omega'}\to\Gamma_{\omega}$ defined
by \eqref{df-U(omega,omega')} with the roles of $\omega$ and $\omega'$ exchanged satisfies
\begin{equation}\label{df-U(omega,omega')b}
U_{\omega', \omega}\Psi_{\omega';z}
=\sum_{n=0}^{\infty}\frac{z^{n}}{\sqrt{\omega_{n}!}}U_{\omega', \omega}\tilde{\Psi}_{\omega';n}
=\sum_{n=0}^{\infty}\frac{z^{n}}{\sqrt{\omega_{n}!}}\tilde{\Psi}_{\omega;n}
\end{equation}
Since $U_{\omega, \omega'}$ is a unitary isomorphism, it follows that
$$
\langle\Psi_{ \omega';z}, \Psi_{\omega';z'}\rangle
=\langle\Psi_{\omega; z},\Psi_{\omega; z'}\rangle
\quad,\quad\forall z,z'\in\mathbb{N}
$$
In conclusion: the vectors $(\Psi_{\omega';z})_{z\in\mathbb{C}_{\omega}}$ are an
isomorphic copy, in $\Gamma_{\omega'}$, of the vectors $\Psi_{\omega; z}\in\Gamma_{\omega}$.\\


\subsection{Boson exponential vectors}

It is known (see \cite{[AcLu20-Nigeria21]})
that the $1$--mode boson Fock space $\Gamma_{Bos}$ is the $1$MIFS
corresponding to the principal Jacobi sequence $\omega_{n}= n$. In this case, the
$\tilde{\Phi}_{Bos,n}$ are the normalized Hermite polynomials and the normalized
exponential vectors take the familiar form
$$
\ \Psi _{Bos}\left(z\right)
:=\sum_{n=0}^{\infty}\frac{z^{n}}{\sqrt{n!}}\tilde{\Phi}_{Bos,n}
$$
They belong to $\Gamma_{Bos}$ for any $z\in\mathbb{C}$ and satisfy
\begin{equation}\label{scal-prod-AS-coher-vects10}
\left\langle \Psi _{Bos}\left(u\right)  ,\Psi _{Bos}\left(z\right)  \right\rangle
=\sum_{n=0}^{\infty}
\frac{\left(\overline{u}z\right)^{n}}{n!}=e^{\overline{u}z}\quad,\quad \forall u,z\in\mathbb{C}
\end{equation}
in particular
$$
\left\Vert \Psi _{Bos}\left(z\right)  \right\Vert ^{2}
=e^{\left\vert z\right\vert ^{2}}
$$
The corresponding coherent states are denoted
\begin{equation}\label{AS-coher-sts11}
\psi _{Bos;z}(\, \cdot \,)
\overset{\eqref{df-coher-st}}{=} e^{-\left\vert z\right\vert ^{2}}
\left\langle \Psi _{Bosz} , (\, \cdot \,)\Psi_{Bosz}  \right\rangle
\end{equation}

\subsubsection{Distribution of the number operator in a Boson coherent state }
\hfill\break

\begin{proposition}
If $\omega_{n}=n$ for any $n\in\mathbb{N}^{\ast}$, then the $\psi_{z}$--distribution
of $\Lambda$ is:\\
-- the 1 point distribution at the origin (i.e. $\delta_{0}$) if $z=0$;

\noindent-- the Poisson distribution with parameter
$\left\vert z\right\vert^{2}$ if $z\in\mathbb{C}\setminus\left\{0\right\}$.
\end{proposition}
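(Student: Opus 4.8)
The plan is to reduce everything to the eigen-decomposition of the number operator $\Lambda$ in the orthonormal basis $(\tilde\Phi_{Bos,n})_{n\in\mathbb{N}}$ of normalized Hermite polynomials and then simply read off the distribution. First I would record the bookkeeping identity: for $\omega_n=n$ one has $\omega_n!=n!$, so the convergence condition \eqref{conv-cond-om-z-coher-vec} holds for every $z\in\mathbb{C}$ (i.e. $\mathbb{C}_\omega=\mathbb{C}$), the exponential vector $\Psi_z$ of \eqref{df-om-z-coher-vec} is defined for all $z$, and $\|\Psi_z\|^2=\sum_{n\ge0}|z|^{2n}/n!=e^{|z|^2}$ by \eqref{scal-prod-AS-coher-vects10}.

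Next I would use that $\Lambda\tilde\Phi_{Bos,n}=n\tilde\Phi_{Bos,n}$ together with the fact that the $\tilde\Phi_{Bos,n}$ form an orthonormal basis: for any bounded Borel function $g$ on $\mathbb{R}$ the functional calculus gives $g(\Lambda)\tilde\Phi_{Bos,n}=g(n)\tilde\Phi_{Bos,n}$, and inserting the norm-convergent series \eqref{df-om-z-coher-vec} into the coherent state \eqref{df-coher-st} and invoking orthonormality yields
\begin{equation*}
\psi_z\big(g(\Lambda)\big)=\frac{1}{\|\Psi_z\|^2}\big\langle\Psi_z,g(\Lambda)\Psi_z\big\rangle=\frac{1}{\|\Psi_z\|^2}\sum_{n\ge0}\frac{|z|^{2n}}{n!}\,g(n).
\end{equation*}
Hence the $\psi_z$-distribution of $\Lambda$ is the discrete measure on $\mathbb{N}$ with weights $p_n=e^{-|z|^2}|z|^{2n}/n!$.

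Finally I would separate the two cases. For $z\neq0$ the weights $(p_n)$ are by definition the Poisson law of parameter $|z|^2$, proving that assertion. For $z=0$, with the convention $0^0=1$ the series \eqref{df-om-z-coher-vec} collapses to its single term $\Psi_0=\tilde\Phi_{Bos,0}$, which is the vacuum; then $\Lambda\Psi_0=0$, so $\psi_0(g(\Lambda))=g(0)$ and the distribution is $\delta_0$ (consistently $p_0=1$ and $p_n=0$ for $n\ge1$). I do not expect a genuine obstacle here: the only steps needing a word of justification are the interchange of the bounded operator $g(\Lambda)$ with the norm-convergent defining series of $\Psi_z$ (legitimate by continuity of $g(\Lambda)$ and of the inner product, since $\Lambda$ is diagonal in the chosen basis) and the identity $\omega_n!=n!$ used to identify $\|\Psi_z\|^2$ with $e^{|z|^2}$.
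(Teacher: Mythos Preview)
Your proof is correct. The paper takes a slightly different route: it computes the characteristic function $\psi_z(e^{it\Lambda})$ by first using the identity $e^{it\Lambda}\Psi_z=\Psi_{e^{it}z}$ (equation \eqref{e(u-Lambda)Psi-z}) and the scalar product formula \eqref{scal-prod-AS-coher-vects10} to obtain $\psi_z(e^{it\Lambda})=e^{|z|^2(e^{it}-1)}$, and then recognizes this as the Poisson characteristic function. Your approach instead reads the weights $p_n=e^{-|z|^2}|z|^{2n}/n!$ directly from the orthonormal expansion, which is arguably more transparent here. The paper's detour through $e^{it\Lambda}\Psi_z=\Psi_{e^{it}z}$ is what pays off in the later sections, where the same identity is reused to handle the semi-circle and arc-sine cases uniformly; your method would also work there but would require recomputing the weights each time rather than exploiting the exponential-vector structure.
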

\begin{proof} One has
\begin{equation}\label{num01b-2}
e^{it\Lambda}\Psi_{z} =\sum_{n=0}^{\infty} \frac{z^{n}}{\sqrt{n!}}e^{itn}\tilde{\Phi}_{\omega;n}
=\Psi_{e^{it}z}  ,\ \ \forall t\in\mathbb{R}\hbox{ and }z\in\mathbb{C}
\end{equation}

\begin{align}
& \psi _{z}\left(e^{it\Lambda}\right)  \overset
{{(\ref{AS-coher-sts11})}}{=}e^{-\left\vert z\right\vert ^{2}}\left\langle
\Psi_{z}  ,e^{it\Lambda}\Psi_{z}\right\rangle \overset{{(\ref{e(u-Lambda)Psi-z})}}{=}
e^{-\left\vert z\right\vert ^{2}}\left\langle \Psi_{z}
,\Psi_{e^{it}z}  \right\rangle \label{num01e}\\
&\overset{{(\ref{scal-prod-AS-coher-vects10})}}{=}
e^{-\left\vert z\right\vert ^{2}}e^{\left\vert z\right\vert ^{2}e^{it}}
=e^{\left\vert z\right\vert^{2}\left(e^{it}-1\right)},\ \ \forall t\in\mathbb{R}\nonumber
\end{align}
The thesis follows because the function
$t\longmapsto e^{\left\vert z\right\vert^{2}\left(e^{it}-1\right)}$
is the characteristic function of
$$
\begin{cases}
  \delta_{0}, & \mbox{if } z=0 \\
\hbox{the Poisson distribution with parameter} \left\vert z\right\vert ^{2}, & \mbox{if } z\neq0
\end{cases}
$$
\end{proof}

\subsection{Semi--circle exponential vectors}

\begin{corollary}
\upshape
If $\omega_{n}=\omega_{1}>0$ is constant for $n\geq 1$ (this case corresponds to the Semi--circle
distribution), then
$\Psi_{\omega;z}  $ is well defined for any
$$
z\in\mathbb{C}_{Sc,\omega_{1}}:=\left\{z\in\mathbb{C}:\left\vert z\right\vert <\sqrt{\omega_{1}}\right\}
$$
and one has, with the convention $x^0:=1$ for any $x\in\mathbb{C}$,
\begin{equation}\label{S-c-coher-vects}
\Psi_{z}:=\Psi_{\omega;z}
=\sum_{n=0}^{\infty}\left(\frac{z}{\sqrt{\omega_{1}}}\right)^{n}\tilde{\Phi}_{\omega;n}
\quad,\quad \forall z\in\mathbb{C}_{Sc,\omega_{1}}
\end{equation}
\begin{equation}\label{scal-prod-S-c-coher-vects}
\left\langle \Psi_{u}  ,\Psi_{z}  \right\rangle
=\frac{1}{1-\frac{\overline{u}z}{\omega_{1}}}
\quad, \quad \forall u,z\in\mathbb{C}_{Sc,\omega_{1}}
\end{equation}
\begin{equation}\label{norm-S-c-coher-vects}
\left\Vert \Psi_{z}  \right\Vert ^{2}
=\frac{1}{1-\frac{|z|^2}{\omega_{1}}}
\quad, \quad \forall z\in\mathbb{C}_{Sc,\omega_{1}}
\end{equation}
The corresponding coherent states are
\begin{equation}\label{S-c-coher-sts}
\psi _{z}\left(\cdot\right)
=\left(1-\frac{|z|^2}{\omega_{1}}\right)
\left\langle \Psi_{z},(\, \cdot \, )\Psi_{z}  \right\rangle
\quad,\quad\forall z\in\mathbb{C}_{Sc,\omega_{1}}
\end{equation}
\end{corollary}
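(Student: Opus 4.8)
The plan is to specialize the general construction \eqref{df-om-z-coher-vec}--\eqref{df-C(omega)} to the constant principal Jacobi sequence $\omega_n\equiv\omega_1$, where the interacting--Fock--space combinatorics collapses to an elementary geometric series. First I would compute, using the convention $\omega_0!=1$, that $\omega_n!=\prod_{k=1}^n\omega_k=\omega_1^n$ for every $n\in\mathbb N$, so that $\sqrt{\omega_n!}=(\sqrt{\omega_1})^n$ and hence $z^n/\sqrt{\omega_n!}=(z/\sqrt{\omega_1})^n$. Substituting this into \eqref{df-om-z-coher-vec} gives \eqref{S-c-coher-vects} directly, while the convergence criterion \eqref{conv-cond-om-z-coher-vec} becomes $\sum_{n\ge0}(|z|^2/\omega_1)^n<+\infty$, a geometric series which is finite precisely when $|z|^2/\omega_1<1$, i.e. $|z|<\sqrt{\omega_1}$. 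This identifies the set $\mathbb C_\omega$ of \eqref{df-C(omega)} with $\mathbb C_{Sc,\omega_1}$ and proves that $\Psi_{\omega;z}$ is well defined exactly on that disc.

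Next I would evaluate the scalar product: since $(\tilde\Phi_{\omega;n})_{n\in\mathbb N}$ is an orthonormal family and both series defining $\Psi_u$ and $\Psi_z$ converge in norm for $u,z\in\mathbb C_{Sc,\omega_1}$, continuity of the inner product gives $\langle\Psi_u,\Psi_z\rangle=\sum_{n\ge0}\overline{(u/\sqrt{\omega_1})^n}\,(z/\sqrt{\omega_1})^n=\sum_{n\ge0}(\overline uz/\omega_1)^n$. Because $|\overline uz/\omega_1|=|u||z|/\omega_1<1$ for $u,z$ in the disc, this geometric series sums to $(1-\overline uz/\omega_1)^{-1}$, which is \eqref{scal-prod-S-c-coher-vects}. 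Putting $u=z$ yields the norm formula \eqref{norm-S-c-coher-vects}, and inserting that norm into the definition \eqref{df-coher-st} of the coherent state gives \eqref{S-c-coher-sts}.

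I do not expect a genuine obstacle in this argument: the whole point is that constancy of the Jacobi sequence turns the definition of the $\omega$--exponential vector into the geometric--series identity, in exact parallel with (and as a degenerate analogue of) the Boson computation \eqref{scal-prod-AS-coher-vects10}. The only step deserving a word of justification is the termwise evaluation of $\langle\Psi_u,\Psi_z\rangle$, but since both $\Psi_u$ and $\Psi_z$ are norm--convergent elements of $\Gamma_\omega(\mathbb C)$ whenever $u,z\in\mathbb C_{Sc,\omega_1}$, this is immediate from the continuity of the scalar product together with the orthonormality of $(\tilde\Phi_{\omega;n})_{n\in\mathbb N}$.
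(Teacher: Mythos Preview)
Your proof is correct and follows essentially the same route as the paper's: compute $\omega_n!=\omega_1^{\,n}$, recognize the geometric series to identify the domain $\mathbb C_{Sc,\omega_1}$ and the explicit form \eqref{S-c-coher-vects}, sum the geometric series for the inner product, and read off the norm and the coherent state. If anything, your version is slightly more careful in justifying the termwise evaluation of the scalar product.
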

\begin{proof}
\eqref{S-c-coher-vects} follows thanks to \eqref{df-om-z-coher-vec} and the fact that
$\omega_n!= (\omega_1)^n$. As a consequence, one obtains
$$
\left\langle \Psi_{u}  ,\Psi_{z}  \right\rangle
=\sum_{n=0}^{\infty}\left(\frac{\overline{u}z}
{\omega_{1}}\right)^{n}=\frac{1}{1-\frac
{\overline{u}z}{\omega_{1}}}
\quad,\quad \forall u,z\in\mathbb{C}_{Sc,\omega_1}
$$
which is \eqref{scal-prod-S-c-coher-vects} and, from this, \eqref{norm-S-c-coher-vects}
and \eqref{S-c-coher-sts} follow immediately.
\end{proof}

\subsubsection{Distribution of the number operator in a $\mu_{sc}$--coherent state }

\begin{corollary}
\upshape
If $\omega_{n}=\omega_1>0$ for all $n\geq1$, the $\phi_{Sc;z}$--distribution of $\Lambda$ is
\begin{itemize}
\item the $1$-point distribution at the origin (i.e. $\delta_{0}$), if $z=0$;

\item the geometric distribution with parameter $\frac{|z|^2}{\omega_{1}}$, if $z\not=0$.
\end{itemize}
\end{corollary}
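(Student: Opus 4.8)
The plan is to imitate the computation already performed in the Boson case, substituting the geometric scalar product \eqref{scal-prod-S-c-coher-vects} for the exponential one. The case $z=0$ is immediate: then $\Psi_{0}=\tilde{\Phi}_{\omega;0}$ is the vacuum vector, so $\Lambda\Psi_{0}=0$ and the $\psi_{0}$--distribution of $\Lambda$ is the Dirac mass $\delta_{0}$.

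Assume henceforth $z\neq0$ and set $q:=|z|^{2}/\omega_{1}$, which lies in $[0,1)$ since $z\in\mathbb{C}_{Sc,\omega_{1}}$. First I would record that $e^{it\Lambda}\Psi_{z}=\Psi_{e^{it}z}$ for every $t\in\mathbb{R}$: this is the special case of \eqref{e(u-Lambda)Psi-z} with $t$ replaced by $it$, and it makes sense because $|e^{it}z|=|z|<\sqrt{\omega_{1}}$, so $e^{it}z\in\mathbb{C}_{Sc,\omega_{1}}$ and the right--hand side is a bona fide semicircle exponential vector. Combining this with the definition \eqref{S-c-coher-sts} of the coherent state and with \eqref{scal-prod-S-c-coher-vects} yields
\begin{align*}
\psi_{z}\left(e^{it\Lambda}\right)
&=\left(1-\frac{|z|^{2}}{\omega_{1}}\right)\left\langle \Psi_{z},e^{it\Lambda}\Psi_{z}\right\rangle
=\left(1-\frac{|z|^{2}}{\omega_{1}}\right)\left\langle \Psi_{z},\Psi_{e^{it}z}\right\rangle\\
&=\left(1-\frac{|z|^{2}}{\omega_{1}}\right)\frac{1}{1-\overline{z}e^{it}z/\omega_{1}}
=\frac{1-q}{1-q\,e^{it}},\qquad \forall t\in\mathbb{R},
\end{align*}
using \eqref{S-c-coher-sts}, \eqref{e(u-Lambda)Psi-z} and \eqref{scal-prod-S-c-coher-vects} in turn. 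It then suffices to recognize $(1-q)/(1-q\,e^{it})$ as the characteristic function of the probability measure $\sum_{n\ge0}(1-q)q^{n}\delta_{n}$ on $\mathbb{N}$ --- indeed $\sum_{n\ge0}(1-q)q^{n}e^{itn}=(1-q)/(1-q e^{it})$, the series converging because $|q e^{it}|=q<1$ --- so that, by uniqueness of the characteristic function, the $\psi_{z}$--distribution of $\Lambda$ is the geometric law with parameter $q=|z|^{2}/\omega_{1}$.

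Alternatively, and without appealing to uniqueness of characteristic functions, one may read off the distribution directly. Letting $P_{n}$ be the orthogonal projection onto $\mathbb{C}\,\tilde{\Phi}_{\omega;n}$, the spectral decomposition $\Lambda=\sum_{n\ge0}nP_{n}$ shows that the $\psi_{z}$--distribution of $\Lambda$ assigns mass $\psi_{z}(P_{n})=\|\Psi_{z}\|^{-2}\bigl|\langle \tilde{\Phi}_{\omega;n},\Psi_{z}\rangle\bigr|^{2}$ to the point $n$. From the expansion \eqref{S-c-coher-vects} the $n$--th coefficient of $\Psi_{z}$ is $(z/\sqrt{\omega_{1}})^{n}$, so $\bigl|\langle \tilde{\Phi}_{\omega;n},\Psi_{z}\rangle\bigr|^{2}=q^{n}$, while \eqref{norm-S-c-coher-vects} gives $\|\Psi_{z}\|^{-2}=1-q$; hence $\psi_{z}(P_{n})=(1-q)q^{n}$, which is exactly the asserted geometric distribution (and it degenerates to $\delta_{0}$ when $q=0$, recovering the case $z=0$). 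I do not anticipate any genuine obstacle here; the only points deserving a line of care are checking that $e^{it}z$ remains in $\mathbb{C}_{Sc,\omega_{1}}$ so that $\Psi_{e^{it}z}$ is defined, and fixing the convention that "geometric distribution with parameter $q$" means the law $\sum_{n\ge0}(1-q)q^{n}\delta_{n}$.
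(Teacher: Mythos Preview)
Your proof is correct and follows essentially the same route as the paper: compute $\psi_{z}(e^{it\Lambda})$ via \eqref{S-c-coher-sts}, \eqref{e(u-Lambda)Psi-z} and \eqref{scal-prod-S-c-coher-vects}, then identify the resulting expression $(1-q)/(1-qe^{it})$ as the characteristic function of the geometric law. Your treatment is in fact slightly more careful than the paper's (you handle $z=0$ explicitly, verify $e^{it}z\in\mathbb{C}_{Sc,\omega_{1}}$, and supply an alternative spectral-projection argument), but the core idea is the same.
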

\begin{proof}
$$
\psi _{z}\left(e^{it\Lambda}\right)
\overset{\eqref{S-c-coher-sts}}{=}
\left(1-\frac{|z|^2}{\omega_{1}}\right) \left\langle \Psi_{z}, (e^{it\Lambda})\Psi_{z}  \right\rangle
\overset{\eqref{e(u-Lambda)Psi-z}}{=}
\left(1-\frac{|z|^2}{\omega_{1}}\right)\left\langle \Psi_{z},\Psi_{e^{it}z}  \right\rangle
$$
$$
\overset{\eqref{scal-prod-S-c-coher-vects}}{=}
\left(1-\frac{|z|^2}{\omega_{1}}\right)\frac{1}{1-\frac{|z|^2}{\omega_{1}}e^{it}}
= \int_{\mathbb{R}}e^{itz}d\mu_z
$$
where,
	$$
	\mu_{z}({\{k\}}):= \left(1-\frac{|z|^2}{\omega_{1}}\right)\left(\frac{|z|^{2}}{\omega_{1}}\right)^{k}
	\quad,\quad\forall k\in\mathbb{N}
	$$
	i.e. the geometric distribution with parameter $\frac{|z|^2}{\omega_{1}}$.
\end{proof}

\subsection{Generalized Arc--sine distributions and their exponential vectors}

The family of generalized Arc--sine distributions is parametrized by two different strictly
positive numbers $\omega_{1}\ne  \omega_{2}$.
\begin{definition}\label{df:gen-gen-Arc-sin-exp-vects}

The \textbf{generalized Arc--sine distribution} with parameters $(\omega_{1}, \omega_{2})$
is a probability distribution whose principal Jacobi sequence $\omega\equiv (\omega_{n})$
satisfies the condition
\begin{equation}\label{df-gen-Arc-sin-PJS}
\omega_{1} \ne \omega_{2} = \omega_{n} > 0 \quad,\quad  \forall n\geq2
\end{equation}
If, in addition to \eqref{df-gen-Arc-sin-PJS}, also the condition
\begin{equation}\label{Arc-sin-PJS}
\omega_{1} = 2\omega_{2}
\end{equation}
is satisfied, then one speaks of the \textbf{Arc--sine distribution} with parameter $\omega_{1}$.
\end{definition}

\begin{proposition}\label{prop:gen-Arc-sin-exp-vects}
The exponential vectors $\Psi_{z}\equiv \Psi_{\omega_{1}, \omega_{2};z}$ of the generalized
Arc--sine distribution with parameters $(\omega_{1}, \omega_{2})$ are well defined for any
$$
z\in\mathbb{C}_{As,\omega_{1}, \omega_{2}}
:=\left\{z\in\mathbb{C}:\left\vert z\right\vert <\sqrt{\omega_{2}}\right\}
$$
and one has, writing simply $\tilde{\Phi}_{\omega;n}$ for the orthogonal polynomial
$\tilde{\Phi}_{\omega_{1}, \omega_{2};n}$,
\begin{equation}\label{A-s-coher-vects}
\Psi_{z}
=\tilde{\Phi}_{\omega;0} + \frac{z}{\sqrt{\omega_{1}}}\left(\tilde{\Phi}_{\omega;1}
+\sum_{n=2}^{\infty}\left(\frac{z}{\sqrt{\omega_{2}}}\right)^{n-1}\tilde{\Phi}_{\omega;n}\right)
\end{equation}
\begin{equation}\label{scal-prod-AS-coher-vects}
\left\langle \Psi_{u}  ,\Psi_{z}  \right\rangle
=\frac{\omega_{1}\omega_{2}-\left(\omega_{1}-\omega_{2}\right)
\overline{u}z}{\omega_{1}\left(\omega_{2}-\overline{u}z\right)}
\quad, \quad \forall u,z\in\mathbb{C}_{As}
\end{equation}
\begin{equation}\label{norm-A-s-coher-vects}
\left\Vert \Psi_{z}  \right\Vert ^{2}
=\frac{\omega_{1}\omega_{2}-\left(\omega_{1}-\omega_{2}\right)
\left\vert z\right\vert ^{2}}{\omega_{1}\left(\omega_{2}-\left\vert z\right\vert^{2}\right)}
\quad, \quad \forall z\in\mathbb{C}_{As}
\end{equation}
The corresponding coherent states are
\begin{equation}\label{AS-coher-sts}
\psi _{z}\left(\cdot\right)
=\frac{\omega_{1}\left(\omega_{2}-\left\vert z\right\vert ^{2}\right)}
{\omega_{1}\omega_{2}-\left(\omega_{1}-\omega_{2}\right)
\left\vert z\right\vert ^{2}}\left\langle \Psi_{z},(\, \cdot \, )\Psi_{z}  \right\rangle
\quad,\quad\forall z\in\mathbb{C}_{As}
\end{equation}
\end{proposition}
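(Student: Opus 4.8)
The plan is to obtain all four displayed formulas directly from the general definition \eqref{df-om-z-coher-vec} of the $\omega$--exponential vector, specialized to a principal Jacobi sequence satisfying \eqref{df-gen-Arc-sin-PJS}. The only computational input needed is the value of the factorials: by \eqref{df-gen-Arc-sin-PJS} one has $\omega_0!=1$ and, for $n\geq1$, $\omega_n!=\prod_{k=1}^{n}\omega_k=\omega_1\omega_2^{\,n-1}$. Substituting this into \eqref{df-om-z-coher-vec} and isolating the $n=0$ term gives
\begin{equation*}
\Psi_z=\tilde{\Phi}_{\omega;0}+\sum_{n=1}^{\infty}\frac{z^{n}}{\sqrt{\omega_1\omega_2^{\,n-1}}}\,\tilde{\Phi}_{\omega;n}
=\tilde{\Phi}_{\omega;0}+\frac{z}{\sqrt{\omega_1}}\sum_{n=1}^{\infty}\Big(\frac{z}{\sqrt{\omega_2}}\Big)^{n-1}\tilde{\Phi}_{\omega;n},
\end{equation*}
which, after peeling off the $n=1$ summand, is exactly \eqref{A-s-coher-vects}.

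For the domain I would feed the same factorial values into the convergence criterion \eqref{conv-cond-om-z-coher-vec}, obtaining
\begin{equation*}
\sum_{n=0}^{\infty}\frac{|z|^{2n}}{\omega_n!}=1+\frac{|z|^{2}}{\omega_1}\sum_{n=1}^{\infty}\Big(\frac{|z|^{2}}{\omega_2}\Big)^{n-1},
\end{equation*}
and note that the geometric series on the right converges precisely when $|z|^{2}<\omega_2$, i.e. $z\in\mathbb{C}_{As,\omega_1,\omega_2}$; thus $\Psi_z$ is well defined on exactly the claimed set. Next, for $u,z\in\mathbb{C}_{As,\omega_1,\omega_2}$, orthonormality of $(\tilde{\Phi}_{\omega;n})_{n}$ gives $\langle\Psi_u,\Psi_z\rangle=\sum_{n\geq0}(\overline{u}z)^{n}/\omega_n!$, and summing the geometric series as above yields
\begin{equation*}
\langle\Psi_u,\Psi_z\rangle=1+\frac{\overline{u}z}{\omega_1}\cdot\frac{\omega_2}{\omega_2-\overline{u}z}=\frac{\omega_1\omega_2-(\omega_1-\omega_2)\overline{u}z}{\omega_1(\omega_2-\overline{u}z)},
\end{equation*}
which is \eqref{scal-prod-AS-coher-vects}. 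Setting $u=z$ produces \eqref{norm-A-s-coher-vects}, and inserting $\|\Psi_z\|^{2}$ into the definition \eqref{df-coher-st} of the coherent state gives \eqref{AS-coher-sts}.

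There is no genuine obstacle here: the proposition is essentially a bookkeeping consequence of the geometric-series structure of $\omega_n!$ for the generalized arc--sine sequence. The only points that warrant a sentence of care are the separate treatment of the index $n=0$ forced by the convention $\omega_0!=1$ (which is what makes the answer a \emph{perturbation} of the semi--circle formula rather than a plain geometric sum), and the remark that on $\mathbb{C}_{As,\omega_1,\omega_2}$ one has $|\overline{u}z|=|u|\,|z|<\sqrt{\omega_2}\cdot\sqrt{\omega_2}=\omega_2$, so that every geometric series appearing in the argument is absolutely convergent and the manipulations above are legitimate.
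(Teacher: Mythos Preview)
Your proof is correct and follows essentially the same route as the paper: compute $\omega_n!=\omega_1\omega_2^{\,n-1}$, substitute into \eqref{df-om-z-coher-vec}, and sum the resulting geometric series to obtain the scalar product, norm, and coherent state. If anything, you are slightly more careful than the paper in explicitly verifying the convergence domain and noting that $|\overline{u}z|<\omega_2$ justifies the series manipulations.
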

\begin{proof}
\eqref{A-s-coher-vects} follows from
$$
\ \Psi_{z}
=\tilde{\Phi}_{\omega;0}+\frac{z}{\sqrt{\omega_{1}}}\tilde{\Phi}_{\omega;1}
+\sum_{n=2}^{\infty}\frac{z^{n}}{\sqrt{\omega_{1}}\sqrt{\omega_{2}}^{n-1}}\tilde{\Phi}_{\omega;n}
$$
$$
=\tilde{\Phi}_{\omega;0}+\frac{z}{\sqrt{\omega_{1}}}\left(\tilde{\Phi}_{\omega;1}
+\sum_{n=2}^{\infty}\left(\frac{z}{\sqrt{\omega_{2}}}\right)^{n-1}\tilde{\Phi}_{\omega;n}\right)
$$
From it one deduces that
\begin{align}
\left\langle \Psi_{u}  ,\Psi_{z} \right\rangle
& =1+\frac{\overline{u}z}{\omega_{1}}\left(1+\sum_{n=2}^{\infty}
\frac{\left(\overline{u}z\right)^{n-1}}{\omega_{2}^{n-1}}\right)  =1+\frac{\overline{u}z}{\omega_{1}}\sum_{m=0}^{\infty}
\frac{\left(\overline{u}z\right)^{m}}{\omega_{2}^{m}}
=1+\frac{\frac{\overline{u}z}{\omega_{1}}}{1-\frac{\overline{u}z}{\omega_{2}}}\nonumber\\
& =1+\frac{\omega_{2}}{\omega_{1}}\frac{\overline{u}z}{\omega_{2}-\overline
{u}z}=\frac{\omega_{1}\omega_{2}-\left(\omega_{1}-\omega_{2}\right)\overline{u}z}
{\omega_{1}\left(\omega_{2}-\overline{u}z\right)},\ \ \forall u,z\in\mathbb{C}_{As}\nonumber
\end{align}
and this proves \eqref{scal-prod-AS-coher-vects}. In particular
$$
\left\Vert \Psi_{z}  \right\Vert ^{2}
=1+\frac{\frac{\left\vert z\right\vert ^{2}}
{\omega_{1}}}{1-\frac{\left\vert z\right\vert ^{2}}{\omega_{2}}}
=1+\frac{\omega_{2}}{\omega_{1}}\frac{\left\vert z\right\vert ^{2}}
{\omega_{2}-\left\vert z\right\vert ^{2}}
=\frac{\omega_{1}\omega_{2}-\left(\omega_{1}-\omega_{2}\right)  \left\vert
z\right\vert ^{2}}{\omega_{1}\left(\omega_{2}-\left\vert z\right\vert^{2}\right)}
$$
which is \eqref{norm-A-s-coher-vects}.
For the corresponding coherent state, one has, for any $z\in\mathbb{C}_{As}$,
\begin{equation*}
\psi _{z}\left(\cdot\right)
\overset{\eqref{df-coher-st}}{=}
\frac{1}{\left\Vert \Psi_{z}  \right\Vert ^{2}}\left\langle
\Psi_{z}  ,\cdot\Psi_{z}  \right\rangle
\overset{\eqref{norm-A-s-coher-vects}}{=}
\frac{\omega_{1}\left(\omega_{2}-\left\vert z\right\vert ^{2}\right)}
{\omega_{1}\omega_{2}-\left(\omega_{1}-\omega_{2}\right)  \left\vert
z\right\vert ^{2}}\left\langle \Psi_{z}
,\cdot\Psi_{z}  \right\rangle
\end{equation*}
which proves \eqref{norm-A-s-coher-vects}.  \end{proof}

\subsubsection{Distribution of the number operator in a generalized $\mu_{as}$--coherent state}

\hfill\break

\noindent We need the following result.
\begin{lemma}\label{lm:Four-transf-gen-AS}
Suppose that $\mu_{z}$ is a probability measure on $\mathbb{R}$, with support equal
to $\mathbb{N}$, whose Fourier transform has the form
\begin{equation}\label{Four-transf-discr-meas}
\hat{\mu_{z}}(t):=\int_{\mathbb{R}}e^{itx}\mu_{z}(dx)
=a\left(1+\frac{be^{it}}{\omega_{1}\left(1-ce^{it}\right)}\right)
\end{equation}
with
\begin{equation}\label{conds-on-par-discr-meas}
b > 0 \quad;\quad c\in (0,1)\quad;\quad \omega_{1}> 0
\end{equation}
Then
\begin{equation}\label{struc-a-discr-meas}
\mu_{z}(\{0\}) = a=\frac{1}{1+\frac{b}{\omega_{1}\left(1-c\right)}}
\end{equation}
\begin{equation}\label{struc-mu-z-discr-meas}
\mu_{z}({\{n\}}) =\frac{b}{\omega_{1}\left(1+\frac{b} {\omega_{1}\left(1-c\right)}\right)}c^{n-1}
\quad ,\quad n\ge 1
\end{equation}
\end{lemma}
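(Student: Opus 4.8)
The plan is to expand the right-hand side of \eqref{Four-transf-discr-meas} as an absolutely convergent Fourier series in $e^{it}$ and then match it, coefficient by coefficient, with the Fourier series $\sum_{n\ge 0}\mu_{z}(\{n\})e^{int}$ coming from the fact that $\mu_{z}$ is supported on $\mathbb{N}$. The whole argument is elementary; the only things being used are the geometric series, uniqueness of Fourier coefficients, and the normalization $\mu_{z}(\mathbb{R})=1$.

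First I would note that since $c\in(0,1)$ we have $|ce^{it}|=c<1$ for all real $t$, so
\begin{equation*}
\frac{1}{1-ce^{it}}=\sum_{k\ge 0}c^{k}e^{ikt},
\end{equation*}
the convergence being absolute and uniform in $t$. Multiplying by $abe^{it}/\omega_{1}$ and adding $a$ gives
\begin{equation*}
\hat{\mu_{z}}(t)=a+\frac{ab}{\omega_{1}}\sum_{k\ge 0}c^{k}e^{i(k+1)t}
=a+\frac{ab}{\omega_{1}}\sum_{n\ge 1}c^{n-1}e^{int}.
\end{equation*}
On the other hand, since $\operatorname{supp}(\mu_{z})=\mathbb{N}$, by definition $\hat{\mu_{z}}(t)=\sum_{n\ge 0}\mu_{z}(\{n\})e^{int}$, an absolutely convergent series because $\sum_{n}\mu_{z}(\{n\})=1$. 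Comparing the two absolutely convergent trigonometric series (equivalently, integrating both against $e^{-int}\,dt/2\pi$ over $[0,2\pi]$), uniqueness of Fourier coefficients yields $\mu_{z}(\{0\})=a$ and $\mu_{z}(\{n\})=\tfrac{ab}{\omega_{1}}c^{n-1}$ for every $n\ge 1$.

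It then remains to pin down $a$. Evaluating at $t=0$, or equivalently using $\mu_{z}(\mathbb{R})=1$ together with $\sum_{n\ge 1}c^{n-1}=1/(1-c)$, I get
\begin{equation*}
1=\hat{\mu_{z}}(0)=a\left(1+\frac{b}{\omega_{1}(1-c)}\right),
\end{equation*}
so $a=\bigl(1+\tfrac{b}{\omega_{1}(1-c)}\bigr)^{-1}$, which is \eqref{struc-a-discr-meas}; substituting this value into $\mu_{z}(\{n\})=\tfrac{ab}{\omega_{1}}c^{n-1}$ gives \eqref{struc-mu-z-discr-meas}.

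There is essentially no serious obstacle here: the only point deserving a word of care is the legitimacy of identifying the coefficients of the two Fourier expansions, which is immediate from the absolute convergence guaranteed by $c\in(0,1)$ and by $\mu_{z}$ being a probability measure. The hypotheses \eqref{conds-on-par-discr-meas} are exactly what makes both series converge and makes the resulting weights $\mu_{z}(\{n\})$ manifestly nonnegative, so the computation is consistent with $\mu_{z}$ actually being a probability measure.
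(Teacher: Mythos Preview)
Your proof is correct and follows essentially the same approach as the paper: expand the geometric series in $ce^{it}$, match Fourier coefficients with $\sum_{n\ge 0}\mu_{z}(\{n\})e^{int}$, and use $\hat{\mu_{z}}(0)=1$ to determine $a$.
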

\begin{proof}
Since $|c|<1$,
\begin{equation}\label{num01f-1}
\hat{\mu_{z}}(t):=\int_{\mathbb{R}}e^{itx}\mu_{z}(dx)
= \mu_{z}({\{0\}}) + \sum_{n=1}^{\infty}e^{itn}\mu_{z}({\{n\}})
\end{equation}
$$
=a\left(1+\frac{be^{it}}{\omega_{1}\left(1-ce^{it}\right)}\right)
= a+\frac{abe^{it}}{\omega_{1}} \frac{1}{\left(1-ce^{it}\right)}
$$
$$
= a+\frac{abe^{it}}{\omega_{1}}\sum_{n=0}^{\infty}
c^{n}e^{itn}
= a+\frac{ab}{\omega_{1}}\sum_{n=0}^{\infty}
c^{n}e^{it(n+1)}
= a+\sum_{n=1}^{\infty}\frac{ab}{\omega_{1}}c^{n-1}e^{itn}
$$
Identifying the coefficients of $e^{itn}$ in \eqref{num01f-1}, one finds
	\begin{equation}\label{num01f-3}
	\mu_{z}(0) = a \quad;\quad \mu_{z}(n) = \frac{ab}{\omega_{1}}c^{n-1} \quad,\quad n\ge 1
	\end{equation}
and \eqref{struc-a-discr-meas} follows using the condition
	$$
	\hat{\mu_{z}}(0)=1
	\iff 1 = a\left(1+\frac{b}{\omega_{1}\left(1-c\right)}\right)
	\iff \frac{1}{\left(1+\frac{b}{\omega_{1}\left(1-c\right)}\right)}  = a
	$$
Using this, the second equality in \eqref{num01f-3} becomes \eqref{struc-mu-z-discr-meas}.
\end{proof}

\begin{remark}
The measure $\mu_{z}$, defined by \eqref{struc-mu-z-discr-meas} is indeed a probability
measure, in fact it is positive because of \eqref{conds-on-par-discr-meas} and
\begin{align*}
\mu_{z}({\{0\}}) + \sum_{n=1}^{\infty}\mu_{z}(\{n\})
&= \frac{1}{1+\frac{b}{\omega_{1}\left(1-c\right)}}
+\frac{b}{\omega_{1}\left(1+\frac{b}{\omega_{1}\left(1-c\right)}\right)}\sum_{n=1}^{\infty}c^{n-1}
\\&= \frac{1}{1+\frac{b}{\omega_{1}\left(1-c\right)}}
+\frac{b}{\omega_{1}\left(1+\frac{b}{\omega_{1}\left(1-c\right)}\right)}\frac{1}{1-c}
\\&=  \frac{1}{1+\frac{b}{\omega_{1}\left(1-c\right)}}
\left(1+\frac{b}{\omega_{1}(1-c)}\right)
=1
\end{align*}
\end{remark}
\begin{proposition}\label{prop:pert-geom-distr}
Let $\omega\equiv (\omega_{n})\equiv (\omega_{1},\omega_{2})$ be as in
\eqref{df-gen-Arc-sin-PJS} (i.e. strictly positive constant for $n\ge 2$) and denote
$\Lambda\equiv \Lambda_{\omega}$ the number operator on $\Gamma_{\omega}$.
Then, for any $\ t\in\mathbb{R}$ and $z\in\mathbb{C}_{As,\omega_{1}, \omega_{2}}$,
\begin{equation}\label{num01b}
e^{it\Lambda}\Psi_{\omega;z} = \Psi_{\omega;e^{it}z}
\end{equation}
\begin{equation}\label{num01f}
\psi_{\omega;z}\left(e^{it\Lambda}\right)
=\frac{\omega_{1}\left(\omega_{2}-\left\vert z\right\vert ^{2}\right)}
{\omega_{1}\omega_{2}-\left(\omega_{1}-\omega_{2}\right)\left\vert z\right\vert ^{2}}
\left(1+\frac{\left\vert z\right\vert ^{2}e^{it}}
{\omega_{1}\left(1-\frac{\left\vert z\right\vert ^{2}}{\omega_{2}}e^{it}\right)}\right)
\end{equation}
and the $\psi_{\omega;z}$--distribution of $\Lambda$ is
\begin{equation}\label{Lamb-psi-z-distr-Gen-AS0}
\begin{cases}
\mu_{z}({\{0\} })=
\frac{\omega_{1}\left(\omega_{2}-\left\vert z\right\vert ^{2}\right)}
{\omega_{1}\omega_{2}-\left(\omega_{1}-\omega_{2}\right)\left\vert z\right\vert ^{2}}\\
\mu_{z}({\{n\} }) =
\frac{\omega_{1}\left(\omega_{2}-\left\vert z\right\vert ^{2}\right)}
{\omega_{1}\omega_{2}-\left(\omega_{1}-\omega_{2}\right)\left\vert z\right\vert ^{2}}
\frac{\left\vert z\right\vert ^{2}}{\omega_{1}}
\left(\frac{\left\vert z\right\vert ^{2}}{\omega_{2}}\right)^{n-1}
\quad, \quad  n\ge 1
            \end{cases}
\end{equation}
\end{proposition}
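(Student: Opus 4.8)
The plan is to establish the three assertions of Proposition \ref{prop:pert-geom-distr} in turn, each reducing to results already proved. First, for \eqref{num01b} I would apply $e^{it\Lambda}$ term by term to the norm-convergent series \eqref{df-om-z-coher-vec}: for $t\in\mathbb{R}$ one has $|e^{it}|=1$, hence $e^{it}z\in\mathbb{C}_{As,\omega_{1},\omega_{2}}$, and since $\Lambda\tilde{\Phi}_{\omega;n}=n\tilde{\Phi}_{\omega;n}$ the computation is literally the one in \eqref{e(u-Lambda)Psi-z}, giving $e^{it\Lambda}\Psi_{\omega;z}=\Psi_{\omega;e^{it}z}$.

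Next, for \eqref{num01f} I would substitute \eqref{num01b} into the definition \eqref{AS-coher-sts} of the coherent state, reducing the matter to evaluating $\langle\Psi_{z},\Psi_{e^{it}z}\rangle$. Applying the scalar-product formula \eqref{scal-prod-AS-coher-vects} with $\overline{u}z$ replaced by $|z|^{2}e^{it}$ — or, more transparently, using the intermediate form $\langle\Psi_{u},\Psi_{z}\rangle=1+(\overline{u}z/\omega_{1})(1-\overline{u}z/\omega_{2})^{-1}$ obtained in the proof of Proposition \ref{prop:gen-Arc-sin-exp-vects} — together with \eqref{norm-A-s-coher-vects} for $\Vert\Psi_{z}\Vert^{2}$, the expression collapses directly to \eqref{num01f}.

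Finally, for \eqref{Lamb-psi-z-distr-Gen-AS0}: when $z=0$ the right-hand side of \eqref{num01f} is identically $1$, so the $\psi_{\omega;z}$--distribution of $\Lambda$ is $\delta_{0}$, which is consistent with \eqref{Lamb-psi-z-distr-Gen-AS0}. For $z\neq0$ I would observe that \eqref{num01f} has exactly the form \eqref{Four-transf-discr-meas} of Lemma \ref{lm:Four-transf-gen-AS}, with $a=\omega_{1}(\omega_{2}-|z|^{2})/(\omega_{1}\omega_{2}-(\omega_{1}-\omega_{2})|z|^{2})=\Vert\Psi_{z}\Vert^{-2}$, $b=|z|^{2}$ and $c=|z|^{2}/\omega_{2}$; the hypotheses \eqref{conds-on-par-discr-meas} hold because $b=|z|^{2}>0$, $\omega_{1}>0$, and $c=|z|^{2}/\omega_{2}\in(0,1)$ precisely since $0<|z|<\sqrt{\omega_{2}}$ is the condition defining $\mathbb{C}_{As,\omega_{1},\omega_{2}}$. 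The lemma then identifies the distribution with the measure \eqref{struc-a-discr-meas}--\eqref{struc-mu-z-discr-meas}, which is exactly \eqref{Lamb-psi-z-distr-Gen-AS0}.

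I do not expect a genuine obstacle here: the proposition is essentially bookkeeping built on Proposition \ref{prop:gen-Arc-sin-exp-vects} and Lemma \ref{lm:Four-transf-gen-AS}. The one point to verify carefully is that the normalization $a=\Vert\Psi_{z}\Vert^{-2}$ forced by \eqref{AS-coher-sts} agrees with the value $(1+b/(\omega_{1}(1-c)))^{-1}$ predicted by \eqref{struc-a-discr-meas}; this is precisely the content of \eqref{norm-A-s-coher-vects}, so it is automatic. The only mild care needed elsewhere is to handle $z=0$ separately, since then $b=0$ and Lemma \ref{lm:Four-transf-gen-AS} does not directly apply.
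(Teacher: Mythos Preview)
Your proposal is correct and follows essentially the same route as the paper: \eqref{num01b} via \eqref{e(u-Lambda)Psi-z}, \eqref{num01f} by chaining \eqref{AS-coher-sts}, \eqref{num01b} and \eqref{scal-prod-AS-coher-vects}, and \eqref{Lamb-psi-z-distr-Gen-AS0} by invoking Lemma \ref{lm:Four-transf-gen-AS} with $b=|z|^{2}$, $c=|z|^{2}/\omega_{2}$. Your explicit treatment of the case $z=0$ and the consistency check $a=\Vert\Psi_{z}\Vert^{-2}$ are minor additions the paper leaves implicit.
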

\begin{proof}
\eqref{num01b} follows from \eqref{e(u-Lambda)Psi-z}.
To prove \eqref{num01f}, note that
\begin{align*}
\psi_{\omega;z}\left(e^{it\Lambda}\right)
&\overset{\hbox{(\ref{AS-coher-sts})}}{=}
\frac{\omega_{1}\left(\omega_{2}-\left\vert z\right \vert^{2}\right)}{\omega_{1}\omega_{2}-\left(\omega_{1} -\omega_{2}\right)\left\vert z\right\vert ^{2}}
\left\langle \Psi_{\omega;z} , e^{it\Lambda} \Psi_{\omega;z}  \right\rangle\nonumber\\
& \overset{\hbox{(\ref{num01b})}}{=}
\frac{\omega_{1}\left(\omega_{2}-\left\vert z\right\vert ^{2}\right)}{\omega_{1}\omega_{2}
-\left(\omega_{1}-\omega_{2}\right)  \left\vert z\right\vert ^{2}}\left\langle\Psi_{\omega;z}  ,\Psi_{\omega;e^{it}z} \right\rangle \nonumber\\
&  \overset{\hbox{(\ref{scal-prod-AS-coher-vects})}}{=}
\frac{\omega_{1}\left(\omega_{2}-\left\vert z\right \vert ^{2}\right)}{\omega_{1}\omega_{2}-\left( \omega_{1} -\omega_{2}\right)\left\vert z\right\vert ^{2}}\frac{\omega_{1}\omega_{2}-\left(\omega_{1}
-\omega_{2}\right)  \left\vert z\right\vert^{2}e^{it}}{\omega_{1}\left(\omega_{2}
-\left\vert z\right\vert ^{2}e^{it}\right)}\nonumber\\
&  =\frac{\omega_{1}\left(\omega_{2}-\left\vert z\right \vert^{2}\right)}{\omega_{1}\omega_{2}-\left(\omega_1 -\omega_2\right)\left\vert z\right\vert ^2} \left(1+\frac{\left\vert z\right\vert ^{2}e^{it}}
{\omega_{1}\left(1-\frac{\left\vert z\right\vert ^{2}}{\omega_{2}}e^{it}\right)}\right)  \nonumber
\end{align*}
To prove \eqref{Lamb-psi-z-distr-Gen-AS0} notice that, because of \eqref{num01f}, we are in
the conditions of Lemma \ref{lm:Four-transf-gen-AS} with the following choice of
the parameters:
$$
a=\frac{1}{1+\frac{b}{\omega_{1}\left(1-c\right)}
}\ ;\quad b=\left\vert z\right\vert ^{2}\ ;\quad
c=\frac{\left\vert z\right\vert ^{2}}{\omega_{2}}
$$
Then \eqref{struc-mu-z-discr-meas} becomes
\begin{equation}\label{Lamb-psi-z-distr-Gen-AS}
\begin{cases}
\mu_{z}({\{0\} })=  \frac{1}{1+\frac{\left\vert z\right\vert ^{2}}
{\omega_{1}\left(1-\frac{\left\vert z\right\vert ^{2}}{\omega_{2}}\right)}} \\
\mu_{z}(\{n\}) =
\frac{1}{1+\frac{\left\vert z\right\vert ^{2}}
{\omega_{1}\left(1-\frac{\left\vert z\right\vert ^{2}}{\omega_{2}}\right)}}
\frac{\left\vert z\right\vert ^{2}}{\omega_{1}}
\left(\frac{\left\vert z\right\vert ^{2}}{\omega_{2}}\right)^{n-1} , \  n\ge 1
            \end{cases}
\end{equation}
and, since
$$
1+\frac{\left\vert z\right\vert ^{2}}
{\omega_{1}\left(1-\frac{\left\vert z\right\vert ^{2}}{\omega_{2}}\right)}
=\frac{\omega_{1}\left(1-\frac{\left\vert z\right\vert ^{2}}{\omega_{2}}\right)
+\left\vert z\right\vert ^{2}}
{\omega_{1}\left(1-\frac{\left\vert z\right\vert ^{2}}{\omega_{2}}\right)}
=\frac{\omega_{1}
\left(\frac{\omega_{2}-\left\vert z\right\vert ^{2}}{\omega_{2}}\right)
+\left\vert z\right\vert ^{2}}
{\omega_{1}\left(\frac{\omega_{2}-\left\vert z\right\vert ^{2}}{\omega_{2}}\right)}
$$
$$
=\frac{\omega_{1}
\left(\omega_{2}-\left\vert z\right\vert ^{2}\right)
+\omega_{2}\left\vert z\right\vert ^{2}}
{\omega_{1}\left(\omega_{2}-\left\vert z\right\vert ^{2}\right)}
=\frac{\omega_{1}\omega_{2}
-\left(\omega_{1}-\omega_{2}\right)\left\vert z\right\vert ^{2}}
{\omega_{1}\left(\omega_{2}-\left\vert z\right\vert ^{2}\right)}
$$
\eqref{Lamb-psi-z-distr-Gen-AS} is equivalent to \eqref{Lamb-psi-z-distr-Gen-AS0}.
\end{proof}
\begin{corollary}
(\textbf{Arc-sine case}) If
$$
0<\omega_{1}/2=\omega_{2}= \omega_{n}   \quad,\quad \forall n\geq2
$$
then, for any
\begin{equation}\label{df-C(As)}
z\in\mathbb{C}_{As,\omega_{2}}
:=\left\{z\in\mathbb{C}\colon\left\vert z\right\vert<\sqrt{\omega_{2}}\right\}
\end{equation}
one has
\begin{equation}\label{psi(As;z)-char-fctn-Lambda}
\psi_{As;z}\left(e^{it\Lambda}\right)
=\frac{2\left(1-\frac{\vert z\vert^2} {\omega_2}\right)}
{2-\frac{\vert z\vert^2} {\omega_2}}
\left(1+\frac{\frac{\vert z\vert^2} {\omega_2} e^{it} }
{2\left(1-\frac{\vert z\vert^2} {\omega_2} e^{it}\right)}\right)
\end{equation}
and the $\psi_{As;z}$--distribution of $\Lambda$ is
\begin{equation}\label{Lamb-psi-z-distr-AS}
\mu_{z}({\{0\} })
=\frac{2\left(1-\frac{\vert z\vert^2} {\omega_2}\right)}{2- \frac{\vert z\vert^2} {\omega_2}}\ \ ;\quad
\mu_{z}({\{n\} })
=\frac{\left(1-{ \frac{\left\vert z\right\vert ^{2}}{\omega_2}}\right)}
{2 - \frac{\left\vert z\right\vert ^{2}}{\omega_2}}
\left(\frac{\left\vert z\right\vert ^{2}}{\omega_2}\right)^{n}
\end{equation}
\end{corollary}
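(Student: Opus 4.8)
The plan is to obtain this corollary as the special case $\omega_{1}=2\omega_{2}$ of Proposition \ref{prop:pert-geom-distr}, since the Arc--sine condition \eqref{Arc-sin-PJS} is precisely the requirement $\omega_{1}=2\omega_{2}$ imposed on top of \eqref{df-gen-Arc-sin-PJS}. First I would observe that the admissible set $\mathbb{C}_{As,\omega_{2}}$ of \eqref{df-C(As)} coincides with the set $\mathbb{C}_{As,\omega_{1},\omega_{2}}$ of Proposition \ref{prop:gen-Arc-sin-exp-vects}, because the latter depends only on $\omega_{2}$; hence the exponential vectors $\Psi_{\omega;z}$, the coherent states $\psi_{\omega;z}$, and all of \eqref{num01b}, \eqref{num01f}, \eqref{Lamb-psi-z-distr-Gen-AS0} are available, unchanged, for every $z$ in the range \eqref{df-C(As)}.

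Next I would substitute $\omega_{1}=2\omega_{2}$ into \eqref{num01f}. The common prefactor simplifies:
\begin{equation*}
\frac{\omega_{1}(\omega_{2}-|z|^{2})}{\omega_{1}\omega_{2}-(\omega_{1}-\omega_{2})|z|^{2}}
=\frac{2\omega_{2}(\omega_{2}-|z|^{2})}{2\omega_{2}^{2}-\omega_{2}|z|^{2}}
=\frac{2(\omega_{2}-|z|^{2})}{2\omega_{2}-|z|^{2}}
=\frac{2\bigl(1-\tfrac{|z|^{2}}{\omega_{2}}\bigr)}{2-\tfrac{|z|^{2}}{\omega_{2}}},
\end{equation*}
while in the bracketed term one has $\dfrac{|z|^{2}e^{it}}{\omega_{1}\bigl(1-\tfrac{|z|^{2}}{\omega_{2}}e^{it}\bigr)}=\dfrac{\tfrac{|z|^{2}}{\omega_{2}}e^{it}}{2\bigl(1-\tfrac{|z|^{2}}{\omega_{2}}e^{it}\bigr)}$. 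Together these give \eqref{psi(As;z)-char-fctn-Lambda} verbatim. Applying the same substitution to \eqref{Lamb-psi-z-distr-Gen-AS0}: the mass at $0$ is exactly the prefactor computed above, and for $n\ge1$ the factor $\tfrac{|z|^{2}}{\omega_{1}}\bigl(\tfrac{|z|^{2}}{\omega_{2}}\bigr)^{n-1}$ becomes $\tfrac12\bigl(\tfrac{|z|^{2}}{\omega_{2}}\bigr)^{n}$, so multiplying by the prefactor yields $\mu_{z}(\{n\})=\dfrac{1-|z|^{2}/\omega_{2}}{2-|z|^{2}/\omega_{2}}\bigl(|z|^{2}/\omega_{2}\bigr)^{n}$, which is \eqref{Lamb-psi-z-distr-AS}.

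There is essentially no obstacle here: the content is the two short algebraic simplifications above plus the harmless remark that the domain of $z$ does not shrink in the Arc--sine case. One could instead re-derive \eqref{psi(As;z)-char-fctn-Lambda} directly from \eqref{AS-coher-sts}, \eqref{num01b}, \eqref{scal-prod-AS-coher-vects} with $\omega_{1}=2\omega_{2}$, but this merely repeats the computation in the proof of Proposition \ref{prop:pert-geom-distr}. As a final sanity check I would verify $\mu_{z}(\{0\})+\sum_{n\ge1}\mu_{z}(\{n\})=1$ and point out the structural content: writing $q:=|z|^{2}/\omega_{2}$, the measure $\mu_{z}$ assigns mass $\tfrac{1-q}{2-q}\,q^{n}$ for $n\ge1$ and an extra atom $\tfrac{1-q}{2-q}$ at $0$ on top of this, so it is the announced \emph{perturbation of the geometric distribution}, differing from it only in the normalization and in the first term.
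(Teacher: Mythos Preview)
Your proposal is correct and follows essentially the same approach as the paper: both specialize Proposition \ref{prop:pert-geom-distr} to $\omega_{1}=2\omega_{2}$ and simplify \eqref{num01f} and \eqref{Lamb-psi-z-distr-Gen-AS0} algebraically. Your version is slightly more explicit about the domain remark and includes the normalization sanity check, but the substance is identical.
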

\begin{proof}
If $\omega_{1}/2=\omega_{2}$,  \eqref{num01f} becomes
\begin{align*}
\psi _{z}\left(e^{it\Lambda}\right)
& =\frac{2\left(\omega_{2}-\left\vert z\right\vert ^{2}\right)}{2\omega_{2}
-\left\vert z\right\vert ^{2}}
\frac{2\omega_{2}-\left\vert z\right\vert ^{2}e^{it}} {2\left(\omega_{2}-\left\vert z\right\vert ^{2}e^{it}\right)}\\
&=\frac{2\left(\omega_{2}-\left\vert z\right\vert
^{2}\right)}{2\omega_{2}-\left\vert z\right\vert ^{2}}\left(1+\frac{\left\vert
z\right\vert ^{2}e^{it}}{2\left(\omega_{2}-\left\vert z\right\vert ^{2}e^{it}\right)
}\right)  \nonumber\\
& =\frac{2\left(1-\frac{\left\vert z\right\vert ^{2}}{\omega_{2}}\right)}
{2-\frac{\left\vert z\right\vert ^{2}}{\omega_{2}}}
\left(1+\frac{\frac{\left\vert z\right\vert ^{2}}{\omega_{2}}e^{it}}
{2\left(1-\frac{\left\vert z\right\vert ^{2}}{\omega_{2}} e^{it}\right)}\right) \nonumber
\end{align*}
which is \eqref{psi(As;z)-char-fctn-Lambda}.
If $2\omega_{2} = \omega_{1}$, then \eqref{Lamb-psi-z-distr-Gen-AS} becomes
\begin{align*}
\mu_{z}({\{0\} })&=\frac {\omega_{1}\left(\omega_{2}-\left\vert z\right\vert^{2} \right)} {\omega_{1}\omega_{2}-\left(\omega_{1}-
\omega_{2}\right)\left\vert z\right\vert ^{2}}
=\frac{2\omega_{2}\left(\omega_{2}-\left\vert z\right\vert ^{2}\right)}{2\omega_{2}^2- \omega_{2}\left\vert z\right\vert ^{2}} \\
&=\frac{2\left(\omega_{2}-\left\vert z\right\vert ^{2}\right)}{\omega_{2} + (\omega_{2}- \left\vert z\right\vert ^{2})}
=\frac{2\left(1-{\frac{\left\vert z\right\vert ^2}{\omega_2}} \right)}
{2- \frac{\left\vert z\right\vert ^{2}}{\omega_2} }
\end{align*}
Similarly, for $n\ge 1$,
$$
\mu_{z}(\{n\})
= \frac{\omega_{1}\left(\omega_{2}-\left\vert z\right\vert ^{2}\right)}{\omega_{1}\omega_{2} -\left(\omega_{1}-\omega_{2}\right)\left\vert z\right\vert ^{2}}
\frac{\left\vert z\right\vert ^{2}}{\omega_{1}}
\left(\frac{\left\vert z\right\vert ^{2}} {\omega_{2}}\right)^{n-1}
$$
$$
=\frac{\left(\omega_{2}-\left\vert z\right\vert ^{2}\right)}{2\omega_{2} -\left\vert z\right\vert^2}
\left(\frac{\left\vert z\right\vert ^{2}}{\omega_{2}}\right)^{n}
=\frac{\left(1-{ \frac{\left\vert z\right\vert ^{2}}{\omega_2}}\right)}
{2 - \frac{\left\vert z\right\vert ^{2}}{\omega_2}}
\left(\frac{\left\vert z\right\vert ^{2}}{\omega_2}\right)^{n}
$$
\end{proof}

\begin{remark}
Let $\mu$ be a probability measure on $\mathbb{R}$ with support equal
to $\mathbb{N}$. A \textbf{perturbation of $\mu$ at the origin} is a new probability measure
on $\mathbb{N}$ of the form
\begin{equation}\label{df-nu-x,mu}
\nu_{x,\mu}(\{k\} )
:=x\delta_{0,k} + y\mu(\{k\} ) \chi_{\mathbb{N}^*}(k)\ ,\quad x\in [0,1] \,, \ y\in (0,+\infty)
\end{equation}
 $\nu_{x,\mu}(\{k\})$ is positive and it is a probability measure if and only if
\begin{align}\label{y-fctn-of-x}
y= \frac{1-x}{1-\mu(\{0\})}
\end{align}
because
$$
\sum_{k\in {\mathbb N}}\nu_{x,\mu}(\{k\})=x + y\sum_{k=1}^{\infty}\mu(\{k\} )
=x + y(1-\mu({\{0\} }))
$$
and which equals to $1$ if and only if \eqref{y-fctn-of-x} holds. In this case, \eqref{df-nu-x,mu} becomes
\begin{equation}\label{df-nu-x,mu-2}
\nu_{x,\mu}(\{k\} )
:=x\delta_{0,k}+\frac{1-x}{1-\mu({\{0\} })}\mu(\{k\} )\chi_{\mathbb{N}^*}(k) \ ,\quad x\in [0,1]
\end{equation}
If $\mu$ is geometric with parameter $c\in [0,1]$, then
\begin{equation}\label{df-mu-c}
\mu(\{k\})= (1-c)c^{k} \quad,\quad\forall k\in\mathbb{N}
\end{equation}
and \eqref{df-nu-x,mu-2} becomes
$$
\nu_{x,\mu}(\{k\})
=x\delta_{0,k} +(1-x)\frac{1-c}{c}c^{k}\chi_{\mathbb{N}^*}(k) \quad,\quad x\in [0,1]
$$
This is the case discussed in Proposition \ref{prop:pert-geom-distr}. One recognizes that
$\nu_{x,\mu}$ is a convex combination of the atomic measure at $0$ with the normalized
restriction on $\mathbb{N}^*$ of the geometric distribution of parameter $c$.
\end{remark}

\section{$*$--Lie algebras generated by the CAP operators}

\subsection{Notations}

In this section, we use the following notations.

\begin{equation}\label{Phi-n-n<0}
\Phi_{n}:=\begin{cases}0, & \mbox{if } n<0 \\
\Phi_{0}{:=\text{the vacuum vector,} }& {\text{if } n=0 }\\
a^{+n}\Phi_{0}, & \mbox{if } n\in \mathbb{N}^*
\end{cases}
\end{equation}
is a sequence of orthogonal polynomials;
	
--  $\omega=\{\omega_n\}_{n\in \mathbb{N} }$ is the principal Jacobi sequence associated to
$(\Phi_{n})$;
	
--  $\Gamma^{0}_\omega $ is the linear span of the $(\Phi_{n})$;

--  Defining,
$$
\Phi_{m}\Phi_{n}^*\xi:=\langle\Phi_{n}, \xi\rangle\Phi_{m} \ , \ \forall \xi\in\Gamma^{0}_\omega
 \ , \ \forall m,n\in\mathbb{N}
$$
the family
\begin{equation}\label{df-Phi(m,n)}
\Phi_{m,n}:=\begin{cases}
\Phi_{m}\Phi_{n}^*, &\mbox{ if }0\leq n,m\le n_{\omega}\\
0, & \mbox{ if either }n<0,\hbox{ or }m<0,\mbox{ or }m>n_{\omega},\mbox{ or }n>n_{\omega}
\end{cases}
\end{equation}
is the system of (non--normalized) matrix units associated to the orthogonal
basis $(\Phi_{n})$ ;
	
-- $\Lambda$ is the number operator associated to the basis $(\Phi_{n})$;

--  the function $\big(\omega!\big)^{-1}\colon\mathbb{Z}\longrightarrow \mathbb{R}$
is defined as follows,
\begin{equation}\label{LieAlg01}
\big(\omega!\big)^{-1}(k) :=\begin{cases}
\big(\omega_{k}!\big) ^{-1}, & \hbox{ if }0\leq k\le n_{\omega}\\
0, & \hbox{ otherwise}\end{cases}
\, ,\quad \forall k\in\mathbb{Z} %
\end{equation}
and, for any function $F\colon\mathbb{Z}\longrightarrow \mathbb{C}$,
\begin{equation}\label{LieAlg01a}
F_{\Lambda}\Phi_{n}:= F_{n}\Phi_{n}
\end{equation}
	
--  $\mathcal{L}\big( a,a^{+} \big)$ denotes the $*$--Lie--algebra generated by the annihilator
$a$ and the creator $a^{+}$;

--  $\mathcal{L}\big( a,a^{+},\Phi_{0,0} \big)$ denotes the $*$--Lie--algebra generated by
the vacuum projection $\Phi_{0,0}$, the annihilator $a$ and the creator $a^{+}$.\\

\begin{remark}
For all $0\leq n\leq n_{\omega}$, one has $a\Phi_{n}=\omega_{n}\Phi_{n-1}$ (recall that $\Phi_{-1}:=0$) and $\left\langle \Phi_{n}, \Phi_{n}\right\rangle =\omega_{n}!$. Moreover
\begin{equation}\label{LieAlg01b}
\big( \omega_{\Lambda}!\big) ^{-1}\Phi_{k}:=\begin{cases}
\big(\omega_{k}!\big)^{-1}\Phi_{k}, & \hbox{ if }0\le k\leq n_{\omega}\\ 0, & \hbox{ otherwise}
\end{cases} \,  ,\quad \forall k\in\mathbb{Z}
\end{equation}
\end{remark}
\begin{proposition}\label{LieAlg02}
On $\Gamma^{0}_\omega$ the following equalities hold:
\begin{align}\label{LieAlg02a}
a=&\sum_{n\in\mathbb{N}}\Phi_{n-1}\Phi_{n}^*
\big(\omega_{\Lambda-1}!\big)^{-1}\notag\\
=& \big(\omega_{\Lambda}!\big)^{-1} \sum_{n\in\mathbb{N}}\Phi_{n-1}\Phi_{n}^*
=\sum_{n\in\mathbb{N}^*}\sqrt{\omega_{n}} \tilde{\Phi}_{n-1}\tilde{\Phi}_{n}^*
\end{align}
\begin{align}\label{LieAlg02b}
a^{+}=&\big(\omega_{\Lambda-1}!\big)^{-1}
\sum_{n\in\mathbb{N}}\Phi_{n}\Phi_{n-1}^*\notag\\
=&\sum_{n\in\mathbb{N}}\Phi_{n}\Phi_{n-1}^*
\big(\omega_{\Lambda}!\big) ^{-1}
=\sum_{n\in\mathbb{N}^*}\sqrt{\omega_{n}}
\tilde{\Phi}_{n}\tilde{\Phi}_{n-1}^*
\end{align}
\begin{equation}\label{comp-aa+-a+a}
aa^{+} =\omega_{\Lambda+1}\ ;\quad a^{+}a = \omega_{\Lambda}\ ;\quad \big[  a,a^+\big]=\omega_{\Lambda+1}-\omega_{\Lambda}=: \partial \omega_{\Lambda}
\end{equation}
\begin{equation}\label{LieAlg03g}
\big(\Phi_{m,n}\big)^*=\Phi_{n,m}
\quad,\quad \forall n,m\in\mathbb{N}
\end{equation}
\begin{equation}\label{LieAlg03e}
a\Phi_{m,n}=\omega_m\Phi_{m-1,n}\quad,\quad a^+\Phi_{m,n}=\Phi_{m+1,n}
\quad,\quad \forall n,m\in\mathbb{N}
\end{equation}
\begin{equation}\label{LieAlg03f}
\Phi_{m,n}a^+ = \omega_n\Phi_{m,n-1} \quad, \quad\Phi_{m,n}a = \Phi_{m,n+1}
\quad,\quad \forall n,m\in\mathbb{N}
\end{equation}
\begin{align}
\big[  a,\Phi_{m,n}\big] &=\omega_{m}\Phi_{m-1,n}  -\Phi_{m,n+1}
\quad,\quad \forall n,m\in\mathbb{N}
\label{LieAlg02c1}\\
\big[ a^{+},\Phi_{m,n}\big]
&=\Phi_{m+1,n} - \omega_{n}\Phi_{m,n-1}
\quad,\quad \forall n,m\in\mathbb{N}\label{LieAlg02c2}
\end{align}
\begin{equation}\label{LieAlg02d}
\big[ \Phi_{k,h},\Phi_{m,n}\big]
=\delta_{h,m}\big(  \omega_{m}!\big)  \Phi_{k,n}-\delta_{k,n}\big(\omega_{n}!\big)\Phi_{m,h}
\,,\quad \forall k,h,m,n\in\mathbb{N}
\end{equation}
\end{proposition}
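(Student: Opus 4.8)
Every operator appearing in the statement is defined on the algebraic span $\Gamma^{0}_\omega$ of the orthogonal basis $(\Phi_{k})_{k\in\mathbb{N}}$, and each infinite sum in \eqref{LieAlg02a}--\eqref{LieAlg02b} has a single non-zero term when applied to a fixed $\Phi_{k}$, so all the operators are well defined there and two of them coincide as soon as they coincide on every $\Phi_{k}$. For the identities involving $\Phi_{m,n}$ it suffices to apply both sides to an arbitrary $\xi\in\Gamma^{0}_\omega$ and then, by linearity, to $\xi=\Phi_{k}$. The only ingredients are the elementary dictionary $a\Phi_{n}=\omega_{n}\Phi_{n-1}$, $a^{+}\Phi_{n}=\Phi_{n+1}$, $\Lambda\Phi_{n}=n\Phi_{n}$, $\langle\Phi_{n},\Phi_{k}\rangle=\delta_{n,k}\,\omega_{k}!$, the relation $a^{*}=a^{+}$, and the conventions \eqref{Phi-n-n<0}, \eqref{df-Phi(m,n)}, \eqref{LieAlg01}, \eqref{LieAlg01a}. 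The plan is to run through the list and in each case reduce to these.

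\textbf{Identities for $a$, $a^{+}$ and their compositions.} Applying $\sum_{n}\Phi_{n-1}\Phi_{n}^{*}\,(\omega_{\Lambda-1}!)^{-1}$ to $\Phi_{k}$ first produces $(\omega_{k-1}!)^{-1}\Phi_{k}$ and then, via $\langle\Phi_{n},\Phi_{k}\rangle=\delta_{n,k}\omega_{k}!$, leaves $(\omega_{k-1}!)^{-1}\omega_{k}!\,\Phi_{k-1}=\omega_{k}\Phi_{k-1}=a\Phi_{k}$ since $\omega_{k}!/\omega_{k-1}!=\omega_{k}$; the middle expression $(\omega_{\Lambda}!)^{-1}\sum_{n}\Phi_{n-1}\Phi_{n}^{*}$ is identical except that $(\omega_{k-1}!)^{-1}$ is applied after $\Phi_{k-1}$ has been extracted, and the normalized form uses $\tilde{\Phi}_{n}=(\omega_{n}!)^{-1/2}\Phi_{n}$ together with $\sqrt{\omega_{k}!/\omega_{k-1}!}=\sqrt{\omega_{k}}$. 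This proves \eqref{LieAlg02a}, and the same computation with $a$, $a^{+}$ interchanged proves \eqref{LieAlg02b}. For \eqref{comp-aa+-a+a} one has $aa^{+}\Phi_{n}=a\Phi_{n+1}=\omega_{n+1}\Phi_{n}$ and $a^{+}a\Phi_{n}=\omega_{n}a^{+}\Phi_{n-1}=\omega_{n}\Phi_{n}$, so $aa^{+}=\omega_{\Lambda+1}$, $a^{+}a=\omega_{\Lambda}$, and the commutator is the difference. The adjoint relation \eqref{LieAlg03g} is the standard matrix-unit fact: the adjoint of $\xi\mapsto\langle\Phi_{n},\xi\rangle\Phi_{m}$ is $\xi\mapsto\langle\Phi_{m},\xi\rangle\Phi_{n}$, which is read off from $\langle\eta,\Phi_{m,n}\xi\rangle=\langle\Phi_{n},\xi\rangle\langle\eta,\Phi_{m}\rangle$.

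\textbf{Actions on the matrix units and the bracket \eqref{LieAlg02d}.} For \eqref{LieAlg03e} apply both sides to $\xi$: in $a\Phi_{m,n}\xi$ and $a^{+}\Phi_{m,n}\xi$ the scalar $\langle\Phi_{n},\xi\rangle$ factors out of $a$, $a^{+}$, which then act on $\Phi_{m}$ by $a\Phi_{m}=\omega_{m}\Phi_{m-1}$, $a^{+}\Phi_{m}=\Phi_{m+1}$. For \eqref{LieAlg03f} move the creator/annihilator across the inner product using $a^{*}=a^{+}$: $\langle\Phi_{n},a^{+}\xi\rangle=\langle a\Phi_{n},\xi\rangle=\omega_{n}\langle\Phi_{n-1},\xi\rangle$ gives $\Phi_{m,n}a^{+}=\omega_{n}\Phi_{m,n-1}$, and $\langle\Phi_{n},a\xi\rangle=\langle\Phi_{n+1},\xi\rangle$ gives $\Phi_{m,n}a=\Phi_{m,n+1}$. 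Subtracting the two actions of $a$ (resp.\ $a^{+}$) yields \eqref{LieAlg02c1}--\eqref{LieAlg02c2}. Finally, \eqref{LieAlg02d} follows from the multiplication rule $\Phi_{k,h}\Phi_{m,n}\xi=\langle\Phi_{h},\Phi_{m}\rangle\langle\Phi_{n},\xi\rangle\Phi_{k}=\delta_{h,m}(\omega_{m}!)\,\Phi_{k,n}\xi$ and the symmetric formula $\Phi_{m,n}\Phi_{k,h}=\delta_{n,k}(\omega_{n}!)\,\Phi_{m,h}$, whose difference is the stated bracket. None of these steps is a genuine obstacle; the only point that requires attention is the bookkeeping at the boundary of the admissible index range --- the conventions $\Phi_{j}=0$ for $j<0$, $\Phi_{m,n}=0$ once an index leaves $\{0,\dots,n_{\omega}\}$, $(\omega_{0}!)^{-1}=1$ while $(\omega_{j}!)^{-1}=0$ for $j<0$, and $a\Phi_{0}=0$ --- which must be checked to be mutually consistent, so that e.g.\ both sides of \eqref{LieAlg02a} vanish on $\Phi_{0}$ and both sides of \eqref{LieAlg02d} behave correctly when $k,h,m,n$ reach $n_{\omega}$.
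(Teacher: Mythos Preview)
Your proof is correct and follows essentially the same approach as the paper: both verify the operator identities by applying each side to a basis vector $\Phi_{k}$ (or to an arbitrary $\xi$ for the matrix-unit relations), using the elementary rules $a\Phi_{n}=\omega_{n}\Phi_{n-1}$, $a^{+}\Phi_{n}=\Phi_{n+1}$, $\langle\Phi_{n},\Phi_{k}\rangle=\delta_{n,k}\omega_{k}!$ and the adjoint relation $a^{*}=a^{+}$. The only cosmetic differences are that the paper obtains \eqref{LieAlg02b} and \eqref{LieAlg03f} by taking adjoints of \eqref{LieAlg02a} and \eqref{LieAlg03e} rather than repeating the computation, and it does not spell out the boundary-index bookkeeping you flag at the end.
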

\begin{proof}
The last equality in \eqref{LieAlg02a} follows from the definition $\tilde \Phi_n:={1\over\sqrt{\omega!}}\Phi_n$ for any $n$. One gets the first two equalities in \eqref{LieAlg02a} as follows: for any $k\in{\mathbb N}$,
$$a \Phi_{k}=\omega_{k}\Phi_{k-1}  = \big(\omega_{k-1}!\big)^{-1}\omega_{k}!\Phi_{k-1}
$$	
and, on the one hand,
\begin{align*}
&\sum_{n\in\mathbb{N}}\Phi_{n-1}\Phi_{n}^*
\big(\omega_{\Lambda-1}!\big)^{-1}\Phi_{k}
=\big(\omega_{k-1}!\big)^{-1}\sum_{n\in\mathbb{N}}
\Phi_{n-1}\Phi_{n}^*\Phi_{k} \\
=&\big(\omega_{k-1}!\big)^{-1}\sum_{n\in\mathbb{N}}
\Phi_{k-1}\delta_{n,k}\|\Phi_{k}\|^2=\big(\omega_{k-1}! \big)^{-1} \ \omega_{k}!\Phi_{k-1}
\end{align*}
and on the other hand
\begin{align*}
&\big(\omega_{\Lambda}!\big)^{-1}\sum_{n\in\mathbb{N}}
\Phi_{n-1}\Phi_{n}^*\Phi_{k}
=\big(\omega_{\Lambda}!\big)^{-1}\sum_{n\in\mathbb{N}}
\Phi_{k-1}\delta_{n,k}\|\Phi_{k}\|^2\\
=&\big(\omega_{\Lambda}!\big)^{-1}\Phi_{k-1}\omega_{k}!
=\big(\omega_{k-1}!\big)^{-1}\omega_{k}!\Phi_{k-1}
\end{align*}
\eqref{LieAlg02b} is obtained from \eqref{LieAlg02a} taking the adjoint of both sides.\\
The first two equalities of \eqref{comp-aa+-a+a} are known and they imply the third.\\
\eqref{LieAlg03g} and \eqref{LieAlg03e} follow from \eqref{df-Phi(m,n)} and \eqref{LieAlg03f}
is the adjoint of \eqref{LieAlg03e}.\\
\eqref{LieAlg02c1} follows from the fact that, for any $n,m\in\mathbb{N}$,
\begin{align*}
&\left[a,\Phi_{m,n}\right] =a\Phi_{m,n}-\Phi_{m,n}a
\overset{\eqref{df-Phi(m,n)}}{=}   a\Phi_{m}\Phi_{n}^* - \Phi_{m}\Phi_{n}^*a\notag\\
=&\omega_{m}\Phi_{m-1}\Phi_{n}^*-\Phi_{m}(a^+\Phi_{n})^*
=\omega_{m}\Phi_{m-1}\Phi_{n}^*-\Phi_{m}\Phi_{n+1}^*\\
=&\omega_{m}\Phi_{m-1,n} - \Phi_{m,n+1}
\end{align*}
Taking the adjoint of both sides of \eqref{LieAlg02c1} one obtains
\begin{equation*}
\big[\Phi_{n,m}, a^{+}\big]
=\omega_{m}\Phi_{n,m-1} - \Phi_{n+1,m}
\end{equation*}
which, exchanging the roles of $m$ and $n$, is equivalent to \eqref{LieAlg02c2}.\\
Finally, for any $k,h,m,n\in\mathbb{N}$
\begin{align*}
&\left[\Phi_{k,h},\Phi_{m,n}\right] =\Phi_{k,h} \Phi_{m,n}-\Phi_{m,n}\Phi_{k,h}\\
=&\Phi_{k}\Phi_{n}^*\left\langle \Phi_{h},\Phi_{m} \right\rangle -\Phi_{m}\Phi_{h}^*
\left\langle \Phi_{n},\Phi_{k}\right\rangle
=\delta_{h,m}\left(  \omega_{m}!\right)  \Phi_{k,n}-\delta_{k,n}\left(\omega_{n}!\right)  \Phi_{m,h}
\end{align*}
which is \eqref{LieAlg02d}.
\end{proof}

\begin{theorem}\label{LieAlg03}
\begin{equation}\label{thm:Phi(m,n)-gen-cal-L(a,a+,Phi(0,0))}
\mathcal{L}\big( a,a^{+},\Phi_{0,0}\big)
=\hbox{lin-sp.}\big\{  a,a^{+},\Phi_{m,n}:n,m\in\mathbb{N}\big\}
\end{equation}
\end{theorem}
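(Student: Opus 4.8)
The plan is to establish the equality by proving the two inclusions, using only the commutation relations collected in Proposition \ref{LieAlg02}. Write $V:=\hbox{lin-sp.}\{a,a^{+},\Phi_{m,n}:m,n\in\mathbb{N}\}$ for the right-hand side. Since $\Phi_{0,0}$ is itself one of the matrix units, the generators $a,a^{+},\Phi_{0,0}$ all lie in $V$; conversely $a,a^{+}\in\mathcal{L}(a,a^{+},\Phi_{0,0})$ by definition. Hence both inclusions reduce to statements about the matrix units $\Phi_{m,n}$.

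For $V\subseteq\mathcal{L}(a,a^{+},\Phi_{0,0})$ I would generate every $\Phi_{m,n}$ from $\Phi_{0,0}$ and $a^{+}$ in three short steps. First, since $\omega_{0}=0$ and $\Phi_{m,-1}=0$, relation \eqref{LieAlg02c2} with $n=0$ reads $[a^{+},\Phi_{m,0}]=\Phi_{m+1,0}$; iterating this starting from the generator $\Phi_{0,0}$ places every $\Phi_{m,0}$, $m\in\mathbb{N}$, in $\mathcal{L}(a,a^{+},\Phi_{0,0})$. Second, because $\mathcal{L}(a,a^{+},\Phi_{0,0})$ is $*$-closed, relation \eqref{LieAlg03g} gives $\Phi_{0,n}=(\Phi_{n,0})^{*}\in\mathcal{L}(a,a^{+},\Phi_{0,0})$ for every $n$. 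Third, relation \eqref{LieAlg02d} applied to $\Phi_{m,0}$ and $\Phi_{0,n}$, together with $\omega_{0}!=1$, gives
\begin{equation*}
[\Phi_{m,0},\Phi_{0,n}]=\Phi_{m,n}-\delta_{m,n}(\omega_{n}!)\,\Phi_{0,0},
\end{equation*}
so $\Phi_{m,n}\in\mathcal{L}(a,a^{+},\Phi_{0,0})$ since $\Phi_{0,0}$ already is.

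For the reverse inclusion it suffices to check that $V$ is itself a $*$-Lie algebra, for then the minimality of $\mathcal{L}(a,a^{+},\Phi_{0,0})$ forces $\mathcal{L}(a,a^{+},\Phi_{0,0})\subseteq V$. Stability under the involution follows from $a^{*}=a^{+}$ and \eqref{LieAlg03g}. For stability under the bracket, bilinearity reduces the check to brackets of the spanning operators: $[a,a]=[a^{+},a^{+}]=0$, while $[a,\Phi_{m,n}]$, $[a^{+},\Phi_{m,n}]$ and $[\Phi_{k,h},\Phi_{m,n}]$ land in $V$ by \eqref{LieAlg02c1}, \eqref{LieAlg02c2} and \eqref{LieAlg02d} respectively (each right-hand side being a finite combination of matrix units). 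The one bracket needing care is $[a,a^{+}]=\partial\omega_{\Lambda}$ from \eqref{comp-aa+-a+a}: a priori this is only a bounded diagonal operator, $\partial\omega_{\Lambda}\Phi_{n}=(\omega_{n+1}-\omega_{n})\Phi_{n}$, and it belongs to the \emph{finite} span $V$ precisely because the Jacobi sequence is eventually constant in the cases at hand --- explicitly $\partial\omega_{\Lambda}=\omega_{1}\Phi_{0,0}$ in the semi-circle case and $\partial\omega_{\Lambda}=2\omega_{2}\Phi_{0,0}-\frac{1}{2}\Phi_{1,1}$ in the arc-sine case. I expect this last point, together with correctly tracking the Kronecker-delta corrections in \eqref{LieAlg02c1}--\eqref{LieAlg02d}, to be the only place requiring genuine bookkeeping; the generating direction is a straightforward induction.
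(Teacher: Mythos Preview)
Your proof is correct and, for the inclusion $V\subseteq\mathcal{L}(a,a^{+},\Phi_{0,0})$, mirrors the paper's argument: the paper raises the second index via $[a,\Phi_{0,n}]=-\Phi_{0,n+1}$ (using $a\Phi_{0,n}=0$), takes adjoints to get $\Phi_{n,0}$, and brackets these to reach all $\Phi_{m,n}$; you take the dual route, raising the first index via $[a^{+},\Phi_{m,0}]=\Phi_{m+1,0}$. These are equivalent.

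Where you are more careful is the reverse inclusion. The paper writes ``Clearly $\mathcal{L}(a,a^{+},\Phi_{0,0})\subseteq V$'' without justification; you correctly observe that this needs $V$ to be a $*$-Lie algebra, and that the only nontrivial bracket is $[a,a^{+}]=\partial\omega_{\Lambda}$. Since $\partial\omega_{\Lambda}$ is diagonal it cannot involve $a$ or $a^{+}$, so it lies in the \emph{finite} span $V$ precisely when $\omega$ is eventually constant. Your restriction to the semi-circle and arc-sine cases is therefore not just convenient but necessary: as stated with no hypothesis on $\omega$, the equality \eqref{thm:Phi(m,n)-gen-cal-L(a,a+,Phi(0,0))} fails (e.g.\ in the Boson case $\omega_{n}=n$, where $\partial\omega_{\Lambda}=\mathrm{Id}\notin V$). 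The paper only applies the theorem in the eventually-constant setting of Section~7.2, so its conclusions are unaffected, but your observation fills a genuine gap in the proof as written.
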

\begin{proof}
Clearly
\begin{equation}\label{LieAlg02d2}
\mathcal{L}\big(  a,a^{+},\Phi_{0,0} \big)\subseteq \hbox{lin-sp.}\big\{  a,a^{+},\Phi_{m,n}:n,m\in\mathbb{N}\big\}
\end{equation}
So to prove \eqref{thm:Phi(m,n)-gen-cal-L(a,a+,Phi(0,0))} it is sufficient to prove that
$\big\{\Phi_{m,n}\colon n,m\in\mathbb{N}\big\}\subseteq
\mathcal{L}\big( a,a^{+},\Phi_{0,0}\big)$. To this goal notice that \eqref{LieAlg03f} implies
\begin{equation}\label{LieAlg03f-0}
a\Phi_{0,n}=0=\Phi_{n,0}a^+ \qquad\forall n\in\mathbb{N}
\end{equation}
Therefore, $\Phi_{0,1} \in \mathcal{L}\big( a,a^{+},\Phi_{0,0}\big)$ because
$$
\mathcal{L}\big( a,a^{+},\Phi_{0,0}\big)\ni
[a,\Phi_{0,0}]= a\Phi_{0,0} - \Phi_{0,0}a
\overset{\eqref{LieAlg03f},\eqref{LieAlg03f-0}}{=} - \Phi_{0,1}
$$

Suppose by induction that, for $k\le n$,
$\Phi_{0,k} \in \mathcal{L}\big( a,a^{+},\Phi_{0,0}\big)$. Then $\Phi_{0,n+1} \in \mathcal{L}\big( a,a^{+},\Phi_{0,0}\big)$ because
$$
\mathcal{L}\big( a,a^{+},\Phi_{0,0}\big)\ni
[\Phi_{0,n},a]= a\Phi_{0,n} - \Phi_{0,n}a
\overset{\eqref{LieAlg03f}, \eqref{LieAlg03f-0}}{=} - \Phi_{0,n+1}
$$
Therefore, by induction, for $n\in\mathbb{N}$,
$\Phi_{0,n} \in \mathcal{L}\big( a,a^{+},\Phi_{0,0}\big)$. Because of \eqref{LieAlg03g}
and of the fact that $\mathcal{L}\big( a,a^{+},\Phi_{0,0}\big)$ is a $*$--Lie algebra, also
$\Phi_{n,0} \in \mathcal{L}\big( a,a^{+},\Phi_{0,0}\big)$. Consequently for each
$m, n\in\mathbb{N}$,
\begin{align*}
\mathcal{L}\big( a,a^{+},\Phi_{0,0}\big)\ni&
[\Phi_{0,m},\Phi_{n,0}]
= \Phi_{0,m}\Phi_{n,0} - \Phi_{n,0}\Phi_{0,m} \\
=& \Phi_{0}\Phi_{m}^* \Phi_{n}\Phi_{0}^*
- \Phi_{n}\Phi_{0}^*\Phi_{0}\Phi_{m}^*
{=\delta_{m,n}\omega_n!\Phi_{0}\Phi_{0}^*
- \Phi_{n}\Phi_{m}^* }
\end{align*}

Taking $m\ne n$, one finds $\Phi_{n,m}\in\mathcal{L} \big( a,a^{+}, \Phi_{0,0}\big)$ because
$$
\mathcal{L}\big( a,a^{+},\Phi_{0,0}\big)\ni
- \Phi_{n}\Phi_{m}^* = - \Phi_{n,m}
$$
and, for $m = n$, one gets
$$
\mathcal{L}\big( a,a^{+},\Phi_{0,0}\big)\ni
[\Phi_{0,n},\Phi_{n,0}]
= \omega_{n}! \Phi_{0,0}- \Phi_{n,n}
$$
and since $\omega_{n}! \Phi_{0,0} \in\mathcal{L}\big(a,a^{+},\Phi_{0,0}\big)$,
this implies that, for any $n\in\mathbb{N}$, $\Phi_{n,n}\in\mathcal{L}\big( a,a^{+},\Phi_{0,0}\big)$.
 In conclusion, for any
$m, n\in\mathbb{N}$, $$\Phi_{n,m}\in\mathcal{L}\big( a,a^{+},\Phi_{0,0}\big).$$
\end{proof}

\begin{remark}
 From
$$
\big(\omega_{\Lambda+1}-\omega_{\Lambda}\big)\Phi_n
=\big(\omega_{n+1}-\omega_{n}\big)\Phi_n\ ,\quad\forall n\in\mathbb{N}
$$
one deduces
\begin{align}\label{LieAlg03a}
\omega_{\Lambda+1}-\omega_{\Lambda}
\overset{\eqref{comp-aa+-a+a}}=[a,a^+]{=}
\sum_{n=0}^{n_\omega}(\omega_{n}!)^{-1}\big(\omega_{n+1}-\omega_{n}\big)\Phi_{n,n}
\end{align}
In particular, if $n_\omega=\infty$ (equivalently, $\omega_n>0$ for any $n\ge1$),
\begin{align}\label{LieAlg03d}
[a,a^+]=\omega_{\Lambda+1}-\omega_{\Lambda}
&=\sum_{n=0}^\infty(\omega_{n}!)^{-1}\big(\omega_{n+1} -\omega_{n}\big)\Phi_{n,n}\notag\\
&=\sum_{n=0}^\infty \big(\omega_{n+1} -\omega_{n}\big)\tilde{\Phi}_{n,n}
\end{align}
Suppose that there exists $K_{\omega}\in\mathbb{N}^*$ such that $K_{\omega}\le n_{\omega}$ and
$$
\omega_{n}= \omega_{K_{\omega}} \ ,\quad\forall
n\ge K_{\omega}
$$
then \eqref{LieAlg03a} becomes
$$
\omega_{\Lambda+1}-\omega_{\Lambda}
=[a,a^+]
=\sum_{n=0}^{n_\omega}(\omega_{n}!)^{-1}\big(\omega_{n+1}-\omega_{n}\big)\Phi_{n,n}
$$
\begin{equation}\label{LieAlg03d-a}
=\sum_{n=0}^{K_{\omega}-1}(\omega_{n}!)^{-1}\big(\omega_{n+1} -\omega_{n}\big)\Phi_{n,n}
=\sum_{n=0}^{K_{\omega}-1}\big(\omega_{n+1} -\omega_{n}\big)\tilde{\Phi}_{n,n}
\end{equation}
\end{remark}
\begin{lemma}
\begin{equation}\label{[a,.]n(F(Lambda)}
[a, \ \cdot \ ]^{n}(F_{\Lambda})=\partial^{n} F_{\Lambda}a^{n}\quad,\quad \forall n\in{\mathbb N}^*
\end{equation}
\end{lemma}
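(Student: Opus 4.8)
The plan is to prove the identity by induction on $n$, with the engine being a single ``shift identity'' that records how the annihilator commutes past an arbitrary function of the number operator. Throughout, everything takes place on the common invariant domain $\Gamma^{0}_{\omega}$ (the finite linear span of the $(\Phi_{n})$), so there are no domain subtleties to worry about.

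First I would establish the elementary identity: for every function $F\colon\mathbb{Z}\to\mathbb{C}$,
$$
a\,F_{\Lambda}=(SF)_{\Lambda}\,a\qquad\text{on }\Gamma^{0}_{\omega},
$$
where $S$ is the forward shift $(SF)(k):=F(k+1)$. This is checked directly on the basis: applying both sides to $\Phi_{m}$ and using $a\Phi_{m}=\omega_{m}\Phi_{m-1}$ (with the conventions $\omega_{0}=0$, $\Phi_{-1}=0$), the left-hand side gives $F_{m}\,\omega_{m}\Phi_{m-1}$ while the right-hand side gives $\omega_{m}(SF)_{m-1}\Phi_{m-1}=\omega_{m}F_{m}\Phi_{m-1}$, and the boundary case $m=0$ is trivial since both sides vanish. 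Subtracting $F_{\Lambda}a$ and writing $\partial F:=SF-F$ (consistent with the notation $\partial\omega_{\Lambda}:=\omega_{\Lambda+1}-\omega_{\Lambda}$ used above) yields the base case
$$
[a,\,\cdot\,](F_{\Lambda})=[a,F_{\Lambda}]=(\partial F)_{\Lambda}\,a .
$$

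For the inductive step, I would assume $[a,\,\cdot\,]^{n}(F_{\Lambda})=(\partial^{n}F)_{\Lambda}\,a^{n}$ and compute, using the Leibniz rule for commutators $[a,XY]=[a,X]Y+X[a,Y]$ together with the obvious fact $[a,a^{n}]=0$,
$$
[a,\,\cdot\,]^{n+1}(F_{\Lambda})=\bigl[a,(\partial^{n}F)_{\Lambda}a^{n}\bigr]=\bigl[a,(\partial^{n}F)_{\Lambda}\bigr]a^{n}=(\partial^{n+1}F)_{\Lambda}\,a\,a^{n}=(\partial^{n+1}F)_{\Lambda}\,a^{n+1},
$$
where the third equality applies the base case to the function $\partial^{n}F$ in place of $F$, and uses that $\partial$ commutes with iteration, i.e. $\partial(\partial^{n}F)=\partial^{n+1}F$. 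This closes the induction and proves \eqref{[a,.]n(F(Lambda)}.

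There is no real obstacle here: each step is a one-line manipulation on the finitely supported vectors $\Phi_{n}$. The only points demanding a little care are bookkeeping ones — getting the direction of the shift correct in $aF_{\Lambda}=(SF)_{\Lambda}a$, keeping the factor $a^{n}$ consistently on the right throughout, and noticing that $[a,a^{n}]=0$ so that the commutator Leibniz rule collapses to a single surviving term.
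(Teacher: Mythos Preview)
Your proof is correct and follows essentially the same approach as the paper: both establish the base case $[a,F_{\Lambda}]=(\partial F)_{\Lambda}a$ by checking on basis vectors $\Phi_{m}$, then close by the same induction using $[a,\partial^{n}F_{\Lambda}\,a^{n}]=[a,\partial^{n}F_{\Lambda}]a^{n}$. Your version is slightly more explicit (spelling out the shift identity $aF_{\Lambda}=(SF)_{\Lambda}a$, the Leibniz rule, and $[a,a^{n}]=0$), but the argument is the same.
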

\begin{proof}
It is known that
\begin{equation}\label{[FLambda,a+-]}
[F_{\Lambda}, a^{+}] = a^{+}\partial F_{\Lambda} \ ;\quad [a, F_{\Lambda}] = \partial F_{\Lambda}a
\end{equation}
The second equality in \eqref{[FLambda,a+-]} implies that, for any $k$, one has
$F_\Lambda a\Phi_k=F_{k-1}a\Phi_k$, and so
$$[a,F_\Lambda]\Phi_k=aF_\Lambda\Phi_k-F_\Lambda a\Phi_k
=F_k a\Phi_k-F_{k-1}a\Phi_k= \big(F_{\Lambda+1}- F_{\Lambda}\big)a\Phi_k=\partial F_{\Lambda}a\Phi_k
$$
The second equality in \eqref{[FLambda,a+-]} follows by taking the adjoints of both sides of the first.
Suppose by induction that \eqref{[a,.]n(F(Lambda)} holds for some $n$, then
\begin{align*}
[a, \ \cdot \ ]^{n+1}(F_{\Lambda})
=[a,[a, \ \cdot \ ]^{n}(F_{\Lambda})]
=& [a,\partial^{n} F_{\Lambda}a^{n}]\\
= &[a,\partial^{n} F_{\Lambda}]a^{n}\\
\overset{\eqref{[FLambda,a+-]}}{=}&\partial^{n+1} {F}_{\Lambda}a^{n+1}
\end{align*}
Thus, by induction, \eqref{[a,.]n(F(Lambda)} holds for every $n\in\mathbb{N}$.
\end{proof}

\begin{remark}
Since $\partial\omega_{\Lambda}\in\mathcal{L}\big( a,a^{+}\big)$, for any $n\in\mathbb{N}$,
$$
\mathcal{L}\big( a,a^{+}\big)\ni [a, \ \cdot \ ]^{n}(\omega_{\Lambda})
=\partial^{n}\omega_{\Lambda}a^{n}
$$
\end{remark}
\subsection{The Semi--circle and Generalized Arc--sine case}

\hfill\break

In this section we apply Theorem \ref{LieAlg03} to the following cases:\\
-- $\omega_n=\omega_1$ for any $n\ge1$ (Semi--circle case);\\
-- $\omega_1\ne\omega_2=\omega_n$ for any $n\ge 2$ (Generalized Arc--sine case)\\
-- $\omega_n=\omega_2=\frac{\omega_1}{2}$ for any $n\ge 2$ (Arc--sine case);

\subsubsection{The semi--circle case}

\hfill\break

\begin{proposition}\label{prop:LieAlg-semi-circ}
If $\omega_n=\omega_1$ for any $n\ge1$ (semi-circle case),
\begin{equation}\label{LieAlg03c}
\mathcal{L}\big(a,a^{+} \big)=\hbox{lin-sp.}\big\{  a,a^{+},\Phi_{m,n}: n,m\in\mathbb{N}\big\}
\end{equation}
\end{proposition}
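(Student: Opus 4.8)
The plan is to obtain Proposition~\ref{prop:LieAlg-semi-circ} as an immediate corollary of Theorem~\ref{LieAlg03}: it suffices to show that in the semi--circle case the vacuum projection $\Phi_{0,0}$ already belongs to $\mathcal{L}\big(a,a^{+}\big)$, because then $\mathcal{L}\big(a,a^{+}\big)=\mathcal{L}\big(a,a^{+},\Phi_{0,0}\big)$ and \eqref{thm:Phi(m,n)-gen-cal-L(a,a+,Phi(0,0))} yields exactly \eqref{LieAlg03c}.

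First I would compute the commutator $[a,a^{+}]$ on $\Gamma^{0}_{\omega}$. By the last identity in \eqref{comp-aa+-a+a} one has $[a,a^{+}]=\omega_{\Lambda+1}-\omega_{\Lambda}=\partial\omega_{\Lambda}$. In the semi--circle case $\omega_{n}=\omega_{1}$ for every $n\ge1$, while $\omega_{0}=0$ by the convention \eqref{omega0=0}; hence $(\partial\omega)(n)=\omega_{n+1}-\omega_{n}$ equals $\omega_{1}$ when $n=0$ and vanishes for every $n\ge1$. This is precisely the instance $K_{\omega}=1$ of formula \eqref{LieAlg03d-a}, which therefore reads
\begin{equation*}
[a,a^{+}]=\omega_{1}\,\Phi_{0,0}
\end{equation*}
(using $\tilde{\Phi}_{0}=\Phi_{0}$, as $\omega_{0}!=1$). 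Since $\omega_{1}>0$ in the semi--circle case, dividing gives $\Phi_{0,0}=\omega_{1}^{-1}[a,a^{+}]\in\mathcal{L}\big(a,a^{+}\big)$.

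With this in hand the proof concludes quickly: the three generators $a$, $a^{+}$, $\Phi_{0,0}$ of $\mathcal{L}\big(a,a^{+},\Phi_{0,0}\big)$ all lie in $\mathcal{L}\big(a,a^{+}\big)$, so $\mathcal{L}\big(a,a^{+},\Phi_{0,0}\big)\subseteq\mathcal{L}\big(a,a^{+}\big)$; the reverse inclusion is trivial, whence the two $*$--Lie algebras coincide, and \eqref{LieAlg03c} follows from Theorem~\ref{LieAlg03}. There is essentially no obstacle here: the only point to get right is the identification $[a,a^{+}]=\omega_{1}\Phi_{0,0}$, i.e. that the discrete derivative $\partial\omega_{\Lambda}$ is supported at the single level $n=0$. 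Note that this is exactly the feature distinguishing the semi--circle case from the generalized arc--sine case $\omega_{1}\ne\omega_{2}=\omega_{n}$, $n\ge2$, where $\partial\omega_{\Lambda}$ is supported at $n=0$ and $n=1$, so that $[a,a^{+}]$ is a linear combination of $\Phi_{0,0}$ and $\Phi_{1,1}$ and separating these two projections requires one further commutator — presumably the content of the arc--sine analogue treated next.
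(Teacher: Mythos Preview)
Your proof is correct and follows essentially the same approach as the paper: compute $[a,a^{+}]=\omega_{1}\Phi_{0,0}$ from \eqref{comp-aa+-a+a} (equivalently \eqref{LieAlg03a} or \eqref{LieAlg03d-a}), deduce $\Phi_{0,0}\in\mathcal{L}(a,a^{+})$, and then invoke Theorem~\ref{LieAlg03}. Your closing remark about the arc--sine case is also accurate.
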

\begin{proof}
If $\omega_n=\omega_1$ for any $n\ge1$, \eqref{LieAlg03a} gives
\begin{equation}\label{LieAlg03b}
\mathcal{L}\big(a,a^{+} \big)\ni [a,a^+]=\omega_{\Lambda+1}-\omega_{\Lambda}=
\sum_{n=0}^\infty(\omega_n!)^{-1}\big(\omega_{n+1}-\omega_{n}\big)\Phi_{n,n}=\omega_1\Phi_{0,0}
\end{equation}
So $\mathcal{L}\big(a,a^{+} \big)= \mathcal{L}\big(a,a^{+}, \Phi_{0,0}\big)$
and \eqref{LieAlg03c} follows from Theorem \ref{LieAlg03}.
\end{proof}

\subsubsection{The generalized Arc--sine case}

\begin{proposition}\label{prop:LieAlg-GAS}
\eqref{LieAlg03c} holds if $\omega_1\ne\omega_2= \omega_n$ for any $n\ge2$ (generalized arc--sine case) and if $\omega_2$ is  neither $2\omega_1$ nor ${3\over2}\omega_1$.
\end{proposition}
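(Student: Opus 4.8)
The plan is to reduce the statement, exactly as in the proof of Proposition \ref{prop:LieAlg-semi-circ}, to showing that the vacuum projection $\Phi_{0,0}$ already belongs to $\mathcal{L}\big(a,a^{+}\big)$: once this is known one has $\mathcal{L}\big(a,a^{+}\big)=\mathcal{L}\big(a,a^{+},\Phi_{0,0}\big)$, and \eqref{LieAlg03c} then follows from Theorem \ref{LieAlg03}. The inclusion $\subseteq$ in \eqref{LieAlg03c} is automatic from the commutation relations of Proposition \ref{LieAlg02}, as in \eqref{LieAlg02d2}. So the whole problem is to manufacture $\Phi_{0,0}$ out of $a$ and $a^{+}$.

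First I would collect a supply of \emph{diagonal} elements of $\mathcal{L}\big(a,a^{+}\big)$. Writing $\partial\omega_{\Lambda}=\omega_{\Lambda+1}-\omega_{\Lambda}$ and $\partial^{k}\omega_{\Lambda}$ for the iterated differences, the generalized arc--sine hypothesis $\omega_{1}\ne\omega_{2}=\omega_{n}$ $(n\ge2)$ means that $(\partial^{k}\omega)_{n}=0$ for every $n\ge2$ and every $k\ge1$, while $(\partial\omega)_{0}=\omega_{1}\ne0$ and $(\partial\omega)_{1}=\omega_{2}-\omega_{1}\ne0$. Hence, taking $K_{\omega}=2$ in \eqref{LieAlg03d-a}, the element
\[
D:=[a,a^{+}]=\partial\omega_{\Lambda}=\omega_{1}\Phi_{0,0}+\tfrac{\omega_{2}-\omega_{1}}{\omega_{1}}\,\Phi_{1,1}
\]
is diagonal and supported on $\{\Phi_{0},\Phi_{1}\}$. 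Using $[a,F_{\Lambda}]=\partial F_{\Lambda}\,a$ and $[F_{\Lambda},a^{+}]=a^{+}\partial F_{\Lambda}$ (see \eqref{[FLambda,a+-]}) together with $aa^{+}=\omega_{\Lambda+1}$ from \eqref{comp-aa+-a+a}, one computes $[a,D]=\partial^{2}\omega_{\Lambda}\,a$, $[a^{+},D]=-a^{+}\partial^{2}\omega_{\Lambda}$, and then two further elements of $\mathcal{L}\big(a,a^{+}\big)$:
\[
E:=\big[[a,D],a^{+}\big]=\partial^{2}\omega_{\Lambda}\,\partial\omega_{\Lambda}+a^{+}\partial^{3}\omega_{\Lambda}\,a,\qquad
F:=\big[[a,D],[a^{+},D]\big]=a^{+}(\partial^{2}\omega_{\Lambda})^{2}a-(\partial^{2}\omega_{\Lambda})^{2}\,\omega_{\Lambda+1}.
\]
Both $E$ and $F$ are diagonal (each summand sends $\Phi_{n}$ to a scalar multiple of $\Phi_{n}$), and since $(\partial^{k}\omega)_{n}=0$ for $n\ge2$ they are supported on $\{\Phi_{0},\Phi_{1},\Phi_{2}\}$; evaluating the finite differences $\partial\omega,\partial^{2}\omega,\partial^{3}\omega$ at the indices $0$ and $1$ gives their eigenvalues on $\Phi_{0},\Phi_{1},\Phi_{2}$ explicitly.

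Finally I would let $N$ be the $3\times3$ matrix whose rows are the eigenvalue triples of $D$, $E$, $F$ on $(\Phi_{0},\Phi_{1},\Phi_{2})$ and compute
\[
\det N = -\,\omega_{1}\,\omega_{2}^{2}\,(\omega_{2}-\omega_{1})\,(\omega_{2}-2\omega_{1})\,(2\omega_{2}-3\omega_{1}).
\]
Since $\omega_{1},\omega_{2}>0$ and, in the generalized arc--sine case, $\omega_{2}\ne\omega_{1}$, the two extra hypotheses $\omega_{2}\ne 2\omega_{1}$ and $\omega_{2}\ne\tfrac{3}{2}\omega_{1}$ are exactly what forces $\det N\ne0$. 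When $\det N\ne0$, the diagonal operators $D,E,F$ form a basis of the space of diagonal operators supported on $\{\Phi_{0},\Phi_{1},\Phi_{2}\}$; in particular $\Phi_{0,0}$, whose eigenvalue triple is $(1,0,0)$, is a linear combination of $D,E,F$, so $\Phi_{0,0}\in\mathcal{L}\big(a,a^{+}\big)$, which finishes the proof by the reduction of the first paragraph.

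The main obstacle — and the reason for the two exceptions in the statement — is exactly the factorization of $\det N$: one must verify that, beyond the value $\omega_{2}=\omega_{1}$ already excluded by the generalized arc--sine condition, the only other zeros of this determinant in the positive quadrant are $\omega_{2}=2\omega_{1}$ and $\omega_{2}=\tfrac{3}{2}\omega_{1}$. For those two values a finite-level argument of this kind cannot produce $\Phi_{0,0}$, and they are left aside here.
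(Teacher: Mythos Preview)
Your proof is correct and takes a genuinely different route from the paper's.

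The paper works off the diagonal: it computes $[a,\partial\omega_{\Lambda}]$ as a combination of $\Phi_{0,1}$ and $\Phi_{1,2}$, then brackets once more with $\partial\omega_{\Lambda}$ to obtain a second such combination, and eliminates $\Phi_{1,2}$ between them. The surviving coefficient of $\Phi_{0,1}$ is $(\omega_{2}-2\omega_{1})(2\omega_{2}-3\omega_{1})/(\omega_{2}-\omega_{1})$, which is where the two excluded values appear. Having $\Phi_{0,1}$ (and its adjoint $\Phi_{1,0}$) in hand, the paper then forms $[\Phi_{0,1},\Phi_{1,0}]=\omega_{1}\Phi_{0,0}-\Phi_{1,1}$ and combines it linearly with $[a,a^{+}]$ to isolate $\Phi_{0,0}$.

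You stay on the diagonal throughout, building three diagonal operators $D,E,F\in\mathcal{L}(a,a^{+})$ supported on $\{\Phi_{0},\Phi_{1},\Phi_{2}\}$ and reading off the obstruction as the single determinant $\det N=-\omega_{1}\omega_{2}^{2}(\omega_{2}-\omega_{1})(\omega_{2}-2\omega_{1})(2\omega_{2}-3\omega_{1})$. This is arguably cleaner and more systematic: the two exceptional values appear simultaneously as roots of one polynomial, and the method would generalize more readily to principal Jacobi sequences with larger $K_{\omega}$. The paper's approach, on the other hand, makes the provenance of each exceptional value visible at a concrete elimination step and avoids any determinant bookkeeping. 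Both arguments terminate identically by invoking Theorem~\ref{LieAlg03}.
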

\begin{proof}
In this case, $K_{\omega}=2$ and so \eqref{LieAlg03d-a} gives
\begin{align}\label{LieAlg04-GAS}
\mathcal{L}\big(a,a^{+}\big)\ni\partial\omega_{\Lambda}
&=[a,a^+]=\sum_{n=0}^{K_{\omega}-1}(\omega_{n}!)
^{-1}(\omega_{n+1}-\omega_{n})\Phi_{n,n}\notag\\
&= \omega_1\Phi_{0,0} +\omega_{1}^{-1}(\omega_{2}-\omega_{1})\Phi_{1,1}
\end{align}
Therefore
\begin{align}\label{LieAlg04-GAS1}
\mathcal{L} \big(a,a^{+} \big)\ni& [a, \partial\omega_{\Lambda}]
=[a,\omega_1\Phi_{0,0}+\omega _{1}^{-1}(\omega_{2}-\omega_{1})\Phi_{1,1}]\notag\\
=&\omega_1[a,\Phi_{0,0}] +\omega_{1}^{-1} (\omega_{2}-\omega_{1})[a,\Phi_{1,1}]\notag\\
\overset{\eqref{LieAlg02c1}}{=}&
-\omega_1\Phi_{0,1} + \omega_{1}^{-1}(\omega_{2} -\omega_{1})(\omega_1\Phi_{0,1}-\Phi_{1,2})\notag\\
=& -\omega_1\Phi_{0,1}+(\omega_{2}-
\omega_{1})\Phi_{0,1}- \omega_{1}^{-1} (\omega_{2}-\omega_{1})\Phi_{1,2}\notag\\
=& (\omega_2-2\omega_1)\Phi_{0,1} - \omega_{1}^{-1}(\omega_{2} - \omega_{1})\Phi_{1,2}
\end{align}
\eqref{LieAlg04-GAS} and \eqref{LieAlg04-GAS1} imply that
\begin{align}\label{LieAlg04-GAS2a}
&\mathcal{L} \big(a,a^{+} \big)\ni
\big[\partial\omega_{\Lambda},[a, \partial\omega_{\Lambda}]\big] \notag\\
&=[\omega_1\Phi_{0,0} +\omega_{1}^{-1}(\omega_{2} -\omega_{1})\Phi_{1,1},
(\omega_2-2\omega_1)\Phi_{0,1} - \omega_{1}^{-1}(\omega_{2} - \omega_{1})\Phi_{1,2}]\notag\\
&=\omega_1 (\omega_2-2\omega_1)[\Phi_{0,0},\Phi_{0,1}]+\omega_{1}^{-1}(\omega_{2}-\omega_{1})
(\omega_2-2\omega_1)[\Phi_{1,1},\Phi_{0,1}]\notag\\
& -(\omega_2-\omega_1)[\Phi_{0,0},\Phi_{1,2}]
-\omega_{1}^{-2}(\omega_{2}- \omega_{1})^{2}[\Phi_{1,1},\Phi_{1,2}]
\end{align}
\eqref{LieAlg02d} says
$[\Phi_{0,0},\Phi_{0,1}]=\Phi_{0,1}$,
$[\Phi_{1,1},\Phi_{0,1}]=-\omega_1\Phi_{0,1}$,
$[\Phi_{1,1},\Phi_{1,2}]=\omega_1\Phi_{1,2}$ and
$[\Phi_{0,0},\Phi_{1,2}]=0$. So \eqref{LieAlg04-GAS2a} becomes to
\begin{align}\label{LieAlg04-GAS2}
&\mathcal{L} \big(a,a^{+} \big)\ni
\big[\partial \omega_{\Lambda},[a, \partial\omega_{\Lambda}]\big] \notag\\
=&\omega_1(\omega_2-2\omega_1)\Phi_{0,1}-
(\omega_{2}-\omega_{1}) (\omega_2- 2\omega_1)\Phi_{0,1}-\omega_{1}^{-1}(\omega_{2}- \omega_{1})^{2}\Phi_{1,2}\notag\\
=&-(\omega_2-2\omega_1)^2\Phi_{0,1}
-\omega_{1}^{-1}(\omega_{2}-
\omega_{1})^{2}\Phi_{1,2}
\end{align}
Taking a linear combination of \eqref{LieAlg04-GAS1} and \eqref{LieAlg04-GAS2}, one finds
\begin{align}\label{LieAlg04-GAS3}
\mathcal{L} \big(a,a^{+} \big)&\ni(\omega_2-2\omega_1) \Phi_{0,1}-\omega_{1}^{-1}(\omega_{2}- \omega_{1}) \Phi_{1,2}\notag\\
&+\frac{1}{\omega_{2}-\omega_{1}}
\Big((\omega_2-2\omega_1)^{2}\Phi_{0,1}+\omega_{1}^{-1} (\omega_{2}-\omega_{1})^{2}\Phi_{1,2}\Big)\notag\\
&=\frac{(\omega_2-2\omega_1)(2\omega_2-3\omega_1)}{\omega_{2}-\omega_{1}}\Phi_{0,1}
\end{align}
In conclusion, $\Phi_{0,1}\in\mathcal{L} \big(a,a^{+} \big)$ if $(\omega_2-2\omega_1)(2\omega_2-3\omega_1)\ne0$ (equivalently, $2\omega_1\ne\omega_2\ne{3\over2}\omega_1$; in other words, by denoting $\omega_2=(1+q)\omega_1$, if $q\notin\{1,{1\over2}\}$); consequently, $\Phi_{1,0}\in\mathcal{L} \big(a,a^{+} \big)$ since $\mathcal{L} \big(a,a^{+} \big)$ is a $*$--Lie algebra.
Finally, since $\mathcal{L} \big(a,a^{+} \big)\ni \Phi_{0,1} \ , \ \Phi_{1,0}$, then also
\begin{equation}\label{LieAlg04b1}
\mathcal{L} \big(a,a^{+} \big)\ni [\Phi_{0,1}, \Phi_{1,0}]\overset{\eqref{LieAlg02d}}=  \omega_{1}\Phi_{0,0} - \Phi_{1,1}
\end{equation}
Taking again linear combinations of \eqref{LieAlg04b1} and \eqref{LieAlg04-GAS}, one finds,
\begin{align}\label{LieAlg04b1a}
&\mathcal{L} \big(a,a^{+} \big)
\ni \omega_{1}^{-1}(\omega_{2}-\omega_{1})[\Phi_{0,1}, \Phi_{1,0}]+ [a,a^+]\notag\\
&=\omega_{1}^{-1}(\omega_{2}-\omega_{1})(\omega_{1}
\Phi_{0,0}- \Phi_{1,1})+ \omega_1\Phi_{0,0} +\omega_ {1}^{-1}(\omega_{2}-\omega_{1})\Phi_{1,1} \notag\\
&=(\omega_{2}-\omega_{1})\Phi_{0,0} + \omega_1\Phi_{0,0}
=\omega_{2}\Phi_{0,0}
\end{align}
and so $\Phi_{0,0}\in\mathcal{L}\big(a, a^{+}\big)$.
\end{proof}
Now we consider the case $\omega_n=c\omega_1$ with $c\in \big\{2,{3\over2}\big\}$.
\begin{proposition}\label{prop:LieAlg-GAS02}
\eqref{LieAlg03c} holds if  ${3\over2}\omega_1 =\omega_2= \omega_n$ for any $n\ge2$.
\end{proposition}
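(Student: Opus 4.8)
The plan is to follow the scheme of the proof of Proposition \ref{prop:LieAlg-GAS}: by Theorem \ref{LieAlg03} together with the inclusion \eqref{LieAlg02d2}, it suffices to prove that $\Phi_{0,0}\in\mathcal{L}\big(a,a^{+}\big)$, since then $\mathcal{L}\big(a,a^{+}\big)=\mathcal{L}\big(a,a^{+},\Phi_{0,0}\big)=\hbox{lin-sp.}\big\{a,a^{+},\Phi_{m,n}:n,m\in\mathbb{N}\big\}$. Two ingredients are at hand. First, since here $K_{\omega}=2$, \eqref{LieAlg03d-a} gives $\partial\omega_{\Lambda}=[a,a^{+}]=\omega_{1}\Phi_{0,0}+\omega_{1}^{-1}(\omega_{2}-\omega_{1})\Phi_{1,1}=\omega_{1}\Phi_{0,0}+\frac{1}{2}\Phi_{1,1}\in\mathcal{L}\big(a,a^{+}\big)$, using $\omega_{1}^{-1}(\omega_{2}-\omega_{1})=\frac{1}{2}$. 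Second, by \eqref{LieAlg02c1} (i.e.\ $[a,\Phi_{0,0}]=-\Phi_{0,1}$ and $[a,\Phi_{1,1}]=\omega_{1}\Phi_{0,1}-\Phi_{1,2}$), $[a,[a,a^{+}]]=[a,\partial\omega_{\Lambda}]=-\frac{1}{2}\omega_{1}\Phi_{0,1}-\frac{1}{2}\Phi_{1,2}\in\mathcal{L}\big(a,a^{+}\big)$ (this is also \eqref{LieAlg04-GAS1} specialized to $\omega_{2}=\frac{3}{2}\omega_{1}$).

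The key step, and the place where the hypothesis $\omega_{2}=\frac{3}{2}\omega_{1}$ is really used, is to produce $\Phi_{2,2}$. Bracketing $[a,[a,a^{+}]]$ with $a^{+}$ and expanding via \eqref{LieAlg02c2}, namely $[\Phi_{0,1},a^{+}]=\omega_{1}\Phi_{0,0}-\Phi_{1,1}$ and $[\Phi_{1,2},a^{+}]=\omega_{2}\Phi_{1,1}-\Phi_{2,2}$, one finds
\begin{align*}
\big[[a,[a,a^{+}]],a^{+}\big]
&=-\frac{1}{2}\omega_{1}\big(\omega_{1}\Phi_{0,0}-\Phi_{1,1}\big)-\frac{1}{2}\big(\omega_{2}\Phi_{1,1}-\Phi_{2,2}\big)\\
&=-\frac{1}{2}\omega_{1}^{2}\Phi_{0,0}+\frac{1}{2}(\omega_{1}-\omega_{2})\Phi_{1,1}+\frac{1}{2}\Phi_{2,2}
=-\frac{1}{2}\omega_{1}\,\partial\omega_{\Lambda}+\frac{1}{2}\Phi_{2,2},
\end{align*}
the last equality because $\frac{1}{2}(\omega_{1}-\omega_{2})=-\frac{1}{4}\omega_{1}$ and $\partial\omega_{\Lambda}=\omega_{1}\Phi_{0,0}+\frac{1}{2}\Phi_{1,1}$ when $\omega_{2}=\frac{3}{2}\omega_{1}$. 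Since $\partial\omega_{\Lambda}\in\mathcal{L}\big(a,a^{+}\big)$, this gives $\Phi_{2,2}\in\mathcal{L}\big(a,a^{+}\big)$. I expect this to be the main obstacle: the right-hand side collapses to a scalar multiple of $\partial\omega_{\Lambda}$ plus $\Phi_{2,2}$ exactly because $2\omega_{2}-3\omega_{1}=0$ — the same quantity whose non-vanishing let the previous proof reach $\Phi_{0,1}$ directly — so for a generic $\omega_{2}$ the right-hand side is an honest combination of $\Phi_{0,0},\Phi_{1,1},\Phi_{2,2}$ and the value $\omega_{2}=\frac{3}{2}\omega_{1}$ genuinely requires this separate treatment.

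The remaining steps are bookkeeping with the matrix-unit relations. Bracketing $\Phi_{2,2}$ against $[a,[a,a^{+}]]$ and using \eqref{LieAlg02d}, which kills the $\Phi_{0,1}$-part and sends $\Phi_{1,2}\mapsto-\omega_{2}!\,\Phi_{1,2}$, yields $\big[\Phi_{2,2},[a,[a,a^{+}]]\big]=\frac{1}{2}\omega_{2}!\,\Phi_{1,2}$, so $\Phi_{1,2}\in\mathcal{L}\big(a,a^{+}\big)$ since $\omega_{2}!=\omega_{1}\omega_{2}>0$. Then \eqref{LieAlg02c2} gives $[\Phi_{1,2},a^{+}]=\omega_{2}\Phi_{1,1}-\Phi_{2,2}$, whence $\Phi_{1,1}\in\mathcal{L}\big(a,a^{+}\big)$, and finally $\omega_{1}\Phi_{0,0}=\partial\omega_{\Lambda}-\frac{1}{2}\Phi_{1,1}\in\mathcal{L}\big(a,a^{+}\big)$; Theorem \ref{LieAlg03} then delivers \eqref{LieAlg03c}. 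As a variant one can argue through finite differences: since $3\omega_{1}-2\omega_{2}=0$, the identity $[a,\ \cdot\ ]^{2}(\partial\omega_{\Lambda})=\partial^{3}\omega_{\Lambda}\,a^{2}=\frac{1}{2}\Phi_{1,3}$ produces a pure rank-one operator in $\mathcal{L}\big(a,a^{+}\big)$, and combining $[\Phi_{1,3},\Phi_{3,1}]$, $[a^{+},[a^{+},\Phi_{1,3}]]$ and $\Phi_{2,2}$ again isolates $\Phi_{1,1}$.
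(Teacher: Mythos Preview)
Your proof is correct. The computations with \eqref{LieAlg02c1}, \eqref{LieAlg02c2}, \eqref{LieAlg02d} all check out, and the key observation that $\big[[a,\partial\omega_{\Lambda}],a^{+}\big]=-\tfrac{1}{2}\omega_{1}\,\partial\omega_{\Lambda}+\tfrac{1}{2}\Phi_{2,2}$ when $\omega_{2}=\tfrac{3}{2}\omega_{1}$ is exactly right (and, as you note, this collapse characterizes $2\omega_{2}-3\omega_{1}=0$).

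The paper proceeds differently. It computes the same element you do---there written as $\big[a,[a,\partial\omega_{\Lambda}]^{*}\big]$, which equals your $\big[[a,\partial\omega_{\Lambda}],a^{+}\big]$---but then computes a \emph{second} bracket $\big[[a,\partial\omega_{\Lambda}],[a,\partial\omega_{\Lambda}]^{*}\big]$ and takes the linear combination $(\omega_{2}-\omega_{1})\big[a,[a,\partial\omega_{\Lambda}]^{*}\big]-\big[[a,\partial\omega_{\Lambda}],[a,\partial\omega_{\Lambda}]^{*}\big]$ to eliminate $\Phi_{2,2}$, obtaining $(\omega_{2}-2\omega_{1})(\omega_{1}\Phi_{0,0}-\Phi_{1,1})$. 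This isolates $\omega_{1}\Phi_{0,0}-\Phi_{1,1}$ whenever $\omega_{2}\neq 2\omega_{1}$, and combining with $[a,a^{+}]$ yields $\Phi_{0,0}$ directly. So the paper's argument is actually valid for the whole range $\omega_{2}\neq 2\omega_{1}$ and treats $\omega_{2}=\tfrac{3}{2}\omega_{1}$ only as an instance; your argument, by contrast, is tailored to the special value and is shorter there---one bracket instead of two, with the collapse doing the work---at the price of needing the extra pass through $\Phi_{2,2}\to\Phi_{1,2}\to\Phi_{1,1}$. Your finite-difference variant via $[a,\,\cdot\,]^{2}(\partial\omega_{\Lambda})=\partial^{3}\omega_{\Lambda}\,a^{2}=\tfrac{1}{2}\Phi_{1,3}$ is also valid and gives yet another clean entry point specific to $2\omega_{2}=3\omega_{1}$.
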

\begin{proof}
It follows from
\eqref{LieAlg04-GAS1} and its conjugate form that
\begin{align*}
(\omega_2-2\omega_1) \Phi_{0,1} - \omega_{1}^{-1} (\omega_{2} - \omega_{1})\Phi_{1,2}  =[a, \partial\omega_{\Lambda}]\in\mathcal{L} \big(a,a^{+} \big)
\end{align*}
and
\begin{align}
(\omega_2-2\omega_1) \Phi_{1,0} - \omega_{1}^{-1}(\omega_{2} - \omega_{1})\Phi_{2,1}
=[a, \partial\omega_{\Lambda}]^*
\in\mathcal{L} \big(a,a^{+} \big)
\end{align}
So
\begin{align}\label{LieAlg04-a2}
\mathcal{L} \big(a,a^{+} \big)\ni& \big[a,[a, \partial\omega_{\Lambda}]^*\big]= (\omega_2-2\omega_1) [a,\Phi_{1,0}] - \omega_{1}^{-1}(\omega_{2} - \omega_{1})[a,\Phi_{2,1}]\notag\\  \overset{\eqref{LieAlg02c1}}=&
(\omega_2-2\omega_1)(\omega_1\Phi_{0,0}-\Phi_{1,1})
- \omega_{1}^{-1}(\omega_{2} - \omega_{1})
(\omega_2\Phi_{1,1}-\Phi_{2,2})\notag\\
=&(\omega_2-2\omega_1)\omega_1\Phi_{0,0}
-(\omega_2-2\omega_1+\omega_{1}^{-1}\omega_2(\omega_{2} - \omega_{1}) )\Phi_{1,1}\notag\\
&+\omega_{1}^{-1}(\omega_{2} - \omega_{1}) \Phi_{2,2}
\end{align}
and
\begin{align}\label{LieAlg04-a3}
&\mathcal{L} \big(a,a^{+} \big)\ni \big[[a, \partial \omega_{\Lambda}],[a, \partial\omega_{\Lambda}]^*\big]
\notag\\
&=(\omega_2-2\omega_1)^2[\Phi_{0,1},\Phi_{1,0}]
-(\omega_2-2\omega_1)\omega_1^{-1}(\omega_2-\omega_1) [\Phi_{0,1},\Phi_{2,1}]\notag\\
&-(\omega_2-2\omega_1)\omega_1^{-1}(\omega_2-\omega_1) [\Phi_{1,2},\Phi_{1,0}]
+\omega_1^{-2}(\omega_2-\omega_1)^2 [\Phi_{2,1},\Phi_{1,2}]\notag\\
&\overset{\eqref{LieAlg02d}}=
(\omega_2-2\omega_1)^2(\omega_1\Phi_{0,0}-\Phi_{1,1})
+\omega_1^{-2}(\omega_2-\omega_1)^2
(\omega_1\Phi_{2,2}-\omega_1\omega_2\Phi_{1,1}) \notag\\
&=(\omega_2-2\omega_1)^2\omega_1\Phi_{0,0}
-\big((\omega_2-2\omega_1)^2 +\omega_1^{-1}\omega_2 (\omega_2-\omega_1)^2\big)\Phi_{1,1}\notag\\
&\ +\omega_1^{-1}(\omega_2-\omega_1)^2\Phi_{2,2}
\end{align}
Therefore
\begin{align}\label{LieAlg04-a4}
&\mathcal{L} \big(a,a^{+} \big)\ni (\omega_2-\omega_1) \big[a,[a, \partial\omega_{\Lambda}]^*\big]
-\big[[a, \partial \omega_{\Lambda}],[a, \partial\omega_{\Lambda}]^*\big]\notag\\
=&\omega_1^2(\omega_2-2\omega_1)\Phi_{0,0}
-\omega_1(\omega_2-2\omega_1)\Phi_{1,1}\notag\\
=&(\omega_2-2\omega_1)(\omega_1\Phi_{0,0}-\Phi_{1,1})
\end{align}
i.e.
\begin{align}\label{LieAlg04-a5}
\omega_1\Phi_{0,0}-\Phi_{1,1}\in \mathcal{L} \big(a,a^{+} \big)
\end{align}
if $\omega_2\ne 2\omega_1$ (in particular, if $\omega_2={3\over2} \omega_1$). Consequently,
\begin{align}
&\mathcal{L} \big(a,a^{+} \big)\ni \omega_1^{-1}(\omega_2-\omega_1)(\omega_1\Phi_{0,0}-\Phi_{1,1})+ [a,a^+]\notag\\
\overset{\eqref{LieAlg04-GAS}}=&
\omega_1^{-1}(\omega_2-\omega_1)(\omega_1\Phi_{0,0}
-\Phi_{1,1})+\omega_1\Phi_{0,0} +\omega_{1}^{-1} (\omega_{2}-\omega_{1})\Phi_{1,1} \notag\\
=&\omega_{2}\Phi_{0,0}
\end{align}
This gives $\Phi_{0,0}\in\mathcal{L}\big(a, a^{+}\big)$.
\end{proof}
Finally, we consider the case $\omega_2 = 2\omega_1$.
\begin{proposition}\label{prop:LieAlg-GAS-om2=2om1}
\eqref{LieAlg03c} holds if
\begin{equation}\label{cond-om2=2om1}
\omega_2 = 2\omega_1= \omega_n > 0 \quad,\quad \forall n\ge2
\end{equation}
\end{proposition}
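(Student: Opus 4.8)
By Theorem \ref{LieAlg03} one always has $\mathcal{L}(a,a^{+})\subseteq\mathcal{L}(a,a^{+},\Phi_{0,0})=\hbox{lin-sp.}\{a,a^{+},\Phi_{m,n}:n,m\in\mathbb{N}\}$, with equality precisely when $\Phi_{0,0}\in\mathcal{L}(a,a^{+})$. So, under \eqref{cond-om2=2om1}, the whole task is to produce the vacuum projection $\Phi_{0,0}$ inside $\mathcal{L}(a,a^{+})$; that is the plan, using only the bracket rules \eqref{LieAlg02c1}, \eqref{LieAlg02c2}, \eqref{LieAlg02d} and the formula \eqref{LieAlg03d-a} for $[a,a^{+}]$.

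First I would list the ``seed'' elements. Since $K_{\omega}=2$, \eqref{LieAlg03d-a} gives $[a,a^{+}]=\omega_{1}\Phi_{0,0}+\omega_{1}^{-1}(\omega_{2}-\omega_{1})\Phi_{1,1}=\omega_{1}\Phi_{0,0}+\Phi_{1,1}=:D$, the coefficient of $\Phi_{1,1}$ being $1$ because $\omega_{2}=2\omega_{1}$. Applying \eqref{LieAlg02c1} twice one gets $[a,D]=-\Phi_{1,2}$ (again it is $\omega_{2}=2\omega_{1}$ that makes the $\Phi_{0,1}$ contribution cancel), so $\Phi_{1,2}$ — and, $\mathcal{L}(a,a^{+})$ being a $*$--Lie algebra, $\Phi_{2,1}$ — lie in $\mathcal{L}(a,a^{+})$; then $E:=[a^{+},\Phi_{1,2}]\overset{\eqref{LieAlg02c2}}{=}\Phi_{2,2}-2\omega_{1}\Phi_{1,1}\in\mathcal{L}(a,a^{+})$. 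At this point $D$ and $E$ provide only two diagonal relations among $\Phi_{0,0},\Phi_{1,1},\Phi_{2,2}$, and the real difficulty of the value $\omega_{2}=2\omega_{1}$ is that all further brackets of $D,E,\Phi_{1,2},\Phi_{2,1}$ supported in $\hbox{span}\{\Phi_{0},\Phi_{1},\Phi_{2}\}$ turn out to be linear combinations of $D$ and $E$ — which is exactly why the arguments of Propositions \ref{prop:LieAlg-GAS}--\ref{prop:LieAlg-GAS02} break down here. One has to climb out of this subspace and come back.

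The proposed chain is: (i) from $[E,a]$, after removing the (already available) multiple of $\Phi_{1,2}$, extract $G:=\Phi_{2,3}+2\omega_{1}^{2}\Phi_{0,1}\in\mathcal{L}(a,a^{+})$; (ii) then $K:=[a^{+},G]\overset{\eqref{LieAlg02c2}}{=}\Phi_{3,3}-2\omega_{1}\Phi_{2,2}+2\omega_{1}^{2}\Phi_{1,1}-2\omega_{1}^{3}\Phi_{0,0}$, which is purely diagonal; (iii) check the cancellation $K+2\omega_{1}E+2\omega_{1}^{2}D=\Phi_{3,3}$, so $\Phi_{3,3}\in\mathcal{L}(a,a^{+})$; (iv) compute $[a,\Phi_{3,3}]\overset{\eqref{LieAlg02c1}}{=}2\omega_{1}\Phi_{2,3}-\Phi_{3,4}$ and bracket it with $\Phi_{1,2}$: by \eqref{LieAlg02d} the $\Phi_{3,4}$ term disappears and a nonzero multiple of $\Phi_{1,3}$ survives, so $\Phi_{1,3}$ and $\Phi_{3,1}$ join $\mathcal{L}(a,a^{+})$; (v) finally $[\Phi_{1,3},\Phi_{3,1}]\overset{\eqref{LieAlg02d}}{=}\omega_{1}\omega_{2}\omega_{3}\,\Phi_{1,1}-\omega_{1}\Phi_{3,3}$, and as $\Phi_{3,3}$ is already in $\mathcal{L}(a,a^{+})$ this isolates $\Phi_{1,1}$; then $\Phi_{0,0}=\omega_{1}^{-1}(D-\Phi_{1,1})\in\mathcal{L}(a,a^{+})$, which together with Theorem \ref{LieAlg03} yields \eqref{LieAlg03c}.

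I expect the only genuine obstacle to be discovering this chain rather than executing it: most brackets one is naturally led to form are either diagonal and redundant or off-diagonal with strictly increasing indices, so it is not clear a priori that $\Phi_{0,0}$ is reachable at all. The key is the detour that builds $G$, then $K$, then $\Phi_{3,3}$ out of $D$ and $E$, followed by the descent that recovers $\Phi_{1,3}$ and then $\Phi_{1,1}$; once that is in place, the only arithmetic left to verify is the coefficient cancellation in step (iii) and the identification in step (v), both of which use $\omega_{2}=2\omega_{1}$ and $\omega_{n}=\omega_{2}$ for $n\ge 3$ in an essential way.
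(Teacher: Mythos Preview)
Your proof is correct; every bracket computation checks out, including the key cancellation $K+2\omega_{1}E+2\omega_{1}^{2}D=\Phi_{3,3}$ and the final extraction of $\Phi_{1,1}$ from $[\Phi_{1,3},\Phi_{3,1}]$.

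The paper reaches $\Phi_{0,0}$ by a genuinely different route. Instead of building the diagonal elements $E$, $K$ and spotting the telescoping combination that yields $\Phi_{3,3}$, it iterates $[a,\,\cdot\,]$ on $a^{+}$ and proves by induction that $[a,\,\cdot\,]^{n}(a^{+})=(-1)^{n-1}\big(-(n-2)\omega_{1}\Phi_{0,n-1}+\Phi_{1,n}\big)$ for all $n\ge 2$; it then brackets these elements with $\Phi_{2,1}$ and $\Phi_{1,2}$ to obtain first $\Phi_{2,n}$, then $\Phi_{1,n}$, then $\Phi_{0,n}$ for $n$ large, and finally descends to $\Phi_{0,1}$ and $\Phi_{0,0}$. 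Your approach is more economical: it stays entirely within indices $\le 4$ and needs only a handful of explicit brackets rather than an inductive family, at the cost of the one nontrivial discovery that $K+2\omega_{1}E+2\omega_{1}^{2}D$ collapses to a single matrix unit. The paper's approach, by contrast, gives a closed formula for all iterated commutators $[a,\,\cdot\,]^{n}(a^{+})$, which is of some independent interest but is more machinery than the statement strictly requires.
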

\begin{proof}
Under assumption \eqref{cond-om2=2om1}, \eqref{LieAlg04-GAS} becomes
\begin{equation}\label{LieAlg04-GAS-om2=2om1}
\mathcal{L}\big(a,a^{+}\big)\ni [a,a^{+}]=\partial\omega_{\Lambda}
= \omega_1\Phi_{0,0} + \Phi_{1,1}
\end{equation}
Therefore
$$
\mathcal{L}\big(a,a^{+}\big)\ni
[a, \ , \cdot \, ]^{2}(a^{+})
= [a, \omega_1\Phi_{0,0} + \Phi_{1,1}]
= \omega_1[a, \Phi_{0,0}] + [a, \Phi_{1,1}]
$$
$$
= \omega_1a\Phi_{0,0} - \omega_1\Phi_{0,0}a + a\Phi_{1,1} - \Phi_{1,1}a
= - \omega_1\Phi_{0,1} + \omega_1\Phi_{1,1} - \Phi_{1,2}
$$
\begin{equation}\label{LieAlg04-GAS-om2=2om1b}
= -\Phi_{1,2}
\end{equation}
Hence also
$$
\mathcal{L}\big(a,a^{+}\big)\ni
[a, \ , \cdot \, ]^{3}(a^{+}) = -[a, \Phi_{1,2}]
= - \omega_1\Phi_{0,2} + \Phi_{1,3}
$$
$$
= (-1)^{3-1} \left(- (3-2)\omega_1\Phi_{0,3-1} + \Phi_{1,3}\right)
$$
Suppose by induction that, for a given $n\in\mathbb{N}$ with $n\ge 2$, one has
\begin{equation}\label{LieAlg04-GAS-om2=2om1c}
\mathcal{L}\big(a,a^{+}\big)
\ni [a, \ , \cdot \, ]^{n}(a^{+})
= (-1)^{n-1} \left(- (n-2)\omega_1\Phi_{0,n-1} + \Phi_{1,n}\right)
\end{equation}
Then
$$
[a, \ , \cdot \, ]^{n+1}(a^{+})
= (-1)^{n-1} [a,\left(- (n-2)\omega_1\Phi_{0,n-1} + \Phi_{1,n}\right)]
$$
$$
= (-1)^{n-1}\left(- (n-2)\omega_1 [a,\Phi_{0,n-1}] + [a,\Phi_{1,n}]\right)
$$
$$
= (-1)^{n-1}\left( (n-2)\omega_1 \Phi_{0,n} + \omega_1\Phi_{0,n} - \Phi_{0,n+1}\right)
$$
$$
= (-1)^{n-1}\left( ((n+1)-2)\omega_1 \Phi_{0,n} - \Phi_{0,n+1}\right)
$$
$$
= (-1)^{n}\left(- ((n+1)-2)\omega_1 \Phi_{0,n} + \Phi_{0,n+1}\right)
$$
Thus by induction, \eqref{LieAlg04-GAS-om2=2om1c} holds for each $n\in\mathbb{N}$ with $n\ge 2$ and
implies that
\begin{equation}\label{LieAlg04-GAS-om2=2om1d}
- (n-2)\omega_1\Phi_{0,n-1} + \Phi_{1,n}\in\mathcal{L}\big(a,a^{+}\big)
\quad,\quad \forall n\ge 2
\end{equation}
Since $\mathcal{L}\big(a,a^{+}\big)$ is a $*$--Lie algrebra, \eqref{LieAlg04-GAS-om2=2om1b} implies
that $\Phi_{2,1}\in\mathcal{L}\big(a,a^{+}\big)$.
Therefore, for any $n\ge 4$,
$$
\mathcal{L}\big(a,a^{+}\big)
\overset{\eqref{LieAlg04-GAS-om2=2om1c}}{\ni} \
[\Phi_{2,1}, \left(- (n-2)\omega_1\Phi_{0,n-1} + \Phi_{1,n}\right)]
= [\Phi_{2,1}, \Phi_{1,n}]
= \omega_1 \Phi_{2,n}
$$
So
\begin{equation}\label{LieAlg04-GAS-om2=2om1e}
\Phi_{2,n}\in\mathcal{L}\big(a,a^{+}\big) \quad,\quad\forall n\ge 4
\end{equation}
Therefore also
$$
[\Phi_{1,2}, \Phi_{2,n}] = \Phi_{1,2}\Phi_{2,n} - \Phi_{2,n}\Phi_{1,2}
= \omega_2!\Phi_{1,n}\in\mathcal{L}\big(a,a^{+}\big)
\quad,\quad\forall n\ge 4
$$
and this implies that
\begin{equation}\label{LieAlg04-GAS-om2=2om1f}
\Phi_{1,n} \ , \ \Phi_{n,1}\in\mathcal{L}\big(a,a^{+}\big)
\quad,\quad\forall n\ge 4
\end{equation}
Then \eqref{LieAlg04-GAS-om2=2om1d} implies that, for any $n\ge 4$,
$- (n-2)\omega_1\Phi_{0,n-1} \in\mathcal{L}\big(a,a^{+}\big)$, so that
\begin{equation}\label{LieAlg04-GAS-om2=2om1g}
\Phi_{0,n} \in\mathcal{L}\big(a,a^{+}\big)  \quad,\quad\forall n\ge 3
\end{equation}
\eqref{LieAlg04-GAS-om2=2om1f} and \eqref{LieAlg04-GAS-om2=2om1g} imply that
$$
\mathcal{L}\big(a,a^{+}\big)\ni
[\Phi_{0,n}, \Phi_{n,1}]
= \Phi_{0,n}\Phi_{n,1} - \Phi_{n,1}\Phi_{0,n}
= \omega_n!\Phi_{0,1}
$$
so that
\begin{equation}\label{LieAlg04-GAS-om2=2om1h}
\Phi_{0,1} \ , \ \Phi_{1,0} \in\mathcal{L}\big(a,a^{+}\big)
\end{equation}
This implies that
\begin{equation}\label{LieAlg04-GAS-om2=2om1i}
\mathcal{L}\big(a,a^{+}\big)\ni
[\Phi_{0,1} \ , \ \Phi_{1,0} ] = \omega_1\Phi_{0,0} - \Phi_{1,1}
\end{equation}
Adding \eqref{LieAlg04-GAS-om2=2om1} and \eqref{LieAlg04-GAS-om2=2om1i}, one concludes that
\begin{equation}\label{LieAlg04-GAS-om2=2om1j}
\mathcal{L}\big(a,a^{+}\big)\ni 2 \omega_1\Phi_{0,0}
\end{equation}
which is equivalent to $\Phi_{0,0} \in\mathcal{L}\big(a,a^{+}\big)$. The thesis now follows from
Theorem \ref{LieAlg03}.
\end{proof}

\subsubsection{The Arc--sine case}

\hfill\break

\begin{corollary}\label{cor:LieAlg-arc-sine}
In the Arc--sine case, i.e. $\omega_n=\omega_2={\omega_1\over 2}$ for any $n\ge2$,
\eqref{LieAlg03c} holds.
\end{corollary}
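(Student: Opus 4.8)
The plan is to recognize that the Arc--sine case is nothing but the special instance $\omega_2 = \omega_1/2$ of the generalized Arc--sine situation already settled in Proposition \ref{prop:LieAlg-GAS}, and to check that it avoids the two degenerate ratios excluded there. The hypotheses of that proposition are $\omega_1 \ne \omega_2 = \omega_n$ for all $n \ge 2$, together with $\omega_2 \notin \{\tfrac{3}{2}\omega_1,\, 2\omega_1\}$. In the Arc--sine case the first condition holds by definition (with $\omega_2 = \omega_1/2 < \omega_1$), and $\omega_1/2$ is manifestly neither $\tfrac{3}{2}\omega_1$ nor $2\omega_1$. Hence Proposition \ref{prop:LieAlg-GAS} applies verbatim and yields \eqref{LieAlg03c}, so the corollary is a one--line appeal to that result.

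Should one prefer a self--contained derivation rather than a citation, I would simply re--run the bracket computations in the proof of Proposition \ref{prop:LieAlg-GAS} with $\omega_2$ replaced by $\omega_1/2$. From \eqref{LieAlg04-GAS} one gets $\partial\omega_{\Lambda} = \omega_1\Phi_{0,0} - \tfrac{1}{2}\Phi_{1,1} \in \mathcal{L}(a,a^{+})$; then \eqref{LieAlg04-GAS1} specializes to $[a,\partial\omega_{\Lambda}] = -\tfrac{3}{2}\omega_1\Phi_{0,1} + \tfrac{1}{2}\Phi_{1,2}$; and the iterated bracket \eqref{LieAlg04-GAS3} produces the scalar $(\omega_2-2\omega_1)(2\omega_2-3\omega_1)/(\omega_2-\omega_1)$, which specializes to $-6\omega_1 \ne 0$, so that $\Phi_{0,1} \in \mathcal{L}(a,a^{+})$ and, by $*$--closure, also $\Phi_{1,0}$. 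Feeding these into \eqref{LieAlg04b1a} gives $\omega_2\Phi_{0,0} \in \mathcal{L}(a,a^{+})$, hence $\mathcal{L}(a,a^{+}) = \mathcal{L}(a,a^{+},\Phi_{0,0})$, and Theorem \ref{LieAlg03} closes the argument.

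There is essentially no genuine obstacle here. The only thing that needs verifying is that the ratio $\omega_2/\omega_1 = 1/2$ does not collide with the exceptional values $\{3/2,\, 2\}$ for which the leading coefficient in \eqref{LieAlg04-GAS3} vanishes — precisely the reason Propositions \ref{prop:LieAlg-GAS02} and \ref{prop:LieAlg-GAS-om2=2om1} had to be handled separately. Since $1/2$ avoids both, the generic computation goes through unobstructed, and the corollary follows immediately.
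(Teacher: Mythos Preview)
Your proof is correct and takes exactly the same approach as the paper: the paper's proof is the single sentence ``In this case, the conditions of Proposition \ref{prop:LieAlg-GAS} are satisfied,'' and you verify precisely that, checking that $\omega_2/\omega_1 = 1/2$ avoids the excluded values $3/2$ and $2$. Your additional self--contained rerun of the brackets is accurate but not needed.
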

\begin{proof}
In this case, the conditions of Proposition \ref{prop:LieAlg-GAS} are satisfied.
\end{proof}

\appendix
\section{Vacuum distribution of $X_{As}=a_{As}+a_{As}^{+}$}\label{sec:Vac-distr-arcsin}

Let $\xi$ be a symmetric one dimensional random variable  with the quantum decomposition
$a+a^{+} $ and $\Phi$ be the vacuum vector in $\Gamma_\omega$. According to \cite{[AcLu16d-q-mom-prob]} one has, for any $k\in\mathbb{N}$,
\begin{align}
\mathbf{E}\left(\xi^{k}\right)     =\left\langle \Phi,\left(a+a^{+}\right)^{k}\Phi\right\rangle
&=\sum_{\varepsilon\in\left\{
-1,1\right\}_{+}^{k}}\left\langle \Phi,a^{\varepsilon\left(1\right)
}\ldots a^{\varepsilon\left(k\right)}\Phi\right\rangle \label{01}
\\& =\begin{cases}
	\sum_{\varepsilon\in\left\{-1,1\right\}_{+}^{2n}}\prod_{h=1}^{n}
	\omega_{2h-l_{h}^{\left(\varepsilon\right)}},& {\rm if\ k=2n}\\
	0,& {\rm otherwise}
\end{cases}\nonumber
\end{align}
where%
$$
\left\{-1,1\right\}_{+}^{2n}:=\left\{\varepsilon\in\left\{
-1,1\right\}^{2n}:\sum_{h=1}^{2n}\varepsilon\left(h\right)  =0\hbox{ and
}\sum_{h=1}^{m}\varepsilon\left(h\right)  \leq0\hbox{ for any }%
m\leq2n\right\}
$$%
$$
\left\{l_{h}^{\left(\varepsilon\right)}\right\}_{h=1}^{n}%
:=\varepsilon^{-1}\left(\left\{-1\right\}  \right)  \hbox{ with the order
}l_{1}^{\left(\varepsilon\right)}<\ldots<l_{n}^{\left(\varepsilon
\right)}%
$$
\begin{itemize}
\item In the arc--sine case, for any $\varepsilon\in\left\{-1,1\right\}_{+}^{2n}$%
\begin{equation}\prod_{h=1}^{n}
\omega_{2h-l_{h}^{\left(\varepsilon\right)}}=\omega_{1}^{m\left(
\varepsilon\right)}\omega_{2}^{n-m\left(\varepsilon\right)}\label{02}%
\end{equation}
with%
$$
m\left(\varepsilon\right) :=\left\vert \left\{m\in\left\{1,\ldots
,n\right\} :\sum_{h=1}^{2m}\varepsilon\left(h\right)  =0\right\}
\right\vert
$$
It is clear that, for any $n\in\mathbb{N}^{\ast}$ and
$\varepsilon\in\left\{-1,1\right\}_{+}^{2n}$

\begin{itemize}
\item $1\leq m\left(\varepsilon\right)  \leq n;$

\item $m\left(\varepsilon\right)  =n$ if and only if $\varepsilon\left(2h\right)
=-1$ and $\varepsilon\left(2h-1\right)  =-1$ for all $h\in\left\{
1,\ldots,n\right\}  ;$

\item $m\left(\varepsilon\right)  =1$ if and only if the unique non--crossing pair
partition $\left\{\left(l_{h}^{\left(\varepsilon\right)}%
,r_{h}^{\left(\varepsilon\right)}\right)  \right\}_{h=1}^{n}$ verifies
$r_{1}^{\left(\varepsilon\right)}=2n.$
\end{itemize}

We are interested now in the calculation of
\begin{equation}
\sum_{\varepsilon\in\left\{
	-1,1\right\}_{+}^{2n}}\prod_{h=1}^{n}\omega_{2h-l_{h}^{\left(
		\varepsilon\right)}}=\sum_{\varepsilon\in\left\{ -1,1\right\}_{+}^{2n}}\omega_{1}^{m\left(
\varepsilon\right)}\omega_{2}^{n-m\left(\varepsilon\right)}
\label{as21914-01a}%
\end{equation}

\begin{theorem}\label{as21914-01}
	Let, for any $m\in\left\{1,\ldots,n\right\}  ,$
$$
\left\{-1,1\right\}_{+,m}^{2n}
:=\left\{\varepsilon\in\left\{-1,1\right\}_{+}^{2n}:m\left(\varepsilon\right)  =m\right\}
$$
Then

1) $\left\{-1,1\right\}_{+,1}^{2n},\ldots,\left\{-1,1\right\}_{+,n}^{2n}$
are pairwise disjoint and
$\left\{-1,1\right\}_{+}^{2n}=\bigcup_{m=1}^{n}\left\{ -1,1\right\}_{+,m}^{2n}$.

2) $\left\vert \left\{-1,1\right\}_{+,1}^{2n}\right\vert =C_{n-1}$ and for
any $m\in\left\{2,\ldots,n\right\}  ,$
\begin{equation}\label{as21914-01b}
\left\vert \left\{-1,1\right\}_{+,m}^{2n}\right\vert
=\sum_{\substack{k_{1},\ldots,k_{m-1}\geq1\\k_{1}+\ldots+k_{m-1}\leq n-1}}
C_{n-k_{1}-\ldots-k_{m-1}-1}\prod_{h=1}^{m-1}C_{k_{h}-1}
\end{equation}
where we recall that
\begin{align*}
C_{k}=&	\left\vert \left\{
\hbox{non--crossing pair partitions on }\left\{1,2,\ldots,2k\right\}
\right\}  \right\vert\\
=&{\frac {1}{k+1}}{2k \choose k}={\frac {(2k)!}{(k+1)!k!}},\quad k\geqslant 0
\end{align*}
 is the $k-th$ Catalan number  and $C_{0}:=1.$

3) the expression (\ref{as21914-01a}) is equal to
\begin{equation}\label{as21914-01c}
\omega_{1}\omega_{2}^{n-1}C_{n-1}
+\sum_{m=2}^{n}\omega_{1}^{m}\omega_{2}^{n-m}\sum_{\substack{k_{1},\ldots,k_{m-1}\geq1\\
k_{1}+\ldots+k_{m-1}\leq n-1}}C_{n-k_{1}-\ldots-k_{m-1}-1}\prod_{h=1}^{m-1}C_{k_{h}-1}
\end{equation}
\end{theorem}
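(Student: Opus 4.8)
The plan is to establish the three assertions in sequence, with part 1) essentially trivial, part 2) following from a bijective/recursive analysis of non-crossing pair partitions, and part 3) being a direct bookkeeping consequence of 1), 2) and the factorization \eqref{02}.

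For part 1), I would observe that every $\varepsilon\in\{-1,1\}_+^{2n}$ has a well-defined value $m(\varepsilon)\in\{1,\ldots,n\}$ (the bounds $1\le m(\varepsilon)\le n$ are already noted in the excerpt), so the sets $\{-1,1\}_{+,m}^{2n}$ for $m=1,\ldots,n$ partition $\{-1,1\}_+^{2n}$; disjointness is immediate since $m(\varepsilon)$ is single-valued, and the union is all of $\{-1,1\}_+^{2n}$ since $m(\varepsilon)$ always lies in the stated range. The only thing to verify carefully is that $m(\varepsilon)\ge 1$, i.e. that the partial-sum path returns to $0$ at least once at an even index before the end — but this is automatic because it returns to $0$ at index $2n$.

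For part 2), the key is to decode $m(\varepsilon)$ in terms of the unique non-crossing pair partition $\{(l_h^{(\varepsilon)},r_h^{(\varepsilon)})\}_{h=1}^n$ determined by $\varepsilon$. The set $\{m: \sum_{h=1}^{2m}\varepsilon(h)=0\}$ records exactly the even positions at which the Dyck-type path touches zero; equivalently, it counts the "irreducible blocks" of the non-crossing pair partition, i.e. the maximal sub-intervals $[2k_0+1,2k_0+2k_j]$ on which the path stays strictly positive between consecutive zero-returns. So $m(\varepsilon)=m$ means $\varepsilon$ decomposes into exactly $m$ consecutive irreducible (primitive) non-crossing pairings of sizes $2k_1,\ldots,2k_m$ with $k_1+\cdots+k_m=n$ and each $k_j\ge 1$. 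For the count: the number of primitive (irreducible) non-crossing pairings on $2k$ points is $C_{k-1}$ — one factors out the outermost pair $(1,2k)$ and is left with an arbitrary non-crossing pairing on the $2k-2$ interior points, giving $C_{k-1}$; this is exactly why $m=1$ contributes $C_{n-1}$, recovering the third bullet in the excerpt. Then for general $m$, summing over compositions $(k_1,\ldots,k_m)$ of $n$ into $m$ positive parts and multiplying the primitive counts gives $\sum C_{k_1-1}\cdots C_{k_m-1}$; writing $k_m = n-k_1-\cdots-k_{m-1}$ and letting $k_1,\ldots,k_{m-1}$ range freely over positive integers with $k_1+\cdots+k_{m-1}\le n-1$ yields precisely \eqref{as21914-01b}. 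The main obstacle — and the step deserving the most care — is the clean justification that "number of times the partial-sum path hits zero at an even index" equals "number of irreducible components of the non-crossing pair partition," together with the identification of primitive blocks with the $C_{k-1}$ count; once this combinatorial dictionary is in place the formula drops out.

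For part 3), I would combine everything: by part 1) the sum \eqref{as21914-01a} splits as $\sum_{m=1}^n \sum_{\varepsilon\in\{-1,1\}_{+,m}^{2n}}\prod_{h=1}^n \omega_{2h-l_h^{(\varepsilon)}}$, and by \eqref{02} every $\varepsilon$ with $m(\varepsilon)=m$ contributes the same weight $\omega_1^m\omega_2^{n-m}$, so each inner sum is $|\{-1,1\}_{+,m}^{2n}|\cdot \omega_1^m\omega_2^{n-m}$. Substituting the cardinalities from part 2) — the value $C_{n-1}$ for $m=1$ and \eqref{as21914-01b} for $m\ge 2$ — gives exactly \eqref{as21914-01c}, completing the proof.
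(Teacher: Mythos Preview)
Your proposal is correct and follows essentially the same combinatorial idea as the paper: decompose $\varepsilon$ according to the positions where the partial-sum path returns to zero, so that $m(\varepsilon)=m$ corresponds exactly to a decomposition into $m$ primitive (irreducible) blocks of sizes $2k_1,\ldots,2k_m$, and each primitive block of size $2k$ is counted by $C_{k-1}$ via stripping off the outermost pair. The only cosmetic differences are that the paper phrases the argument recursively (peeling off the first block and inducting on $m$) and wraps the counting in the language of vacuum expectations on the semi-circle Fock space, whereas you carry out the full decomposition in one shot and argue purely combinatorially; your route is slightly more direct but the content is the same.
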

\begin{proof}
Statement 1) follows from the definition of $\left\{-1,1\right\}_{+,m}^{2n}$.\\
In particular, one finds
\begin{align}
&\sum_{\varepsilon\in\left\{-1,1\right\}_{+}^{2n}}\omega_{1}^{m\left(
\varepsilon\right)}\omega_{2}^{n-m\left(\varepsilon\right)}
=\sum_{m=1}^{n}\sum_{\varepsilon\in\left\{-1,1\right\}_{+,m}^{2n}}%
\omega_{1}^{m\left(\varepsilon\right)}\omega_{2}^{n-m\left(
\varepsilon\right)}\\
=& \sum_{m=1}^{n}\omega_{1}^{m}\omega_{2}^{n-m}%
\sum_{\varepsilon\in\left\{-1,1\right\}_{+,m}^{2n}}1\label{as21914-01d} =\sum_{m=1}^{n}\omega_{1}^{m}\omega_{2}^{n-m}\left\vert \left\{
-1,1\right\}_{+,m}^{2n}\right\vert \nonumber
\end{align}
So, if statement 2) is proved, one gets the (\ref{as21914-01c}). Now we turn to see the affirmation 2). In order to prove statement 2), one considers \textbf{semi--circle} 1MIFS $\Gamma_{\left\{1\right\}_{n}}\left(\mathbb{C}\right) .$
On this space, denoting $b$ and $b^{+}$ the annihilation--creation operators, one has
$$
\left\langle \Phi,b^{\varepsilon\left(1\right)}\ldots b^{\varepsilon
\left(2n\right)}\Phi\right\rangle =1,\ \ \forall\varepsilon\in\left\{
-1,1\right\}_{+}^{2n}%
$$

First of all, we see $\left\vert \left\{-1,1\right\}_{+,1}^{2n}%
\right\vert ,$ i.e.%
$$
\sum_{\varepsilon\in\left\{-1,1\right\}_{+,1}^{2n}}\left\langle
\Phi,b^{\varepsilon\left(1\right)}\ldots b^{\varepsilon\left(2n\right)
}\Phi\right\rangle
$$
As said before Theorem \ref{as21914-01}, $\varepsilon\in\left\{-1,1\right\}_{+,1}^{2n}$
{ if and only if} $r_{1}^{\left(\varepsilon\right)}=2n.$
For any  $\varepsilon\in\left\{-1,1\right\}_{+,1}^{2n},$ one defines $\varepsilon^{\prime}:=\left\{
1,2,\ldots,2n-2\right\}  \longmapsto\left\{-1,1\right\}  $ as%
\begin{equation}\label{as21914-01e}
\varepsilon^{\prime}\left(k\right) :=\varepsilon\left(k+1\right)
,\ \ \forall k\in\left\{1,2,\ldots,2n-2\right\}
\end{equation}
then, as $\varepsilon$ runs over $\left\{-1,1\right\}_{+,1}^{2n},$
$\varepsilon^{\prime}$ runs over $\left\{-1,1\right\}_{+}^{2\left(n-1\right)}$ and
\begin{equation}
\left\langle \Phi,b^{\varepsilon\left(1\right)}\ldots b^{\varepsilon
\left(2n\right)}\Phi\right\rangle =\left\langle \Phi,bb^{\varepsilon
^{\prime}\left(1\right)}\ldots b^{\varepsilon^{\prime}\left(2n-2\right)
}b^{+}\Phi\right\rangle =\left\langle \Phi,b^{\varepsilon^{\prime}\left(
1\right)}\ldots b^{\varepsilon^{\prime}\left(2n-2\right)}\Phi
\right\rangle \label{as21914-01e0}%
\end{equation}
and the unique non--crossing $\left\{\left(l_{h}^{\left(\varepsilon
^{\prime}\right)},r_{h}^{\left(\varepsilon^{\prime}\right)}\right)
\right\}_{h=1}^{n-1}$ determined by $\varepsilon^{\prime}$ is in fact
$\left(l_{h}^{\left(\varepsilon^{\prime}\right)},r_{h}^{\left(
\varepsilon^{\prime}\right)}\right)  =\left(l_{h+1}^{\left(
\varepsilon\right)}-1,r_{h+1}^{\left(\varepsilon\right)}-1\right)  $ for
any $h\in\left\{1,\ldots,n-1\right\}  $. So%
\begin{align}
&\left\vert \left\{-1,1\right\}_{+,1}^{2n}\right\vert  =\sum
_{\varepsilon\in\left\{-1,1\right\}_{+,1}^{2n}}\left\langle \Phi
,b^{\varepsilon\left(1\right)}\ldots b^{\varepsilon\left(2n\right)
}\Phi\right\rangle \label{as21914-01e1}\\
=&\sum_{\varepsilon'\in\left\{-1,1\right\}_{+}^{2\left(n-1\right)}}
\left\langle \Phi,b^{\varepsilon^{\prime}\left(1\right)
}\ldots b^{\varepsilon^{\prime}\left(2n-2\right)}\Phi\right\rangle=\sum_{\varepsilon\in\left\{ -1,1\right\}_{+}^{2\left(n-1\right)}%
}1=\left\vert \left\{-1,1\right\}_{+}^{2\left(n-1\right)}\right\vert
=C_{n-1}\nonumber
\end{align}

For any $m\geq2$ and $\varepsilon\in\left\{-1,1\right\}_{+,m}^{2n},$ let
$k_{1}:=\min\left\{k:\sum_{h=1}^{2k}\varepsilon\left(h\right)  =0\right\}
$ and%
\begin{align}
\varepsilon^{\prime}\left(k\right)    &:=\varepsilon\left(k+1\right)
,\ \ \forall k\in\left\{1,2,\ldots,2k_{1}-2\right\}  ;\label{as21914-01e3}\\
\varepsilon^{\prime\prime}\left(k\right)    &:=\varepsilon\left(
2k_{1}+k\right)  ,\ \ \forall k\in\left\{1,2,\ldots,2n-2k_{1}\right\}
\nonumber
\end{align}
then

\begin{itemize}
\item $k_{1}\in\left\{1,\ldots,n-m\right\}  ;$

\item as $\varepsilon$ running over $\left\{-1,1\right\}_{+,m}^{2n},$
$\varepsilon^{\prime}$ runs over $\left\{-1,1\right\}_{+}^{2\left(
k_{1}-1\right)};$

\item as $\varepsilon$ running over $\left\{-1,1\right\}_{+,m}^{2n},$
$\varepsilon^{\prime\prime}$ runs over $\left\{-1,1\right\}_{+,m-1}%
^{2\left(n-k_{1}\right)};$

\item for any $\varepsilon\in\left\{-1,1\right\}_{+,m}^{2n},$ one has,
since $\sum_{h=1}^{2k_{1}}\varepsilon\left(h\right)  =0$
\begin{align}
&\left\langle \Phi,b^{\varepsilon\left(1\right)}\ldots b^{\varepsilon
\left(2n\right)}\Phi\right\rangle   =\left\langle \Phi,b^{\varepsilon
\left(1\right)}\ldots b^{\varepsilon\left(2k_{1}\right)}%
\Phi\right\rangle \left\langle \Phi,b^{\varepsilon\left(2k_{1}+1\right)
}\ldots b^{\varepsilon\left(2n\right)}\Phi\right\rangle\notag\\
=& \left\langle \Phi,b^{\varepsilon^{\prime}\left(1\right)}\ldots
b^{\varepsilon^{\prime}\left(2k_{1}-2\right)}\Phi\right\rangle
\left\langle \Phi,b^{\varepsilon^{\prime\prime}\left(1\right)}\ldots
b^{\varepsilon^{\prime\prime}\left(2n-2k_{1}\right)}\Phi\right\rangle
\label{as21914-01e4}
\end{align}

\end{itemize}

So
\begin{align}
\left\vert \left\{-1,1\right\}_{+,m}^{2n}\right\vert  & =\sum
_{\varepsilon\in\left\{-1,1\right\}_{+,m}^{2n}}\left\langle \Phi
,b^{\varepsilon\left(1\right)}\ldots b^{\varepsilon\left(2n\right)
}\Phi\right\rangle =\sum_{\varepsilon\in\left\{-1,1\right\}_{+,m}^{2k_{1}%
}}\left\langle \Phi,b^{\varepsilon\left(1\right)}\ldots b^{\varepsilon
\left(2n\right)}\Phi\right\rangle \label{as21914-01f}\\
& =\sum_{1\leq k_{1}\leq n-m}\left\vert \left\{-1,1\right\}_{+}^{2\left(
n-k_{1}\right)}\right\vert \left\vert \left\{-1,1\right\}_{+,m-1}%
^{2\left(n-k_{1}\right)}\right\vert =\sum_{1\leq k_{1}\leq n-m}C_{k_{1}%
-1}\left\vert \left\{-1,1\right\}_{+,m-1}^{2\left(n-k_{1}\right)
}\right\vert \nonumber
\end{align}
By combining this and  the induction argument, one gets (\ref{as21914-01b}).

\end{proof}

Now, using the first identity in Lemma \ref{Catalan} and the fact that $\omega_{1}=2\omega_{n}>0$ for any $n\geq2$, the expression \eqref{as21914-01c} reduces to
\begin{equation*} \omega_2^n\sum_{m=1}^n\frac{m2^m}{2n-m}\binom{2n-m}{n-m}=\left(\frac{\omega_1}{2}\right)^n\binom{2n}{n}
\end{equation*}
which is the $2n-th$ moment of the arc--sine distribution on the interval $\left( -\sqrt{2\omega_{1}},\sqrt{2\omega_{1}}\right)$.
\item In the semi--circle case ($\omega_{1}=\omega_{j}$ for all $j\geq
1$), for any $\varepsilon\in\left\{-1,1\right\}_{+}^{2n}$
$$\prod\limits_{h=1}^{n}
\omega_{2h-l_{h}^{\left(\varepsilon\right)}}=\omega_{1}^{n}
$$
Thus,
\begin{align*}
	\sum_{\varepsilon\in\left\{
		-1,1\right\}_{+}^{2n}}\prod_{h=1}^{n}\omega_{2h-l_{h}^{\left(
			\varepsilon\right)}}
		=\sum_{\varepsilon\in\left\{-1,1\right\}_{+}^{2n}}\omega_{1}^{n}
		&=\omega_{1}^{n}\left\vert \left\{-1,1\right\}_{+}^{2n}\right\vert
		\\&=\omega_{1}^{n}\sum_{m=1}^n\left\vert \left\{-1,1\right\}_{+,m}^{2n}\right\vert
		\\&=\omega_1^n\sum_{m=1}^n\frac{m}{2n-m}\binom{2n-m}{n-m}=\omega_{1}^{n}C_n
\end{align*}
which is nothing else than the $2n-th$ moment of the semi--circle distribution on the interval $\left(-2\sqrt{\omega_{1}},2\sqrt{\omega_{1}}\right)$.
\end{itemize}

\section{Neumann-type series and integral representations of $e^{itP}$}

The objective of this appendix is to formulate the action of the evolution $e^{itP}$, associated to the semi--circle and arc--sine cases, in terms of Neumann series of Bessel functions. In addition, an integral representation of this time evolution operator will be furnished.

The Neumann series of Bessel functions, as defined in \cite[Chapter XVI]{Watson}, is given by the equation
\begin{equation*}
	\mathcal{R}_\nu(z)=\sum_{m\ge 0}\alpha_mJ_{m+\nu}(z),\quad z\in\mathbb{C}	
\end{equation*}
where $\nu$ and $(\alpha_n)_{n\in\mathbb{N}}$ are constants.
\subsection{Semi--circle case}
Recall from \cite{[AcHamLu22a]} that
\begin{align}\label{Neumann1}
	e^{itP_{Sc}}\Phi_{h}(x) =e^{-tH_{\mu}}\Phi_{h}(x) =&\sum_{m\ge 0}(-1)^m\sum_{n=0}^{h}\left(J_{m+n}(2t)+J_{m+n+2}(2t)\right)\Phi_{h+m-n}(x)\nonumber
	\\=&\sum_{n=0}^{h+2}\sum_{m\ge 0}(-1)^m\alpha_m^{(n)}(x,h)J_{m+n}(2t)
\end{align}
where
\begin{align*}
	\alpha_m^{(n)}(x,h)=
	\begin{cases}
	\Phi_{m+h-n}(x)	,	& n\in\{0,1\}\\
	x\,\Phi_{m+h+1-n}(x),  &n\in\{2,3,\dots,h\} \\
	\Phi_{m+h+2-n}(x),	& n\in\{h+1,h+2\}
	\end{cases}
\end{align*}
Thus, \eqref{Neumann1} rewrites
\begin{align*}
e^{-tH_{\mu}}\Phi_{h}(x) =\sum_{n=0}^{h+2}\mathcal{R}_n(t,x,h)
\end{align*}
where
\begin{align*}
\mathcal{R}_n(t,x,h):=\sum_{m\ge 0}(-1)^m\alpha_m^{(n)}(x,h)J_{m+n}(2t)
\end{align*}
To streamline the notation, we will omit the variables $x$ and $h$ in $\mathcal{R}_n(t,x,h)$ and $\alpha_m^{(n)}(x,h)$. Instead, they will be represented simply as $\mathcal{R}_n(t)$ and $\alpha_m^{(n)}$, as long as there is no ambiguity.
As stated in  \cite{PS09}, for all $t$ in the interval
$$\left(0,\min\left\{1,\left(e\limsup_{m\rightarrow\infty}\frac{\sqrt[m]{|\alpha_m^{(n)}|}}{m}\right)^{-1}\right\}\right)=(0,1),$$
the Neumann-type series $\mathcal{R}_n(t)$ can be rewritten as:

\begin{equation*}
	\mathcal{R}_n=\sqrt{\frac{t}{\pi}}\int_1^{\infty}K_n(t,u)A_n(u)du
\end{equation*}
where the kernel $K_n(t,u)$ is given by
\begin{align}\label{kernel}
	K_n(t,u)&=-2\int_0^1\cos(2st)\left(t(1-s^2)\right)^{n+u-1/2}\ln\left(t(1-s^2)\right)ds
	\\&=-\sqrt{\frac{\pi}{t}}\frac{\partial}{\partial u}\left(\Gamma(n+u+\frac{1}{2})J_{n+u}(2t)\right)\nonumber
\end{align}
and its density is given by
\begin{align*}
	A_n(u)=\sum_{j=1}^{[u]}\frac{\alpha_j^{(n)}}{\Gamma(n+j+\frac{1}{2})}=\int_0^{[u]} \mathcal{D}_v\left(\frac{\alpha_v^{(n)}(t,x,h)}{\Gamma(n+v+\frac{1}{2})}\right)dv.
\end{align*}
Here the operator $\mathcal{D}_v$ is defined as:

\begin{equation*}
	\mathcal{D}_v=1+\{v\}\frac{d}{dx},
\end{equation*}

and for any real number $x$, $[x]$ represents the integer part of $x$, while $\{x\}$ represents its fractional part.
Consequently, we have
\begin{align*}
	e^{-tH_{\mu}}\Phi_{h}(x) =&\sqrt{\frac{t}{\pi}}\sum_{n=0}^{h+2}\int_1^{\infty}K_n(t,u)A_n(u)du
	\\=&-2\sqrt{\frac{t}{\pi}}\sum_{n=0}^{h+2}\int_1^{\infty}\int_0^1\cos(2st)\left(t(1-s^2)\right)^{n+u-1/2}\ln\left(t(1-s^2)\right)\sum_{j=1}^{[u]}\frac{\alpha_j^{(n)}}{\Gamma(n+j+\frac{1}{2})}dsdu
\end{align*}
Using the fact that,
\begin{equation*}
	\left|\cos(2st)(1-s^2)\right|\le 1,\quad s,t\in(0,1)
\end{equation*}
we have
\begin{multline*}
	\int_1^{\infty}\int_0^1\left|\cos(2st)\left(t(1-s^2)\right)^{n+u-1/2}\ln\left(t(1-s^2)\right)\sum_{j=1}^{[u]}\frac{\alpha_j^{(n)}}{\Gamma(n+j+\frac{1}{2})}\right|dsdu\\
	\le \int_0^1\left(t(1-s^2)\right)^{n-1/2}\left|\ln\left(t(1-s^2)\right)\right|ds\int_1^{\infty}\sum_{j=1}^{[u]}\frac{\left|\alpha_j^{(n)}\right|}{\Gamma(n+j+\frac{1}{2})}du
\end{multline*}
where
\begin{equation*}
	\int_0^1\left(t(1-s^2)\right)^{n-1/2}\left|\ln\left(t(1-s^2)\right)\right|ds<\infty
\end{equation*}
and
\begin{equation*}
	\int_1^{\infty}\sum_{j=1}^{[u]}\frac{\left|\alpha_j^{(n)}\right|}{\Gamma(n+j+\frac{1}{2})}du=\sum_{j=1}^{\infty}\frac{\left|\alpha_j^{(n)}\right|}{\Gamma(n+j+\frac{1}{2})}<\infty
\end{equation*}
Then, Fubini's theorem implies that

\begin{align*}
	e^{-tH_{\mu}}\Phi_{h}(x) =&-2\sqrt{\frac{t}{\pi}}\int_0^1\cos(2st)\ln\left(t(1-s^2)\right)\tilde{A}(s)ds\\
	=&-\sqrt{\frac{t}{\pi}}\int_{-1}^1\cos(2st)\ln\left(t(1-s^2)\right)\tilde{A}(s)ds
\end{align*}
where
\begin{align*}
	\tilde{A}(s)=\sum_{n=0}^{h+2}\int_1^{\infty}\sum_{j=1}^{[u]}\frac{\alpha_j^{(n)}\left(t(1-s^2)\right)^{n+u-1/2}}{\Gamma(n+j+\frac{1}{2})}du.
\end{align*}
Or equivalently, by performing the variable change $s\mapsto\sin\varphi$:
\begin{align*}
	e^{-tH_{\mu}}\Phi_{h}(x) =&-\sqrt{\frac{t}{\pi}}\int_{-\pi/2}^{\pi/2}\cos(2t\sin\varphi)\ln\left(t\cos^2\varphi\right)\tilde{A}(\varphi)d\varphi
\end{align*}
where
\begin{align*}
	\tilde{A}(\varphi)=\sum_{n=0}^{h+2}\int_1^{\infty}\sum_{j=1}^{[u]}\frac{\alpha_j^{(n)}\left(t\cos^2\varphi\right)^{n+u-1/2}}{\Gamma(n+j+\frac{1}{2})}du.
\end{align*}
\subsection{Arc--sine case}
Performing the index changes $m\mapsto m-n$ and $n\mapsto n+1$ respectively in the first sum of \eqref{momentum}, we can reformulate the latter as follows:
\begin{align*}
		e^{itP_{As}}T_{h}(x) =&\sum_{n=1}^{h}\sum_{m\ge 0}\frac{\left(-1\right)^{m+n-1}}{t} (m+n)T_{h+m-n+1}(x)J_{m+n}(2t)
		+2\sum_{m\ge0} \left(-1\right)^{m}T_{m}(x)\, J_{m+h}(2t)
		\\:=& \sum_{n=0}^h\mathcal{R}_n(t,x,h)
\end{align*}
where
\begin{align*}
\mathcal{R}_n(t,x,h):=\sum_{m\ge 0}\beta_m^{(n)}(t,x,h)J_{m+n}(2t)
\end{align*}
and
\begin{align*}
	\beta_m^{(n)}(t,x,h)=
	\begin{cases}
		\frac{\left(-1\right)^{m+n-1}}{t}(m+n)T_{h+m-n+1}(x)\chi_{\{h\ge2\}},&1\le n\le h-1\\
	2\left(-1\right)^{m}  T_{m}(x)+\frac{\left(-1\right)^{m+h-1}}{t}(m+h)T_{m+1}(x)\chi_{\{h\ge1\}},	&0\le n=h\\
	0,& otherwise
	\end{cases}
\end{align*}

Then, as outlined in \cite{PS09}, for all $t$ in the interval
$$\left(0,\min\left\{1,\left(e\limsup_{m\rightarrow\infty}\frac{\sqrt[m]{|\beta_m^{(n)}|}}{m}\right)^{-1}\right\}\right)=(0,1),$$
the Neumann-type series $\mathcal{R}_n$ can be rewritten as:

\begin{equation*}
	\mathcal{R}_n=\sqrt{\frac{t}{\pi}}\int_1^{\infty}K_n(t,u)B_n(u)du
\end{equation*}
where the kernel $K_n(t,u)$ is as in \eqref{kernel}
and
\begin{align*}
	B_n(u)=\sum_{j=1}^{[u]}\frac{\beta_j^{(n)}}{\Gamma(n+j+\frac{1}{2})}
\end{align*}
Consequently, we have
\begin{align*}
	e^{itP_{As}}T_{h}(x) =&\sqrt{\frac{t}{\pi}}\sum_{n=0}^{h}\int_1^{\infty}K_n(t,u)B_n(u)du
	\\=&-2\sqrt{\frac{t}{\pi}}\sum_{n=0}^{h}\int_1^{\infty}\int_0^1\cos(2st)\left(t(1-s^2)\right)^{n+u-1/2}\ln\left(t(1-s^2)\right)\sum_{j=1}^{[u]}\frac{\beta_j^{(n)}}{\Gamma(n+j+\frac{1}{2})}dsdu
\end{align*}
and similarly to the previous case, we can express this equation in integral form as follows
\begin{align*}
	e^{itP_{As}}T_{h}(x) =&-\sqrt{\frac{t}{\pi}}\int_{-\pi/2}^{\pi/2}\cos(2t\sin\varphi)
\ln\left(t\cos^2\varphi\right)\tilde{B}(\varphi)d\varphi
\end{align*}
where its density is given in terms of the coefficients of the Neumann series as follows
\begin{align*}
	\tilde{B}(\varphi)
=\sum_{n=0}^{h}\int_1^{\infty}\sum_{j=1}^{[u]}
\frac{\beta_j^{(n)}\left(t\cos^2\varphi\right)^{n+u-1/2}}{\Gamma(n+j+\frac{1}{2})}du.
\end{align*}


\begin{thebibliography}{99}

\bibitem{[AccBarRha17Info-Compl]}
L. Accardi, A. Barhoumi, M. Rhaima.
An Information Complexity index for Probability Measures on $\mathbb{R}$ with all moments,\\
\textit{Infin. Dimens. Anal. Quantum Probab. Relat. Top. (IDA-QP)} 19 (03) (2016) 1650015-1-30


	\bibitem{[AcHamLu22a]}
\textsc{L. Accardi, T. Hamdi and Y.G. Lu}.
The quantum mechanics canonically associated to free probability I: Free momentum and associated kinetic energy.
\textit{Open Sys. Information Dyn.} {\bf 29}, 2250017 (2022).
	
	\bibitem{[AcHamLu22b]}
\textsc{L. Accardi, T. Hamdi and Y.G. Lu}.
The quantum mechanics canonically associated to free probability II: The normal and inverse normal order problem.
\textit{Open Sys. Information Dyn.} {\bf 29}, 2250018 (2022).
	
\bibitem{[AcLu16d-q-mom-prob]}
L. Accardi and Y. G. Lu.
The quantum moment problem for a classical random variable and a classification of interacting
Fock spaces,
\textit{Infin. Dimens. Anal. Quantum Probab. Relat. Top.} (IDA-QP)  25 (01) 2250003 (2022) 1--41

\bibitem{[AcLu20-Nigeria21]}
L. Accardi and Y. G. Lu.
From Classical Probability and Orthogonal Polynomials to Natural Extensions of Quantum
Theory,
\textit{Transactions of the Nigerian Association of Mathematical Physics (NAMP)}
16 (July--Sept. Issue) (2021) 15--42, Special Edition in honor of Professor Olabisi Ugbebor

	
	\bibitem{[APS96]}
\textsc{K. Astala, L. Paivarinta and E. Saksman}.
	The finite Hilbert transform in weighted spaces.
	\textit{Proceedings of the Royal Society of Edinburgh: Mathematics} {\bf126}, (1996) 1157--1167.



\bibitem{[Das98]}
P. Kumar Das.
Eigenvectors of backward shift on a deformed Hilbert space
\textit{International Journal of Theoretical Physics} 37 (9) (1998) 2363--2364


\bibitem{[Das02b]}
P. Kumar Das.
Coherent states and squeezed states in interacting Fock space
\textit{International Journal of Theoretical Physics} 41 (06) (2002) 1099--1106
MR.  ( 2003e: 81091  (2003)
	
	\bibitem{NiHamHmi12}
	N. Demni, T. Hamdi, and T. Hmidi. “Spectral Distribution of the Free Jacobi Process.” Indiana University Mathematics Journal 61, no. 3 (2012): 1351–68.

	\bibitem{KV} V. V. Kravchenko and V. A. Vicente-Benítez. "Closed form solution and transmutation operators for Schr\" odinger equations with finitely many {$\delta$}-interactions." arXiv preprint arXiv:2302.13218 (2023).

	\bibitem{[MaHa03]}
\textsc{J. C. Mason and D. C. Handscomb}.
	Chebyshev Polynomials.
	\textit{CRC Press} (2003).

	\bibitem{PS09} T. K. Pogány and E. Süli. “Integral Representation for Neumann Series of Bessel Functions.” Proceedings of the American Mathematical Society 137, no. 7 (2009): 2363–68. 

	\bibitem{[Tricomi57]}
\textsc{F. G. Tricomi}.
	Integral Equations.
	\textit{Interscience Publisher Inc} (1957).

	\bibitem{Watson} G.N. Watson. A Treatise on the Theory of Bessel Function, Cambridge University Press, Cambridge, 1922.

\end{thebibliography}
\end{document}